\newtheorem{theorem}{Theorem}
\newtheorem{lemma}[theorem]{Lemma}
\newtheorem{claim}[theorem]{Claim}
\newtheorem{proposition}[theorem]{Proposition}
\theoremstyle{definition}
     \newtheorem{definition}{Definition}
     \newtheorem{remark}{Remark}
\theoremstyle{remark}
\newcommand{\plus}[1]{{[#1]}^{\scalebox{0.6}{\!+}}}
\newcommand{\sender}{\mathsf{S}}
\newcommand{\user}{\mathsf{U}}
\newcommand{\msg}{\vec{w}}
\newcommand{\sinfo}{\vec{\phi}}
\newcommand{\bern}{\mathrm{Bernoulli}\left(\frac{1}{2}\right)}
\DeclareMathOperator{\E}{E}
\DeclareMathOperator*{\maximize}{maximize\ }
\DeclareMathOperator*{\subjectto}{subject\ to\ }
\begin{document}

\title{Blind Index Coding}
\author{
   \IEEEauthorblockN{
     David T.H. Kao\IEEEauthorrefmark{1},
     Mohammad Ali Maddah-Ali\IEEEauthorrefmark{2}, and
     A. Salman Avestimehr\IEEEauthorrefmark{3}}\\
   \IEEEauthorblockA{
     \IEEEauthorrefmark{1}University of Southern California, Los Angeles, CA, USA \qquad\quad \IEEEauthorrefmark{2}Bell Labs, Holmdel, NJ, USA}
\thanks{The research of A.S. Avestimehr and D.T.H. Kao was supported by NSF Grants CAREER 1408639, CCF-1408755, NETS-1419632, EARS-1411244, ONR award N000141310094, and research grants from Intel and Verizon via the 5G project, and was completed while D.T.H. Kao was part of the Universtiy of Southern California. D.T.H. Kao is now a part of Google Inc.
Parts of this work were presented in \cite{KMA:isit2015} and \cite{KMA:icc2015}.}
}

\maketitle

\begin{abstract}
We introduce the \emph{blind index coding} (BIC) problem, in which a single sender communicates distinct messages to multiple users over a shared channel. Each user has partial knowledge of each message as side information. However, unlike classic index coding, in BIC, the sender is uncertain of what side information is available to each user. In particular, the sender only knows the amount of bits in each user's side information but not its content. This problem can arise naturally in caching and wireless networks. In order to blindly exploit side information in the BIC problem, we develop a hybrid coding scheme that XORs uncoded bits of a subset of messages with random combinations of bits from other messages. This scheme allows us to strike the right balance between maximizing the transmission rate to each user and minimizing the interference leakage to others. We also develop a general outer bound, which relies on a strong data processing inequality to effectively capture the sender’s uncertainty about the users' side information. Additionally, we consider the case where communication takes place over a shared \emph{wireless} medium, modeled by an erasure broadcast channel, and show that surprisingly, combining repetition coding with hybrid coding improves the achievable rate region and outperforms alternative strategies of coping with channel erasure and while blindly exploiting side information. 
\end{abstract}

\section{Introduction}

In many communication scenarios, users have access to some side information about the messages that are requested by other users. For example, this scenario can arise in caching networks in which caches opportunistically store content that may be requested in the future. It can also arise in wireless networks in which nodes can overhear the signals intended for other nodes over the shared wireless medium~\cite{KMA2014:isit}. However, since there are many possibilities for what each cache can store at a particular time (or for what signals each node can overhear in a wireless network), tracking the exact content of side information at the users can be very challenging. Therefore, it is more suitable to require the server only track the ``amount'' of side information at each user, and not its exact ``content''. Consequently, a natural question is: how can a sender take advantage of knowledge of only the amount of side information to efficiently deliver messages to users?

To understand this problem, and evaluate and isolate the ultimate gain of such side information, we introduce a basic network communication problem with one sender and several users, depicted in Figure~\ref{fig:model}.
The sender communicates a distinct message, $\vec{w}_i$, to each of $K$ users (labeled $i=1,\ldots,K$) over a broadcast communication channel, while each user, $i$, has some prior side information ($\sinfo_{ij}$) about other users' desired messages ($\vec{w}_j$ where $j\neq i$) that it may use to assist in decoding its own desired message. However, the sender does not know the precise side information given to each user (i.e., the sender is \emph{blind}), and it must employ a transmission strategy that only uses knowledge of the probability distributions of $\sinfo_{ij}$, for all $i\neq j$.

We refer to this new formulation as the \emph{blind index coding} (BIC) problem.
Our formulation is a generalization of the classic index coding problem~\cite{BK2006,ByBJK2011,ALSWH2008:focs}, which is a canonical problem in network communication theory and, despite its simple formulation, remains a powerful tool for analyzing many network communication settings (see e.g.,~\cite{NA2015,Jafar2014,MCJ2014,MaN2014,ErSG2010}). The key difference in BIC problems lies in the sender's uncertainty regarding side information: In classic index coding, precise knowledge of side information is used by the sender to create transmission strategies that treat message bits differently depending on whether they are within or not within side information at each particular user~\cite{ABKSW2013:isit,Ong2014:isit}. However, in BIC the sender is unable to distinguish between such message bits, and thus transmission must ``blindly'' exploit knowledge of the only the amount of side information. As we will see, this minor difference significantly changes the technical challenges of the problem.

The main question that we investigate in this paper is \emph{``To what extent and using what techniques can we blindly exploit such side information?''} To that end, after formally introducing the BIC problem, our first contribution is the development of 
a class of \emph{hybrid coding} schemes, which XOR random linear combinations of bits from one subset of messages with uncoded bits from a disjoint subset of messages. In these hybrid codes, the sender XORs uncoded bits in order to probabilistically exploit side information already available at users. We first provide an example to show that this approach can outperform random coding and in fact sometimes achieve capacity. We then construct an general achievable scheme for three users based on this approach and determine the achievable symmetric rate. 

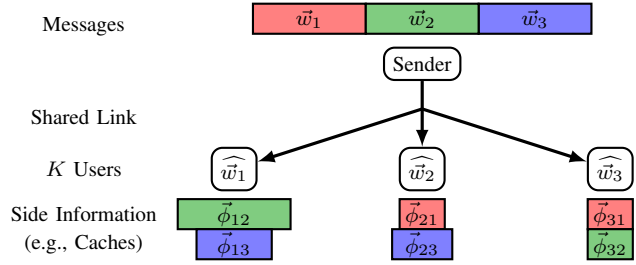
\begin{figure}\centering
\begin{tikzpicture}[font=\footnotesize]
\node (M) at (-4.5,9.5) []{Messages};
\node (C) at (-4.5,8.3) []{Shared Link};
\node (U) at (-4.5,7.6) []{$K$ Users};
\node (SI) at (-4.5,7) []{Side Information};
\node (SI) at (-4.5,6.6) []{(e.g., Caches)};

\node (S) at (0,9) [thick, rounded corners,draw]{Sender};

\filldraw[thick,fill=blue!50!white] (0.75,9.4) rectangle (2.25,9.8);
\filldraw[thick,fill=red!50!white] (-2.25,9.4) rectangle (-0.75,9.8);
\filldraw[thick,fill=green!60!black!50!white] (-0.75,9.4) rectangle (0.75,9.8);

\draw (-1.5,9.6) node {$\vec{w}_1$};
\draw (1.5,9.6) node {$\vec{w}_3$};
\draw (0,9.6) node {$\vec{w}_2$};

\node (U1) at (-2.5,7.6) [thick, rounded corners,draw]{$\widehat{\vec{w}_1}$};
\node (U2) at (0,7.6) [thick, rounded corners,draw]{$\widehat{\vec{w}_2}$};
\node (U3) at (2.5,7.6) [thick, rounded corners,draw]{$\widehat{\vec{w}_3}$};

\draw[thick,fill=green!60!black!50!white] (-3.25,6.8) rectangle (-1.75,7.2);
\draw[thick,fill=blue!50!white] (-3,6.4) rectangle (-2,6.8);
\draw (-2.5,7) node {$\sinfo_{12}$};
\draw (-2.5,6.6) node {$\sinfo_{13}$};

\draw[thick,fill=red!50!white] (-0.3,6.8) rectangle (0.3,7.2);
\draw[thick,fill=blue!50!white] (-0.4,6.4) rectangle (0.4,6.8);
\draw (0,7) node {$\sinfo_{21}$};
\draw (0,6.6) node {$\sinfo_{23}$};

\draw[thick,fill=red!50!white] (2.2,6.8) rectangle (2.8,7.2);
\draw[thick,fill=green!60!black!50!white] (2.2,6.4) rectangle (2.8,6.8);
\draw (2.5,7) node {$\sinfo_{31}$};
\draw (2.5,6.6) node {$\sinfo_{32}$};

\draw[very thick,-latex] (S) -- (U2);
\draw[very thick,-latex] (0,8.4) -- (U1);
\draw[very thick,-latex] (0,8.4) -- (U3);
\end{tikzpicture}\vspace{-2ex}
\caption{A $K$-user blind index coding problem (e.g., $K=3$) depicted as a sender-user network with user caches. User~$i$, for $i=1,2,3$, desires message $\vec{w}_i$ and may use side information about other messages to facilitate decoding; $\sinfo_{ij}$ denotes the side information that User~$i$ has about Message $\vec{w}_j$. The sender only has knowledge of the distribution of $\sinfo_{ij}$, and not its precise realization. Notice the \emph{amount} of side information available may vary across users and messages.\vspace{-5ex}}\label{fig:model}\end{figure}

In order to evaluate the efficacy of our scheme, as well as to gain further intuition beyond three users, our second contribution is the development of a new outer bound on the capacity region. An essential aspect of our outer bound is its utilization of a strong data processing inequality~\cite{AGKN2014:isit} which captures the inability of the sender to distinguish between bits of a message known or unknown to a given user prior to transmission. We demonstrate that our converse is tight in two special cases: namely, the two-user and symmetric $K$-user BIC (where all users have the same amount of knowledge about undesired messages). In both cases a simple achievable scheme based on random coding suffices to achieve the entire capacity region. As we move beyond these special cases to the general BIC setting, we confirm that, at least for some problem settings, our three-user hybrid coding scheme can meet the symmetric capacity upper bound. Finally, we numerically evaluate our new hybrid coding scheme and outer bounds relative to existing achievable schemes.

In our final contribution, we further consider the BIC problem in a wireless setting, specifically studying how lossy sender-to-user links can affect schemes to blindly exploit side information and the resulting achievable rates. Interestingly, we demonstrate that in addition to hybrid coding (where XORing uncoded bits of a subset of messages with random combinations of the others played a key role), quite surprisingly, XORing the same uncoded bits more than once (i.e., \emph{repetition of uncoded bits}) can increase the achievable rate.  
Equipped with this observation, we then proceed to construct a coding scheme that leverages both hybrid codes and repetition of uncoded message bits in order to establish an achievable rate region, and we demonstrate numerically that such a scheme offers a strict improvement in achievable rate over conventional approaches.

To summarize, the main contributions are as follows:
\begin{enumerate}
    \item We introduce the \emph{Blind Index Coding} problem, which generalizes classic Index Coding by considering uncertainty (blindness) at the sender about side information given to users.
    \item We propose a class of \emph{hybrid coding} schemes, which XOR random linear combinations from one subset of messages with uncoded bits of another subset.
    \item We derive a novel outer bound on the capacity region of BIC problems which leverages a strong data processing inequality to account for the blindness of the sender.
    \item We further generalize the problem to better model wireless settings by studying how lossy sender-to-user links affect the efficacy of the hybrid coding schemes, and we find that repetition coding can enhance the performance of hybrid codes.
\end{enumerate}

This remainder of the paper is organized in the following way.
In Section~\ref{sec:problem} we formally state the BIC problem first for error-free broadcast and then for broadcast over lossy channels. 
In Section~\ref{sec:example} we motivate both the ideas behind hybrid coding and our outer bound using a simple example, for which the inner and outer bounds meet. 
In Section~\ref{sec:achieve}, we define a hybrid coding scheme and study the achievable symmetric rate for the three-user BIC, in Section~\ref{sec:converse}, we state and prove the general outer bound for BIC problems, and in Section~\ref{sec:num} we numerically compare achieved rates to the derived outer bounds.
In Section~\ref{sec:BICW} we consider blind index coding when the sender-to-user links occur over wireless channels. 
Concluding remarks and open questions are presented in Section~\ref{sec:concl}.

\section{The Blind Index Coding Problem} \label{sec:problem}

In this section, we formally define the Blind Index Coding problem by stating the network and side information models, and formalizing the notion of capacity.

\subsubsection*{Network model} 
In a BIC problem, as shown in Figure~\ref{fig:model}, $K$ users each request a message from a sender; i.e., User~$i$, for $i=1,\ldots,K$, desires the $m_i$-bit message $\msg_i$, which is drawn uniformly from a space $\{0,1\}^{m_i}$. Each user, $i$, has access to side information, $\sinfo_{ij}$, (whose form is described later) about each message $\msg_j$ except the one it desires (i.e., for all $j\neq i$). The sender aims to communicate all messages to the respective users via a common error-free channel. The goal of the problem is to design a scheme that maps messages to a channel input vector, $\vec{x}$, of minimum length, such that each user can decode its desired message.


\subsubsection*{Side information model} 
In a blind index coding problem, each side information signal, $\sinfo_{ij}$, is a random fraction of the bits that make up the message, $\msg_j$. We assume that the sender is ``blind'' in the sense that it is only aware of the \emph{average number of bits} in each side information signal.

More specifically, we can model the side information in the following way. Let $\vec{g}_{ij}$ be a length-$m_j$ binary vector drawn i.i.d from a Bernoulli$(1-\mu_{ij})$ distribution. Side information $\sinfo_{ij} = (\phi_{ij}[1],\phi_{ij}[2],\ldots,\phi_{ij}[m_{j}])$ is such that, for $\ell=1,\ldots,m_j$,
\begin{align}
    \phi_{ij}[\ell] = g_{ij}[\ell]{w}_j[\ell].
\end{align}
User~$i$ knows $\vec{g}_{ij}$ for all $j\neq i$, however the sender is only aware of parameters, $\{\mu_{ij}\}$, which govern the probabilistic behavior of the side information. Note that the side information model is equivalent to either 1) randomly sampling bits of a message, or 2) passing a message through a side information channel which is an erasure channel.

\begin{remark}
The key difference between the BIC problem formulation and a classic index coding problem of~\cite{BK2006,ALSWH2008:focs} lies in the uncertainty in message bits given as side information. Notably, if we consider the scenario  where $\mu_{ij}\in\{0,1\}$ for all $i,j$, then side information availability is deterministic and known to the sender, and our formulation is identical to~\cite{ALSWH2008:focs}. Thus, BIC generalizes classic index coding.
\end{remark}
\begin{remark}
Index coding problems with transformed and random side information were considered in~\cite{LDH2015} and~\cite{HL2012:isit} respectively, but in both cases it was assumed that the side information is known to the sender. Another related problem is described in~\cite{BF2013:isit} where pliable users are considered: Users express no specificity in messages demanded and thus which messages to send are uncertain. Interestingly, in the cases of pliable index coding with known solutions, either canonical random coding or uncoded transmission strategies were sufficient.
\end{remark}

\subsubsection*{Capacity Region}
We now consider a BIC problem with $K$ users and side information parameters $\{\mu_{ij}\}$ as defined above. For this problem, a $(r_1,r_2,\ldots,r_K)$ scheme with block length $n$ consists of an encoding function and $K$ decoding functions. 

The encoding function,
\mbox{$f_{\mathsf{enc}}^{(n)}:\prod_{j=1}^K\{0,1\}^{m_j} \rightarrow \{0,1\}^n$},
uses the knowledge of $\{\mu_{ij}\}$ for all $j\neq i$ to map each of $K$ messages (with message $\msg_j$ consisting of $m_j$ bits such that $\lim_{n\rightarrow\infty}\frac{m_{j}}{n} = r_j$) onto a length-$n$ binary vector, $\vec{x}$, that is broadcast to all $K$ users using $n$ channel uses. We reemphasize that the encoding function relies only on the side information parameters, $\{\mu_{ij}\}$, and not the side information signals, $\{\sinfo_{ij}\}$.

The decoding function applied by User~$i$, 
\mbox{$f_{\mathsf{dec},i}^{(n)}: \{0,1\}^{n}\times\prod_{j\neq i}\{0,1\}^{m_j}\times\{0,1\}^{m_j} \rightarrow \{0,1\}^{m_i}$},
maps the broadcast signal, $\vec{x}$, as well as $K-1$ side information vector pairs, $(\sinfo_{ij},\vec{g}_{ij})$ for all $j\neq i$, to an estimate of its desired message, $\widehat{\msg}_i$.

We say that a rate tuple $(r_1,\ldots,r_K)$ is \emph{achievable} if there exists a sequence of $(r_1,\ldots,r_K)$ coding schemes with increasing block length, $n$, such that for every $i\in\{1,\ldots,K\}$
\begin{align}
\lim_{n\rightarrow\infty} \Pr\left[\widehat{\msg}_i\neq \msg_i\right] = 0.
\end{align}
The capacity region is defined as the closure of the set of all rate tuples $(r_1,\ldots,r_K)$ that are achievable. 

The goal of this paper is to study the capacity region of the BIC problem. As we show later in Proposition~\ref{prop:2u}, the capacity region of a 2-user BIC problem is easy to characterize. Thus, in order to gain a better intuition on BIC problems beyond 2 users, we focus in particular on the 3-user BIC problem.

\section{Motivating Example} \label{sec:example}

In this section we motivate both the proposed coding schemes and outer bound using a simple, concrete example.
Consider a BIC with three users (i.e., $K=3$) and where Users~2 and 3 have \emph{full side information} about other users' messages, while User~1 only knows a third of each of $\msg_2$ and $\msg_3$ (i.e., $\mu_{12}=\mu_{13}=\frac{2}{3}$ and $\mu_{21}=\mu_{23}=\mu_{31}=\mu_{32}=0$). We focus on this specific BIC problem because in this scenario the sender is blind \emph{only about side information at User~1} and therefore we can focus on the impact of blindness regarding just one user.

For this particular BIC problem, we will determine the symmetric capacity (i.e., the maximum rate $r$ such that $r_1=r_2=r_3=r$ is achievable) by assuming the lengths of all messages are the same (i.e., $m_1=m_2=m_3=m$ where $m$ is large), proposing a scheme, and introducing a method to bound the capacity region.

For the sake of comparison, we will first establish a baseline achievable symmetric rate by considering random coding, an often-used approach to coding in the presence of uncertain side information. For example, one natural scheme would be to send random linear combinations (RLC) of all message bits (i.e., parity bits to supplement side information) over the shared channel, until each user has a sufficient number of linearly independent equations (including side information) to decode all of the messages. 
For this example, by sending $m(1+\mu_{12}+\mu_{13}) +o(m) = \frac{7m}{3}+o(m)$ random parities, each user has at least $3m+o(m)$ equations for $3m$ unknowns, meaning that with high probability each user can linearly decode all three messages: conventional random coding achieves $r_{sym}=\frac{3}{7}$.\footnote{In the subsequent explanation, we omit the $o(m)$ to simplify the exposition.}

Notice first that we can achieve better by first \emph{grouping} 2 and 3 and sending each bit of $\msg_2$ XORed with a distinct bit of $\msg_3$ (this requires exactly $m$ transmissions). From these transmissions, Users~2 and 3 can use side information to remove the other message and decode their desired message. Then, by sending $\msg_1$ orthogonally in time (requiring another $m$ transmissions) User~1 receives its desired message. In other words by treating subsets of messages differently, we achieved $r_{sym}=\frac{1}{2}$. 

We now demonstrate how to further improve the transmission strategy by constructing a ``hybrid coding scheme'' using a combination of uncoded bits and randomly coded parities to go beyond the rate of $\frac{1}{2}$. In these hybrid schemes, during each phase of transmission a subset of messages are randomly coded, and then these are XORed with uncoded bits from another disjoint subset of the messages. 
For this example, we only require two such phases.
In the first phase, each channel input is generated by XORing a random combination of $\msg_1$ bits, a single uncoded $\msg_2$ bit, and , a single uncoded $\msg_3$ bit. Each uncoded bit from both $\msg_2$ and $\msg_3$ are used only once to generate an input, and thus the first phase consists of exactly $m$ channel inputs generated in this manner. Formally, for each $\ell=1,\ldots,m$, the sender broadcasts ${\vec{c}[\ell]}^\top\msg_1 \oplus w_2[\ell] \oplus w_3[\ell]$, where $\vec{c}[\ell]$ is a length-$m$ i.i.d. random binary vector. 
In the second phase, we send $\frac{8m}{9}$ RLCs of only $\msg_1$ bits. 

Notice that with this scheme, if each user decodes its desired message with error probability vanishing as $m$ grows large, we achieve rate of $r_{sym}=\frac{9}{17}$ which is higher than the $\frac{3}{7}$ achieved through conventional random coding and $\frac{1}{2}$ achieved through grouped random coding. We now explain why with this scheme such a rate is achievable by explaining how each user decodes its desired message:
\begin{description}[]
\item[User~1:]\ \ 
    Notice that during the first phase, for each channel input $\ell\in\{1,\ldots,m\}$, there is a probability of $(1-\mu_{12})(1-\mu_{13}) = \frac{1}{9}$ that User 1 knew both $w_2[\ell]$ and $w_3[\ell]$. In such an event, User~1 can cancel $w_2[\ell]\oplus w_3[\ell]$ and received a ``clean'' RLC of $\msg_1$ bits. Therefore, during the first phase User~1 receives (approximately) $\frac{m}{9}$ such RLCs. In the second phase we supplemented this with an additional $\frac{8m}{9}$ RLCs of only $\msg_1$. When combined, at the end of transmission User~1 will be able to identify in total $m$ linearly independent equations describing the $m$ desired bits of $\msg_1$.
\item[User~2:]\ \ 
 User~2 already knows all of $\msg_1$ and $\msg_3$ and therefore can remove their contributions from each channel input of the first phase. Thus, after canceling the undesired message contributions, User~2 receives exactly the $m$ bits of $\msg_2$.
\item[User~3:]\ \ 
User~3 already knows all of $\msg_1$ and $\msg_2$ and therefore can remove their contributions from each channel input of the first phase. Thus, after canceling the undesired message contributions, User~3 receives exactly the $m$ bits of $\msg_3$.
\end{description}

The key intuition on why we XOR uncoded bits of some messages with RLCs of others is as follows.
Assume our objective is to create an input signal such that User~1 can use side information to cancel ``interference'' from undesired messages,\footnote{We focus on User~1's ability to cancel contributions of other messages, since by assumption User~2 and 3 have full knowledge of undesired messages and can cancel any such interference perfectly.} $\msg_2$ and $\msg_3$. As $m$ grows large the probability that User~1 can cancel a random combination of $\msg_2$ or $\msg_3$ vanishes, and thus RLCs of $\msg_2$ and $\msg_3$ almost surely add to interference that cannot be canceled. However, by XORing uncoded $\msg_2$ and $\msg_3$ bits, the probability that User~1 can exploit side information to cancel interference remains constant regardless of $m$. 
It is worth noting that while Users~2 and 3 eventually know all three messages (through side information and decoding their desired messages), User~1 ends up knowing only parts of messages $\msg_2$ and $\msg_3$. 

\begin{remark}
To obtain further intuition, we can also interpret the proposed scheme as a form of interference alignment. 
Let $\msg_i^+$ and $\msg_i^-$ for $i=2,3$ denote subvectors of $\msg_i$ known and unknown respectively to User~1 via side information. 
Note that $\msg_i^-$ may be thought of as the part of $\msg_i$ that interferes with User~1 getting $\msg_1$, and that it cannot be canceled. Additionally, notice that the lengths of $\msg_i^+$ and $\msg_i^-$ are approximately $\frac{m}{3}$ and $\frac{2m}{3}$, respectively. 

First, consider what strategy the sender could use if it was not blind and could identify these subvectors. It could first send RLCs of $\msg_2^-$ and $\msg_3^-$ bits knowing that all such content cannot be cancelled by User~1. Then it could send RLCs of $\msg_1$, $\msg_2^+$, and $\msg_3^+$ bits, knowing that User~1 can cancel the $\msg_2$ and $\msg_3$ contribution for \emph{every} such input. Specifically, the non-blind sender \emph{aligns} the bits User~1 can cancel ($\msg_2^+$ and $\msg_3^+$), as well as the bits it cannot cancel ($\msg_2^-$ and $\msg_3^-$).  Via an equation counting argument, it is easy to verify that such a scheme achieves a higher rate of $\frac{3}{5}$

When the sender is blind, it is unable to distinguish $\msg_2^+$ from $\msg_2^-$ and $\msg_3^+$ from $\msg_3^-$. However, we would still like to efficiently send both $\msg_2$ and $\msg_3$ simultaneously, and therefore our scheme achieves such alignment \emph{probabilistically} by using uncoded bits from $\msg_2$ ($\msg_3$) to preserve the separation between $\msg_2^-$ and $\msg_2^+$ ($\msg_3^-$ and $\msg_3^+$). Figure~\ref{fig:XORexample2} highlights the two desired interference alignment cases, as well as the transmissions where alignment fails due to the sender being blind. \label{rem:alignment}
\end{remark}

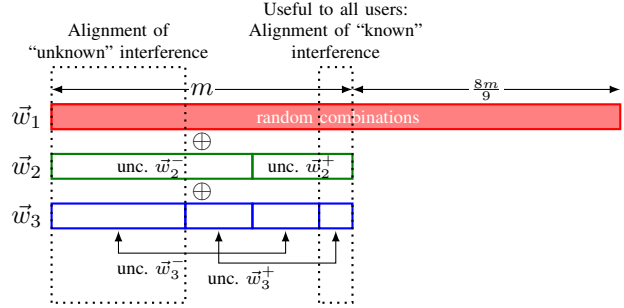
\begin{figure}\centering
\begin{tikzpicture}[yscale=0.66]
\draw [thick,blue] (0,0) rectangle (1.778,0.5);
\draw [thick,blue] (1.778,0) rectangle (2.67,0.5);
\draw [thick,blue] (2.67,0) rectangle (3.556,0.5);
\draw [thick,blue] (3.556,0) rectangle (4,0.5);
\draw [thick,green!50!black] (0,1) rectangle (2.67,1.5);
\draw [thick,green!50!black] (2.67,1) rectangle (4,1.5);
\draw [thick,red] (0,2) rectangle (4,2.5);
\draw (0,0.25) node[left] {$\msg_3$};
\draw (0,1.25) node[left] {$\msg_2$};
\draw (0,2.25) node[left] {$\msg_1$};
\draw (2,0.75) node {$\oplus$};
\draw (2,1.75) node {$\oplus$};
\draw[latex-latex] (0.889,-0.05) |- (0.889,-0.5) -- node[below,pos=0.2,inner sep=1pt] {\scriptsize unc. $\msg_3^-$}  (3.111,-0.5) -| (3.111,-0.05);
\draw[latex-latex] (2.222,-0.05) |- (2.222,-0.7) -- node[below,pos=0.2,inner sep=1pt] {\scriptsize unc. $\msg_3^+$}  (3.778,-0.7) -| (3.778,-0.05);
\draw (1.33,1.25) node {\scriptsize unc. $\msg_2^-$};
\draw (3.33,1.25) node {\scriptsize unc. $\msg_2^+$};
\filldraw [draw=red,fill=red!50,thick] (0,2) rectangle (7.56,2.5);
\draw (3.81,2.25) node[white] {\scriptsize random combinations};

\draw[latex-latex] (0,2.8) --node[fill=white,inner sep=0.5pt]{$m$} (4,2.8);
\draw[latex-latex] (4,2.8) --node[fill=white,inner sep=0.5pt]{\tiny $\frac{8m}{9}$} (7.56,2.8);
\draw [ thick,dotted] (3.556,-1.5) rectangle (4,3.25);
\draw [ thick,dotted] (0,-1.5) rectangle (1.778,3.25);
\draw (3.778,3.25) node[above]{\scriptsize \parbox[c]{10em}{\centering Useful to all users: Alignment of ``known'' interference}};
\draw (0.889,3.25) node[above]{\scriptsize \parbox[c]{10em}{\centering Alignment of ``unknown'' interference}};
\end{tikzpicture}\vspace{-2ex}
\caption{Illustration of the symmetric-capacity-achieving scheme of the example. The horizontal axis provides scale representation of the number of channel uses dedicated to each phase. Transmission type is illustrated using outlined (uncoded) or shaded (randomly coded) blocks. Notice that, because the sender is blind, parts of messages $\msg_2$ and $\msg_3$ that are known to User~1 cannot be explicitly \emph{aligned} as discussed in Remark~\ref{rem:alignment} and thus some parts of $\msg_3^-$ is XORed with $\msg_2^+$ as well as some of $\msg_2^-$ is XORed with $\msg_3^+$. These are displayed as contiguous blocks in the figure for clarity, but in reality would be interleaved throughout the first $m$ channel uses. \vspace{-5ex}}\label{fig:XORexample2}
\end{figure}

\begin{remark}
Our scheme's probabilisitic alignment is obviously less effective than the explicit alignment that is possible when the sender knows the side information. This loss of effectiveness is particularly well captured by the amount of additional interference incurred by our scheme, and thus any attempt at a converse must capture the amount of interference incurred \emph{as a result of blindness}. 

Notice that in both schemes, because we must send all of $\msg_2$ to User~2, we incur \emph{at least} $\frac{2m}{3}$ bits of interference from $\msg_2^-$. In the case of the non-blind sender, one can show that this is all the interference that is incurred at User~1, because the sender can fully align the interference caused from the messages $\msg_2^-$ and $\msg_3^-$ at User~1. 
On the other hand, because the sender is blind, our scheme incurs an additional $\frac{1}{3}\times\frac{2}{3}\times m$ bits of interference. In Section~\ref{sec:converse}, we present a general outer bound that shows that this additional interference is indeed unavoidable and hence the scheme presented above is indeed the best possible in this example. More specifically, we will prove a generalization of the following inequality:
\begin{align}
H(\vec{\mathbf{x}}|\msg_1,\msg_2^+,\msg_3^+) \geq{}& \frac{2}{3}H(\vec{\mathbf{x}}|\msg_1,\msg_3^+) + \frac{1}{3}H(\vec{\mathbf{x}} |\msg_1,\msg_2,\msg_3^+).
\label{eq:motexconv}
\end{align}

The above inequality lower bounds the interference at User~1 (i.e., the unknown parts of $\msg_2$ and $\msg_3$) with a convex combination of terms that either represent providing none of $\msg_2$ as side information ($H(\vec{\mathbf{x}}|\msg_1,\msg_3^+)$ or
all of $\msg_2$ as side information ($H(\vec{\mathbf{x}} |\msg_1\msg_2,\msg_3^+)$). The coefficient weights that describe the combination are a function of the side information parameter $\mu = \frac{2}{3}$. A more general form of this inequality is the key lemma used to construct the outer bound.\label{rem:exconv}
\end{remark}

Before concluding the section, we point out that this inequality is valid \emph{only when the sender is blind}. 
Indeed, if we consider the non-blind sender like in Remark~\ref{rem:alignment} the correct inequality would be 
\begin{align}
H(\vec{\mathbf{x}}|\msg_1,\msg_2^+,\msg_3^+) \geq{}& \frac{2}{3}H(\vec{\mathbf{x}}|\msg_1,\msg_3^+),\label{eq:motexconv2}
\end{align}
which is clearly looser than (\ref{eq:motexconv}). Note that the additional term that appears in (\ref{eq:motexconv}) but does not appear in (\ref{eq:motexconv2}) captures additional interference due to blindness of the server.

\section{Achievability}\label{sec:achieve}

In this section, we study achievable rates in the BIC problem. As alluded to in the motivating example, one possible approach to dealing with blind side information at users is random coding. For example, in a standard random linear code applied to binary message sequences, each channel input is created by XORing random linear combinations of \emph{all} bits from \emph{all} messages. In the rest of the paper, we refer to this approach as conventional random coding.
Using conventional random coding requires that all users decode all messages, thus rate tuples are achievable if and only if they satisfy, for every $i\in\{1,\ldots,K\}$,
\begin{align}
    r_i + \sum_{j\neq i} \mu_{ij}r_j \leq 1.\label{eq:randomnetworkrate}
\end{align}

In some cases, this suffices to achieve the full capacity region. For example, we will see that the rate region achievable by conventional random coding in the following two scenarios exactly matches the outer bounds derived in the next section:\footnote{Theorem~\ref{thm:KuOB} states the outer bound while the capacity regions for the two scenarios are formally stated as Propositions~\ref{prop:2u} and~\ref{prop:KuSym}, respectively.}
\begin{itemize}
    \item 2-user BICs, for any value of $\mu_{12}$ and $\mu_{21}$. 
    \item Symmetric $K$-user BICs, where $\mu_{ij}=\mu$ for all $i\neq j$. 
\end{itemize}
However, as demonstrated in the previous example, conventional random coding is not optimal in general, and in the rest of this section we propose a new hybrid encoding strategy that XORs random combinations of all bits from \emph{some} messages with \emph{uncoded bits} from others. This hybrid between random coding and uncoded transmission is the key mechanism to blindly exploit side information. For simplicity, we focus on symmetric rates achievable in an arbitrary 3-user BIC. We will first state the achievable symmetric rate as a theorem, then describe the encoding and decoding strategies before finally proving that the symmetric rate claimed in the theorem is indeed achievable.

\subsection{3-user BIC Hybrid Coding}

We now state the 3-user symmetric rate achievable using hybrid coding. We then define a hybrid encoding scheme for 3-user BIC problems, using key points from the motivating example.
\begin{theorem}\label{thm:3uACH}
Consider a 3-user BIC problem, defined by parameters $\{\mu_{ij}\}$, where WLOG\footnote{For any three users, such a condition must hold for at least one permutation of indices.} user indices are such that
\begin{align}
\mu_{32}\leq{}&\mu_{23}\leq \max\{\mu_{1i},\mu_{i1}\},\label{eq:choosing}
\end{align}
for either $i\in\{2,3\}$. Any $r_{sym}$ satisfying the following is achievable:
\begin{align}
r_{sym}^{}\leq{}&\min\left\{\frac{1}{1+\mu_{21}+\mu_{23}},\frac{1}{1+\mu_{31}+\mu_{32}}\right\},\label{eq:3uACH1}\\
r_{sym}^{}\leq{}&\max\bigg\{\frac{1}{1+\mu_{23}+\mu_{12} +\mu_{13}(1-\mu_{23}+\mu_{32})(1-\mu_{12})},
    \frac{1}{1+\mu_{12} +\mu_{13}}\bigg\}.\label{eq:3uACH2}
\end{align}
\end{theorem}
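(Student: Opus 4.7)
The plan is to exhibit two achievable schemes whose User-1 rates match the two arguments of the $\max$ in (\ref{eq:3uACH2}); the achievable rate at any $\{\mu_{ij}\}$ is the better of the two, and both schemes automatically satisfy (\ref{eq:3uACH1}). The easier scheme is conventional random linear coding over $\GF$: broadcast $n$ i.i.d.\ uniform random combinations of all bits of $\msg_1,\msg_2,\msg_3$. After substituting its side information, User~$i$ faces a random binary system with $(1+\mu_{ij}+\mu_{ik})r_{sym}n$ effective unknowns, and a standard rank argument for random $\GF$-matrices gives decodability whenever $r_{sym}(1+\mu_{ij}+\mu_{ik})\le 1$. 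Taking $i=1$ gives $\frac{1}{1+\mu_{12}+\mu_{13}}$ and taking $i=2,3$ gives both terms of (\ref{eq:3uACH1}).

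For the more delicate scheme I would generalize the hybrid construction of Section~\ref{sec:example}. Using the ordering (\ref{eq:choosing}), I would designate $\msg_2$ and $\msg_3$ as the sources of uncoded bits and $\msg_1$ as the randomly-combined message. The main building block is an XOR symbol of the form $\vec{c}_\ell^{\top}\msg_1\oplus w_2[\pi_2(\ell)]\oplus w_3[\pi_3(\ell)]$, with fresh uniform $\vec{c}_\ell\in\GF^{m_1}$ and deterministic index sequences $\pi_2,\pi_3$; these XOR symbols are interleaved with pure RLC symbols of $\msg_1$ alone (and, if the parameter regime demands it, with auxiliary symbols that use only one uncoded bit). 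The phase lengths are chosen so that every bit of $\msg_2,\msg_3$ appears in some XOR symbol, Users~2 and 3 collect enough linearly independent equations to meet (\ref{eq:3uACH1}), and User~1 accumulates $r_{sym}n$ clean $\msg_1$-equations in total. Decoding at Users~2 and 3 is then a rank argument on the combined coefficient matrix (the $\vec{c}_\ell$'s and the pure RLC phase together make it sufficiently random); decoding at User~1 is probabilistic, a clean RLC of $\msg_1$ being recovered from an XOR symbol only when User~1 can cancel both uncoded bits, an event whose probability depends on $\mu_{12},\mu_{13}$ and on how $\pi_2,\pi_3$ interact with the random side-information masks of Users~2 and 3.

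The main obstacle I anticipate is the combinatorial design of the pairings $\pi_2,\pi_3$ and the allocation across sub-phases, so that the expected count of clean User-1 symbols exactly matches the coefficient appearing in (\ref{eq:3uACH2}). The factor $(1-\mu_{23}+\mu_{32})$ in particular signals a split of the XOR phase into two sub-phases with different pairing rules, weighted by the asymmetric parameters $\mu_{23}$ and $\mu_{32}$; the ordering (\ref{eq:choosing}) is precisely what keeps this accounting feasible against the constraints $r_{sym}(1+\mu_{21}+\mu_{23})\le 1$ and $r_{sym}(1+\mu_{31}+\mu_{32})\le 1$ at Users~2 and 3. Once the expected counts are aligned, standard concentration for the clean-symbol count and the full-rank property of random $\GF$-matrices drive the decoding error to zero as $n\to\infty$, and taking the pointwise maximum of the two schemes' rates yields the region claimed.
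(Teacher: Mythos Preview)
Your two-scheme strategy and the handling of the conventional random code match the paper exactly. The hybrid scheme, however, has a concrete gap. You propose building all XOR symbols from RLCs of $\msg_1$ plus \emph{uncoded} bits of $\msg_2$ and $\msg_3$ (each used once), with perhaps some single-uncoded-bit auxiliary symbols. With only that structure, Users~2 and~3 cannot decode: after removing $\msg_1$, User~2 is left with roughly $\mu_{23}$ of the $\msg_2$ bits XORed with $\msg_3$ bits it does not know as side information, and there is nothing else in the transmission to resolve them. The paper's scheme fixes this by prepending two phases in which $\msg_2$ and $\msg_3$ are themselves \emph{randomly combined} (vectors $\vec{J}_2$ of length $N_2=nr_{sym}\mu_{23}$ and $\vec{J}_3$ of length $N_3=nr_{sym}\mu_{32}$), XORed with RLCs of $\msg_1$. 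These phases are useless to User~1 (it ignores them entirely) but give Users~2 and~3 exactly the extra linear equations they need to finish decoding $\{\msg_2,\msg_3\}$.

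This also redirects your intuition about the factor $(1-\mu_{23}+\mu_{32})$. It has nothing to do with pairing rules $\pi_2,\pi_3$; in the paper the uncoded bits are indexed sequentially with no combinatorial design at all. The factor is simply the duration (normalized by $m$) of the phase in which \emph{both} $\msg_2$ and $\msg_3$ appear uncoded, namely $m-N_2+N_3=nr_{sym}(1-\mu_{23}+\mu_{32})$; the complementary phase of duration $N_2-N_3$ carries only an uncoded $\msg_2$ bit, which User~1 cleans with probability $(1-\mu_{12})$. The split is thus dictated by the RLC lengths $N_2,N_3$, which in turn are forced by the equation count at Users~2 and~3, and the ordering $\mu_{32}\le\mu_{23}$ is what makes $N_2-N_3\ge 0$. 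Once you add the $\vec{J}_2,\vec{J}_3$ phases and set their lengths this way, the clean-symbol accounting at User~1 and the rank arguments at Users~2 and~3 go through exactly as you sketch.
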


\begin{remark}
Consider $r_{sym}$ satisfying (\ref{eq:3uACH1}) and (\ref{eq:3uACH2}). In the right hand side of (\ref{eq:3uACH2}), if the second term within the maximization is larger, then (\ref{eq:3uACH1}) and (\ref{eq:3uACH2}) simplify to
$r_{sym}\leq \min\bigg\{\frac{1}{1+\mu_{21}+\mu_{23}},\frac{1}{1+\mu_{31}+\mu_{32}},
\frac{1}{1+\mu_{12}+\mu_{13}}\bigg\}$.
In this case, from (\ref{eq:randomnetworkrate}) it is clear that conventional random coding suffices to achieve the desired rate. 
Hence, our hybrid coding scheme increases the symmetric rate whenever the first term in the max of (\ref{eq:3uACH2}) is larger. Additionally, since conventional random coding suffices when the second term is larger, to prove Theorem~\ref{thm:3uACH}, we need only to describe a scheme and prove achievability for $r_{sym}$ satisfying
\begin{align}
r_{sym}\leq& \min\bigg\{\frac{1}{1+\mu_{21}+\mu_{23}},\frac{1}{1+\mu_{31}+\mu_{32}},\frac{1}{1+\mu_{23}+\mu_{12} +\mu_{13}(1-\mu_{23}+\mu_{32})(1-\mu_{12})}\bigg\}.\label{eq:3uACH2x}
\end{align}
\end{remark}

We now define our hybrid coding scheme where, for any $r_{sym}$ satisfying (\ref{eq:3uACH2x}), the sender will communicate $m = nr_{sym}-\delta_n$ bits where ($\delta_n$ is chosen such that $\delta_n=o(n)$) to each user in $n$ channel uses, such that probability of error vanishes as $n$ goes to infinity.\footnote{The $o(n)$ term 1) accounts for the fact that $m$ must be integer, and 2) as we shall see, ensures that decoding error will vanish as $n$ grows large.}

\subsubsection*{Encoding} 

The hybrid coding scheme is characterized by three parameters, $N_1$, $N_2$, and $N_3$. 
For each $i\in\{1,2,3\}$, we generate $N_i$ random linear combinations (RLC) of the bits only in $\msg_i$, denoted by vector $\vec{J}_i$. 
The precise values of $N_1$, $N_2$, and $N_3$ are specified later, however we point out as depicted in Figure~\ref{fig:3coding}, that $N_1-m\geq N_2\geq N_3$. 

As shown in the figure, the sender combines RLCs and uncoded bits of messages in five phases. During Phase~1, each input is the XOR of one bit from each of $\vec{J}_1$, $\vec{J}_2$, and $\vec{J}_3$, where we take bits from each vector sequentially. 
Phase~1 ends and Phase~2 begins when the bits in $\vec{J}_3$ are exhausted (i.e., after $N_3$ channel uses). 
Similarly, the number of channel uses allocated to each phase of transmission are dictated by when we exhaust the bits of a certain type: 
Phase~2 inputs consist of an XOR of $\vec{J}_1$, $\vec{J}_2$, and $\msg_3$ bits, and ends when we have no more bits from $\vec{J}_2$; 
Phase~3 inputs consist of an XOR of $\vec{J}_1$, $\msg_2$, and $\msg_3$ bits, and ends when we have no more bits from $\msg_3$; 
Phase~4 inputs consist of  an XOR of $\vec{J}_1$ and $\msg_2$ bits, and ends when we have no more bits from $\msg_2$; 
and Phase~5 inputs consist of only a $\vec{J}_1$ bit.

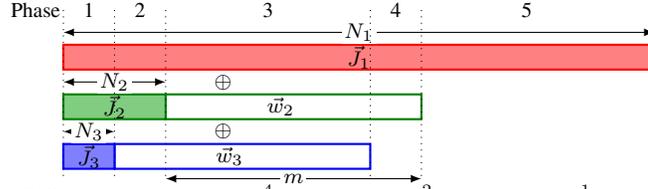
\begin{figure}
\centering
\begin{tikzpicture}[xscale=1.7,yscale=0.66,font=\footnotesize]

\draw [thick,blue] (0.4,0) rectangle (2.4,0.5);
\draw (1.3,0.25) node{\footnotesize $\msg_3$};
\filldraw [draw=blue,fill=blue!50,thick] (0,0) rectangle (0.4,0.5);
\draw[] (0.2,0.25) node[]{\footnotesize $\vec{J}_3$};

\draw [thick,green!50!black] (0.8,1) rectangle (2.8,1.5);
\draw (1.7,1.25) node{\footnotesize $\msg_2$};
\filldraw [draw=green!50!black,fill=green!60!black!50,thick] (0,1) rectangle (0.8,1.5);
\draw (0.4,1.25) node[]{\footnotesize $\vec{J}_2$};

\filldraw [draw=red,fill=red!50,thick] (0,2) rectangle (4.624,2.5);
\draw (2.312,2.25) node[]{\footnotesize $\vec{J}_1$};

\draw[latex-latex,thin] (0,2.75) --node [fill=white,inner sep=1pt]{$N_1$} (4.624,2.75);
\draw[latex-latex,thin] (0,1.75) --node [fill=white,inner sep=1pt]{$N_2$} (0.8,1.75);
\draw[latex-latex,thin] (0,0.75) --node [fill=white,inner sep=1pt]{$N_3$} (0.4,0.75);
\draw[latex-latex,thin] (0.8,-0.2) --node [fill=white,inner sep=1pt]{$m$} (2.8,-0.2);
\draw[dotted,thin] (0,3.2) -- (0,-0.2);
\draw (0.05,3.2) node[left] {Phase};
\draw (0.2,3.2) node {1};
\draw[dotted,thin] (0.4,3.2) -- (0.4,-0.2);
\draw (0.6,3.2) node {2};
\draw[dotted,thin] (0.8,3.2) -- (0.8,-0.2);
\draw (1.25,1.75) node {$\oplus$};
\draw (1.25,0.75) node {$\oplus$};
\draw (1.6,3.2) node {3};
\draw[dotted,thin] (2.4,3.2) -- (2.4,-0.2);
\draw (2.6,3.2) node {4};
\draw[dotted,thin] (2.8,3.2) -- (2.8,-0.2);
\draw (3.624,3.2) node {5};
\draw[dotted,thin] (4.624,3.2) -- (4.624,-0.2);
\end{tikzpicture}\vspace{-2.5ex}
\caption{Hybrid coding scheme for 3-user BIC, where $\mu_{12}=\mu_{13} = \frac{4}{5}$, $\mu_{21}=\mu_{23}=\frac{2}{5}$, $\mu_{31}=\mu_{32}=\frac{1}{5}$. 
Outlined boxes represent uncoded bits, shaded boxes represent RLCs of a single message.
\vspace{-4.5ex}}\label{fig:3coding}
\end{figure}

\subsubsection*{Decoding}
We now describe the decoding scheme of each user. Users~2 and 3 each decodes all 3 messages. As in conventional random coding, this requires that User~2 and 3 each receive a sufficient number of independent linear combinations of messages bits, either via side information or the shared channel. 

A key point in our coding scheme lies in how User~1 exploits the hybrid coding structure to decode $\msg_1$.
As in the example, User~1 uses side information to cancel out the combinations of known $\msg_2$ and $\msg_3$ bits from symbols received in Phases~3 and 4. It uses these ``clean'' RLC of only $\msg_1$ bits along with those RLC received during Phase~5 to linearly decode only $\msg_1$.

For the scheme to achieve $r_{sym}$ (i.e., in order for decoding error probability to vanish as $n$ grows large), we claim that choosing $N_1$, $N_2$, and $N_3$ as
\begin{align}
N_1 = n\quad\text{ and }\quad
N_2 = nr_{sym}\mu_{23}\quad\text{ and }\quad
N_3 = nr_{sym}\mu_{32}\label{eq:codelength},
\end{align}
results in a probability of decoding error that vanishes as $n\rightarrow\infty$. We prove this formally in the following subsection.

\begin{remark}
Recall from the illustrative example, we wanted to maximize the chance that User~1 can clean $\msg_2$ and $\msg_3$ content from a transmission, and we assumed that both User~2 and 3 decode all three messages. Thus the phases of transmission in Figure~\ref{fig:3coding}, have the following roles: 
Phase~5 provides RLCs about $\msg_1$ to User~1. Phase~5 also provides enough $\msg_1$ RLCs for each of User~2 and User~3 to decode $\msg_1$ (with the help of their side information). 
A fraction, $(1-\mu_{12})$, of Phase~4 is useful to User~1 after using side information to clean the $\msg_2$ component, to obtain a clean RLC of only $\msg_1$. Similarly, a smaller fraction, $(1-\mu_{12})(1-\mu_{13})$, of Phase~3 is useful to User~1 by cleaning both the $\msg_2$ and $\msg_3$ components, to obtain a clean RLC of $\msg_1$. Note that User~1 only uses clean RLCs from Phases~3-5 to decode $\msg_1$. Because Users~2 and 3 each decoded $\msg_1$ from Phase 5, each cancels out the $\msg_1$ component from Phases~1-4, and then each uses the remaining residual symbols to decode both messages $\msg_2$ and $\msg_3$. 
\end{remark}

\subsection{Proof of Achievability}\label{sec:BIC:achieve}

We now address the achievability of rate $r_{sym}$ satisfying (\ref{eq:3uACH2x}), using the hybrid network codes we just defined. Before proceeding we recall that the message size $m$ is such that $m=nr-\delta_n$, where $\delta_n$ is positive and $\delta_n=o(n)$.

To prove that the rate is achievable, we must show that the probability that any user does not decode its desired message vanishes as $n\rightarrow \infty$ (i.e., $\Pr[ \widehat{\msg}_i \neq \msg_i] \rightarrow 0$). Moreover, since our decoding strategy requires that User~2 and User~3 decode all three messages, we also show that the probability of decoding error of all messages at Users~2 and 3 vanishes as $n\rightarrow\infty$. Specifically, we have the following possible error events, each of which must approach 0 as $n\rightarrow\infty$:
\begin{description}
    \item[$\mathcal{E}_1$:] User~1 fails to decode $\msg_1$.
    \item[$\mathcal{E}_2$:] User~2 fails to decode $\{\msg_1,\msg_2,\msg_3\}$.
    \item[$\mathcal{E}_3$:] User~3 fails to decode $\{\msg_1,\msg_2,\msg_3\}$.  
\end{description}

For each event we will separate the error analysis into different sources of error, and for each source of error we will use one of two analysis techniques to prove that the probability of such an event occurring vanishes with large $n$. In order to provide clarity, and since User~1's decoding strategy was the primary difference between hybrid coding and conventional random coding, we will revisit these two techniques after first applying them in the context of analyzing the probability of event $\mathcal{E}_1$.

Recall that User~1 first uses its side information to ``clean'' transmissions from Phases~3 and 4 resulting in random linear combinations (RLCs) of only bits from $\msg_1$. It then combines clean RLCs with those received during Phase~5 (recall from Figure~\ref{fig:3coding} that Phase~5 only has $\msg_1$ content) and attempts to linearly decode $\msg_1$. Therefore, we express User~1's decoding error as the union of two events, $\mathcal{E}_1 = \mathcal{E}_{1a}\cup\mathcal{E}_{1b}$, defined as:
\begin{description}
\item[$\mathcal{E}_{1a}$: ] The total number of random linear combinations (RLCs) cleaned from Phases~3 and 4 and received in Phase~5 is less than $m+\underline{\delta}_n$, where $\underline{\delta}_n$ grows with $n$ and $0 < \underline{\delta}_n < \delta_n$.
\item[$\mathcal{E}_{1b}$: ] The random matrix that describes the transformation of $\msg_1$ to received (clean) RLCs is rank deficient.
\end{description}
We will now proceed to show
\begin{align*}
\Pr[\mathcal{E}_{1}] =\Pr[\mathcal{E}_{1a}\cup\mathcal{E}_{1b}] =\Pr[\mathcal{E}_{1a}] + \Pr[\mathcal{E}_{1a}^c\cap\mathcal{E}_{1b}] = o(n).
\end{align*}

We first address $\Pr[\mathcal{E}_{1a}]$. By the scheme's construction: 
\begin{itemize}
\item Phase~3 has duration $m-N_2+N_3$, and the probability of cleaning each transmission is $(1-\mu_{12})(1-\mu_{13})$. 
\item Phase~4 has duration $N_2-N_3$, and the probability of cleaning each transmission is $(1-\mu_{12})$.
\item Phase~5 has duration $N_1-N_2-m$, and each transmission is a clean RLC of $\msg_1$.
\end{itemize}

We may thus represent receiving a clean RLC in the $\ell$-th channel use of Phase 3 as a Bernoulli($1-\mu_{12}-\mu_{13}+\mu_{12}\mu_{13}$) random variable $\lambda_3[\ell]$ which is  i.i.d. across $\ell=1,\ldots,m-N_2+N_3$ and receiving a clean equation in the $\ell$-th channel use of Phase 4 as a Bernoulli($1-\mu_{12}$) random variable $\lambda_4[\ell]$ which is  i.i.d. across $\ell=1,\ldots,N_2-N_3$. 
We now note that the duration of Phases~3 and 4 ($D_3$ and $D_4$) are by construction,
\begin{align*}
D_3 ={}& m-N_2+N_3 ={} nr_{sym}(1-\mu_{23}+\mu_{32}) -\delta_n,\\
D_4={}& N_2-N_3 ={} nr_{sym}(\mu_{23}-\mu_{32}),
\end{align*}
and that the duration of Phase~5 may be bounded as
\begin{align}
D_5={}&N_1-N_2-m \nonumber\\
    ={}& n-nr_{sym}(1+\mu_{23}) +\delta_n \nonumber\\
    \stackrel{(a)}{\geq}{}& nr_{sym}(1+\mu_{23}+\mu_{12} +\mu_{13}(1-\mu_{23}+\mu_{32})(1-\mu_{12}))
        -nr_{sym}(1+\mu_{23})) +\delta_n \nonumber\\
    ={}& nr_{sym}(\mu_{12} +\mu_{13}(1-\mu_{23}+\mu_{32})(1-\mu_{12}))
        +\delta_n \nonumber\\
    ={}& nr_{sym}(1-(1-\mu_{23}+\mu_{32})(1-\mu_{12})(1-\mu_{13}))
        -(\mu_{23}-\mu_{32})(1-\mu_{12})
        +\delta_n \nonumber\\
    \geq{}& nr_{sym}-D_3(1-\mu_{12})(1-\mu_{13}))-D_4(1-\mu_{12}),\label{eq:achproofbnd}
\end{align}
where step (a) results directly from (\ref{eq:3uACH2x}). From this, the probability of $\mathcal{E}_{1a}$ occurring is, in the limit,
\begin{align*}
    \lim_{n\rightarrow\infty}\Pr[\mathcal{E}_{1a}] 
    ={}& \lim_{n\rightarrow\infty}
        \Pr\Bigg[\Bigg(\sum_{\ell=1}^{D_3}\lambda_3[\ell] +\sum_{\ell^\prime=1}^{D_4}\lambda_4[\ell^\prime]  + D_5\Bigg)<m+\underline{\delta}_n\Bigg]\\
    \stackrel{(b)}{\leq}{}& \lim_{n\rightarrow\infty}
        \Pr\Bigg[\Bigg(\sum_{\ell=1}^{D_3}\lambda_3[\ell] +\sum_{\ell^\prime=1}^{D_4}\lambda_4[\ell^\prime]  + nr_{sym}
        -D_3(1-\mu_{12})(1-\mu_{13})) -D_4(1-\mu_{12})\Bigg)<m+\underline{\delta}_n\Bigg]\\
    ={}& \lim_{n\rightarrow\infty}
        \Pr\Bigg[\Bigg(\sum_{\ell=1}^{D_3}\lambda_3[\ell]-\E\left[\sum_{\ell=1}^{D_3}\lambda_3[\ell]\right] 
        +\sum_{\ell^\prime=1}^{D_4}\lambda_4[\ell]-\E\left[\sum_{\ell=1}^{D_3}\lambda_3[\ell]\right] 
         +nr_{sym} 
        \Bigg)<m+\underline{\delta}_n\Bigg]\\
    ={}& \lim_{n\rightarrow\infty}
        \Pr\Bigg[\Bigg(\sum_{\ell=1}^{D_3}\lambda_3[\ell]-\E\left[\sum_{\ell=1}^{D_3}\lambda_3[\ell]\right] 
        + \sum_{\ell^\prime=1}^{D_4}\lambda_4[\ell]-\E\left[\sum_{\ell=1}^{D_3}\lambda_3[\ell]\right] + nr_{sym} 
        \Bigg)<nr_{sym}-\delta_n+\underline{\delta}_n\Bigg]\\
    \leq{}& \lim_{n\rightarrow\infty}
        \Pr\left[\left(\frac{\left|\sum_{\ell=1}^{D_3}\lambda_3[\ell]-\E\left[\sum_{\ell=1}^{D_3}\lambda_3[\ell]\right]\right|}{n} 
        + \frac{\left|\sum_{\ell^\prime=1}^{D_4}\lambda_4[\ell]-\E\left[\sum_{\ell=1}^{D_3}\lambda_3[\ell]\right]\right|}{n}        \right)>\frac{\delta_n-\underline{\delta}_n}{n}\right]\\
    \stackrel{(c)}{=}{}& 0,
\end{align*}
where in (b) we applied the bound (\ref{eq:achproofbnd}) while noting that if $a\geq b$ then $\Pr[a<c] \leq \Pr[b<c]$,
and in (c) we invoked the law of large numbers while noting that $\delta_n-\underline{\delta}_n$ is positive by construction.

Now consider the event $\mathcal{E}_{1a}^c\cap\mathcal{E}_{1b}$. This describes the case where User~1 receives enough (i.e., $m+\underline{\delta}_n$) clean equations but the randomly generated matrix that maps $\msg_1$ to clean equations has rank less than $m$. This type of error is well studied throughout the network coding literature. For instance, from expression~(3) of~\cite{MacKay2005}, the probability of a $m\times m+\underline{\delta}_n$ random binary matrix having rank less than $m$ can be bounded as
\begin{align*}
    \Pr(\mathcal{E}_{1a}^c\cap\mathcal{E}_{1b}) \leq 2^{-\underline{\delta}_n},
\end{align*}
which implies, as desired,
\begin{align*}
    \lim_{n\rightarrow\infty}\Pr(\mathcal{E}_{1a}^c\cup\mathcal{E}_{1b}) = 0.
\end{align*}

We now revisit the analysis and note that $\mathcal{E}_{1a}$ may be thought of as the event where the actual amount of randomly available side information was ``not enough'' because it deviated significantly from the mean. On the other hand $\mathcal{E}_{1a}^c\cap\mathcal{E}_{1b}$ describes the case where there was a sufficient amount of side information, but the randomly generated coding scheme failed to communicate the remaining desired message bits. For each type of error we applied a different analysis technique. To address the first, we applied a concentration inequality to show that the probability that the amount of randomly available side information deviates significantly from the mean vanishes as $n$ grows large. To address the second, we applied existing analysis on the properties of randomly generated matrices to show that the probability of a rank-deficient encoding matrix vanishes as $n$ grows large. 

To prove that the probabilities of error events $\mathcal{E}_2$ and $\mathcal{E}_3$ also vanish as $n$ grows large, we must systematically break down these error events into subevents of these two types. Since these two users apply the same decoding process, we now focus on User~2, and we identify such subevents.

Recall the User~2 first uses its side information and Phase~5 transmissions (i.e., RLCs with only $\msg_1$ content) to decode $\msg_1$. Let $\mathcal{E}_{2,1}$ be the event where User~2 fails to decode $\msg_1$ which we further breakdown into the following subevents, $\mathcal{E}_{2,1a}$ and $\mathcal{E}_{2,1b}$: 
\begin{description}
\item[$\mathcal{E}_{2,1a}$: ] User~2 does not receive enough side information, i.e., $\mathbf{1}^\top\vec{g}_{21} < \mu_{21}nr_{sym} +\underline{\delta}_n$, where $\underline{\delta}_n$ grows with $n$ and $0 < \underline{\delta}_n < \delta_n$.
\item[$\mathcal{E}_{2,1b}$: ] The random matrix that describes the transformation of $\msg_1$ to Phase~5 RLCs is rank deficient,
\end{description}
One can verify using the same methods as in the analysis of $\mathcal{E}_1$ that the probability of either $\mathcal{E}_{21a}$ or $\mathcal{E}_{21a}^c\cap\mathcal{E}_{21a}$ occurring vanishes with large $n$ as long as the rate $r_{sym}$ satisfies (\ref{eq:3uACH2x}).

Next, recall that after decoding $\msg_1$, User~2 removes $\msg_1$ content from Phases~1--4, and proceeds to decode both $\msg_2$ and $\msg_3$. Let $\mathcal{E}_{2,\{2,3\}}$ denote the event where User~2 fails to decode $\{\msg_2,\msg_3\}$, and we now study specifically $\mathcal{E}_{2,1}^c \cap \mathcal{E}_{2,\{2,3\}}$. 
Notice that by construction, after the $\msg_1$ content has been removed, User~2 will receive some uncoded bits of $\msg_2$ from Phase~4. Furthermore, User~2 can also use its side information to clean the $\msg_3$ component from some transmissions during Phase~3 to receive more (independent) uncoded bits from $\msg_2$. 
Noting this observation, we can now specify the final two error events for $\mathcal{E}_2$ analysis:
\begin{description}
\item[$\mathcal{E}_{2,1}^c \cap \mathcal{E}_{2,\{23\}a}$: ] \quad \quad \quad \quad After decoding $\msg_1$ and removing its content from Phases~1--4, the number of bits about $\msg_3$ User~2 learns from side information and the number of clean uncoded bits about $\msg_2$ User~2 learns from Phases~3 and 4 is significantly less than the mean.
\item[$\mathcal{E}_{2,1}^c \cap \mathcal{E}_{2,\{23\}b}$: ] \quad \quad \quad \quad After decoding $\msg_1$ and removing its content from Phases~1--4, the random matrix that describes the transformation of $\msg_2$ and $\msg_3$ to transmissions in Phases~1 and 2 is rank deficient.
\end{description}
Again, one can verify using the same methods as in the analyses of $\mathcal{E}_{1a}$ and $\mathcal{E}_{1b}$ that the probabilities of these events occurring vanish with large $n$ as long as the rate $r_{sym}$ satisfies (\ref{eq:3uACH2x}). Using the analyses of these subevents, we have
\begin{align*}
\lim_{n\rightarrow\infty}\Pr\left[\mathcal{E}_2\right] 
    ={}& \lim_{n\rightarrow\infty}\Pr\left[\mathcal{E}_{2,1}\right] + \Pr\left[\mathcal{E}_{2,1}^c \cap \mathcal{E}_{2,\{23\}}\right] \\
    ={}& \lim_{n\rightarrow\infty}\Pr\left[\mathcal{E}_{2,1a}\right] + \Pr\left[\mathcal{E}_{2,1a}^c\cap\mathcal{E}_{2,1b}\right] 
        + \Pr\left[\mathcal{E}_{2,1}^c \cap \mathcal{E}_{2,\{23\}a}\right] 
        + \Pr\left[\mathcal{E}_{2,1}^c \cap \mathcal{E}_{2,\{23\}a}^c\cap \mathcal{E}_{2,\{23\}b}\right] \\
    ={}& 0.
\end{align*} 

Through similarly identifying subevents of $\mathcal{E}_3$ we can also establish that $\Pr[\mathcal{E}_3]\rightarrow 0$ as $n\rightarrow\infty$. Therefore, the probability of decoding error at each user vanishes as $n$ grows large.\hfill\qed

\section{Outer Bound}\label{sec:converse}

In this section, we present an outer bound on the capacity region of the BIC problem. We will first state and prove the bound for the 3-user setting and remark on its implications. We then introduce a key lemma and prove the 3-user outer bound. Finally we state a general expression for an outer bound on the general $K$-user BIC capacity region. Its proof is relegated to Appendix~\ref{app:KuOB}.

\subsection{3-user Outer Bound}
We begin by stating the following result:
\begin{theorem}\label{thm:3uOB}
Consider a 3-user BIC problem. Rates $(r_1,r_2,r_3)$ are achievable only if, 
\begin{align}
r_i + 
\mu_{ij}r_j + 
\left(\mu_{ik}-\frac{\plus{\mu_{ij}-\mu_{kj}}\plus{\mu_{ik}-\mu_{jk}}}{1-\mu_{kj}}\right)r_k 
\leq{}& 1\label{eq:thm:3uOB},
\end{align}
for any $i\neq j\neq k\in\{1,2,3\}$ and $\plus{a} \triangleq \max\{a,0\}$. 
\end{theorem}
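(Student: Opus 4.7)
The plan is to prove (\ref{eq:thm:3uOB}) by starting from Fano's inequality for user $i$, converting the resulting conditional entropy of $\vec{x}$ into a combination of rates via a strong data-processing inequality (the generalization of (\ref{eq:motexconv}) referenced in Remark~\ref{rem:exconv}), and then closing the loop using Fano's inequality for one of the other users. Concretely, since user $i$ decodes $\msg_i$ from $(\vec{x},\sinfo_{ij},\sinfo_{ik},\vec{g}_{ij},\vec{g}_{ik})$ and $\msg_i$ is independent of all side-information vectors, Fano combined with $H(\vec{x})\leq n$ gives $nr_i + I(\vec{x};\msg_j^{+},\msg_k^{+})\leq n+n\epsilon_n$, where $\msg_\ell^{+}$ denotes the subvector of $\msg_\ell$ revealed to user $i$ (the dependence on $\vec{g}_{ij},\vec{g}_{ik}$ is left implicit). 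The task then reduces to lower bounding the mutual information by $n(\mu_{ij}r_j + \mu_{ik}' r_k)$, where $\mu_{ik}'$ is the stated reduced coefficient.

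The central technical tool is a lemma generalizing (\ref{eq:motexconv}): for any valid conditioning set $\mathcal{A}$, $H(\vec{x}\mid \mathcal{A},\msg_\ell^{+}) \geq \mu_{i\ell} H(\vec{x}\mid \mathcal{A}) + (1-\mu_{i\ell}) H(\vec{x}\mid \mathcal{A},\msg_\ell)$, or equivalently $I(\vec{x};\msg_\ell^{+}\mid \mathcal{A})\leq (1-\mu_{i\ell}) I(\vec{x};\msg_\ell\mid \mathcal{A})$. I would prove this by symmetrizing over the i.i.d.\ erasure mask $\vec{g}_{i\ell}$: because the encoder is blind, the marginal contribution of each bit of $\msg_\ell$ to $I(\vec{x};\msg_\ell)$ is exchangeable in position, so the fraction carrying information about $\msg_\ell^{+}$ is on average $1-\mu_{i\ell}$. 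This is exactly the flavor of the Anantharam--Gohari--Kamath--Nair strong DPI \cite{AGKN2014:isit}, specialized to the binary-erasure side-information model of Section~\ref{sec:problem}, and it is where the blindness assumption enters the converse in an essential way.

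Applying this lemma to peel $\msg_j^{+}$ and $\msg_k^{+}$ out of $I(\vec{x};\msg_j^{+},\msg_k^{+})$ yields expressions involving $I(\vec{x};\msg_j\mid \cdots)$ and $I(\vec{x};\msg_k\mid \cdots)$. To convert these into rates, I would invoke Fano for user $k$, whose side-information erasure parameters for $\msg_j$ and $\msg_i$ are $\mu_{kj}$ and $\mu_{ki}$. Comparing user $i$'s subvector $\msg_j^{+}$ (erasure probability $\mu_{ij}$) with user $k$'s subvector (erasure probability $\mu_{kj}$) is precisely what produces the positive-part term $\plus{\mu_{ij}-\mu_{kj}}$: when $\mu_{ij}\leq \mu_{kj}$, user $i$ already knows at least as much of $\msg_j$ as user $k$ and the extra tightening vanishes, recovering the naive bound $r_i+\mu_{ij}r_j+\mu_{ik}r_k\leq 1$; when $\mu_{ij}>\mu_{kj}$, the conditional erasure probability $(\mu_{ij}-\mu_{kj})/(1-\mu_{kj})$ emerges naturally as the relevant residual fraction, and an analogous argument for user $j$'s side information about $\msg_k$ produces the second factor $\plus{\mu_{ik}-\mu_{jk}}$. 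Summing the inequalities, dividing by $n$, and letting $n\to\infty$ (so $\epsilon_n\to 0$) then yields (\ref{eq:thm:3uOB}).

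The main obstacle I anticipate is the bookkeeping in this final step: verifying that the specific rational form $(\mu_{ij}-\mu_{kj})(\mu_{ik}-\mu_{jk})/(1-\mu_{kj})$ emerges from the chained applications of the key lemma and Fano's inequalities rather than something merely similar. The asymmetric denominator $1-\mu_{kj}$ (but not $1-\mu_{jk}$) signals that the derivation is not symmetric in $j$ and $k$: one has to commit to a fixed order of conditioning — first peel off $\msg_j$ using user $k$'s constraint, then handle $\msg_k$ — and the two $\plus{\cdot}$ operators must be produced through a careful case split (on whether each $\mu$ difference is positive) rather than substituted in by hand. Once the algebra is lined up, however, the remaining steps are routine entropy manipulations.
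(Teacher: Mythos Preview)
Your approach matches the paper's: start from Fano at user~$i$, apply a strong-DPI lemma tailored to the blind erasure side-information model, then close with Fano at the other users (with enhanced side information) to absorb the residual entropy terms, handling the case split on the signs of $\mu_{ij}-\mu_{kj}$ and $\mu_{ik}-\mu_{jk}$ separately.

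The one ingredient you should make explicit---and which dissolves the bookkeeping obstacle you flag---is a \emph{second} form of the key lemma. In addition to $H(\vec{x}\mid\sinfo_{ij},\vec{g}_{ij},V)\geq \mu_{ij}H(\vec{x}\mid V)+(1-\mu_{ij})H(\vec{x}\mid\msg_j,V)$, the same strong DPI (with the virtual erasure parameter set to $\mu_{kj}$ rather than $0$) gives, whenever $\mu_{kj}\leq\mu_{ij}$,
\[
H(\vec{x}\mid\sinfo_{ij},\vec{g}_{ij},V)\;\geq\;\frac{\mu_{ij}-\mu_{kj}}{1-\mu_{kj}}\,H(\vec{x}\mid V)\;+\;\frac{1-\mu_{ij}}{1-\mu_{kj}}\,H(\vec{x}\mid\sinfo_{kj},\vec{g}_{kj},V).
\]
This is where the denominator $1-\mu_{kj}$ and the first positive-part factor come from directly, not from a separate ``comparison of subvectors'' argument. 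Applying this once to $H(\vec{x}\mid\msg_i,\sinfo_{ij},\vec{g}_{ij},\sinfo_{ik},\vec{g}_{ik})$ splits it into two entropy terms; each is then cancelled by Fano at user~$j$ and user~$k$ respectively, with one more application of the first form of the lemma inside each branch. One small slip: after Fano at user~$i$ the quantity you must lower-bound is the conditional entropy $H(\vec{x}\mid\msg_i,\sinfo_{ij},\vec{g}_{ij},\sinfo_{ik},\vec{g}_{ik})$, not the mutual information $I(\vec{x};\msg_j^{+},\msg_k^{+})$.
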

%
%
%
%

\begin{remark}
If the sender is not blind (i.e., the side information is known), our BIC problem can be converted to an analogous classic index coding problem with each user, $i$, desiring four different messages whose rates sum to by $r_i$ and whose proportion are determined by $\mu_{ji}$ and $\mu_{ki}$ for $i\neq j\neq k$.


For this resulting classic index coding problem, using the coding techniques of~\cite{ABKSW2013:isit}, it can be shown that rate tuples $(r_1,r_2,r_3)$ satisfying, for all $i\neq j\neq k \in\{1,2,3\}$,
\begin{align*}
r_i +\mu_{ij}r_j +\mu_{ik}\mu_{jk}r_k\leq 1
\end{align*}
are achievable. Notice that for some side information parameters (e.g., when $\mu_{23}=\mu_{32}=0$ and $\mu_1j>0$ for $j=2,3$), the rates achieved by a non-blind sender can be greater than the BIC outer bound, (\ref{eq:thm:3uOB}). The key difference in expressions is the third term on the left side of (\ref{eq:thm:3uOB}), which captures (at least partially) the capacity loss due to sender blindness.
\end{remark}

\begin{remark}
By evaluating Theorem~\ref{thm:3uOB} and comparing with the condition for achievability using conventional random coding (\ref{eq:randomnetworkrate}) we arrive at the following result:
\begin{proposition}\label{prop:2u}
Consider a 2-user BIC defined by parameters $\mu_{12}$ and $\mu_{21}$. The capacity region is the set of all rate pairs $(r_1,r_2)$ satisfying
\begin{align}
r_1 + \mu_{12}r_2 \leq{}& 1,\label{eq:2u1}\\
\mu_{21}r_1 + r_2 \leq{}& 1.\label{eq:2u2}
\end{align}
\end{proposition}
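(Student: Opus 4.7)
My plan is to establish Proposition~\ref{prop:2u} by (i) showing that conventional random coding achieves every rate pair in the stated region, and (ii) obtaining a matching converse by specializing Theorem~\ref{thm:3uOB} to two users. Since the two arguments will produce exactly the same pair of inequalities, capacity follows immediately.

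For achievability, I would invoke the condition (\ref{eq:randomnetworkrate}), which states that any rate tuple satisfying $r_i + \sum_{j\neq i} \mu_{ij} r_j \leq 1$ for every $i$ is achievable via conventional random coding. Setting $K=2$, this collapses to precisely $r_1 + \mu_{12} r_2 \leq 1$ and $\mu_{21} r_1 + r_2 \leq 1$, matching (\ref{eq:2u1})--(\ref{eq:2u2}). No new coding construction is required.

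For the converse, I would view a 2-user BIC problem as a degenerate 3-user BIC problem in which user~3's message has zero length (so $m_3 = 0$ and hence $r_3 = 0$). Every term in (\ref{eq:thm:3uOB}) that is multiplied by $r_k$ with $k=3$ then vanishes, so the coefficients involving $\mu_{i3}$ and $\mu_{3i}$ become irrelevant. Taking $(i,j,k)=(1,2,3)$ in (\ref{eq:thm:3uOB}) recovers (\ref{eq:2u1}), and taking $(i,j,k)=(2,1,3)$ recovers (\ref{eq:2u2}). Thus Theorem~\ref{thm:3uOB} supplies exactly the required necessary conditions.

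The only subtle point is justifying that Theorem~\ref{thm:3uOB} applies to this degenerate embedding. I expect this to be routine: any $(r_1,r_2)$-achievable scheme for a 2-user BIC trivially extends to an $(r_1,r_2,0)$-achievable scheme for a 3-user BIC whose third user receives an empty message, so any necessary condition for the 3-user problem is also necessary for the 2-user problem. If the reader prefers not to rely on this reduction, one can instead invoke the general $K$-user outer bound (Theorem~\ref{thm:KuOB}) directly at $K=2$, which yields the same two inequalities without any degenerate-user argument.
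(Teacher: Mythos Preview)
Your proposal is correct and mirrors the paper's own proof essentially verbatim: the paper derives the converse from Theorem~\ref{thm:3uOB} by setting $i,j\in\{1,2\}$, $k=3$, $r_3=0$, and obtains achievability by evaluating (\ref{eq:randomnetworkrate}). Your extra paragraph justifying the degenerate embedding is a nice touch that the paper omits.
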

\begin{proof}
The converse results from Theorem~\ref{thm:3uOB} by letting $i,j\in\{1,2\}$, $k=3$ and fixing $r_3=0$, while achievability is a result of evaluation of (\ref{eq:randomnetworkrate}).
\end{proof}
\end{remark}

\subsection{Proof of Theorem~\ref{thm:3uOB}}\label{sec:3uOB}

To prove Theorem~\ref{thm:3uOB}, we start by stating and proving a key lemma:
\begin{lemma}\label{lem:split}
Consider a BIC problem with side information parameters $\{\mu_{ij}\}$.
Then, for any $(r_1,\ldots,r_K)$ scheme with block length $n$ and any random variable $V$ that is independent of $\msg_j$ and $\vec{g}_{ij}$ with $i\neq j$ (but may depend on other messages and channel parameters), we have 
\begin{align}
H\left(\vec{\mathbf{x}}\middle|\sinfo_{ij},\vec{g}_{ij},V\right) \geq{}& \mu_{ij}H\left(\vec{\mathbf{x}}\middle|V\right) + \left(1-\mu_{ij}\right)H\left(\vec{\mathbf{x}} \middle|\msg_j,V\right).\label{eq:lem:split:1}
\end{align}
Additionally, if $\mu_{kj}\leq\mu_{ij}$ where $i\neq j \neq k$, then
\begin{align} 
H\left(\vec{\mathbf{x}}\middle|\sinfo_{ij},\vec{g}_{ij},V\right) \geq{}& \frac{\mu_{ij} - \mu_{kj}}{1-\mu_{kj}}H\left(\vec{\mathbf{x}}\middle|V\right) 
    + \frac{1-\mu_{ij}}{1-\mu_{kj}}H\left(\vec{\mathbf{x}} \middle|\sinfo_{kj},\vec{g}_{kj},V\right).\label{eq:lem:split:2}
\end{align}
\end{lemma}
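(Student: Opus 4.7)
The plan is to recognize that the side-information pair $(\sinfo_{ij},\vec{g}_{ij})$ is information-theoretically equivalent to the output of a vector binary erasure channel (BEC) of erasure probability $\mu_{ij}$ applied to $\msg_j$, and then invoke the strong data processing inequality (SDPI) for BECs from \cite{AGKN2014:isit}. Specifically, a vector BEC of erasure probability $\mu$ has SDPI constant $1-\mu$: for any $W-X-Y$ Markov with $X\to Y$ such a channel, $I(W;Y) \leq (1-\mu)\,I(W;X)$. Both parts of the lemma will be consequences of this fact applied conditionally on $V$.

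For part (\ref{eq:lem:split:1}), since the encoder is blind, $\vec{\mathbf{x}}$ is a deterministic function of the messages and is therefore independent of $\vec{g}_{ij}$. Combined with the assumption that $V$ is independent of $\msg_j$ and $\vec{g}_{ij}$, this yields the Markov chain $\vec{\mathbf{x}} - \msg_j - (\sinfo_{ij},\vec{g}_{ij})$ conditional on $V$, with induced channel a vector BEC of erasure probability $\mu_{ij}$. Invoking the conditional SDPI gives $I(\vec{\mathbf{x}};\sinfo_{ij},\vec{g}_{ij}\mid V) \leq (1-\mu_{ij})\,I(\vec{\mathbf{x}};\msg_j\mid V)$, and expanding each mutual information via $I(A;B\mid C) = H(A\mid C) - H(A\mid B,C)$, followed by straightforward rearrangement, recovers (\ref{eq:lem:split:1}).

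For part (\ref{eq:lem:split:2}), I would introduce a coupling that exhibits $(\sinfo_{ij},\vec{g}_{ij})$ as a further-degraded version of $(\sinfo_{kj},\vec{g}_{kj})$. Let $\tilde{\vec{g}}$ be an i.i.d.\ Bernoulli$(1-\mu')$ vector with $\mu' \triangleq (\mu_{ij}-\mu_{kj})/(1-\mu_{kj})$, independent of everything else, and set $g_{ij}[\ell] \triangleq g_{kj}[\ell]\,\tilde{g}[\ell]$; a direct check shows $\vec{g}_{ij}$ retains its i.i.d.\ Bernoulli$(1-\mu_{ij})$ marginal, so the entropies appearing in (\ref{eq:lem:split:2}) are unaffected. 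Letting $Z \triangleq (\sinfo_{kj},\vec{g}_{kj})$ and defining $\bar{Y}[\ell] = Z[\ell]$ when $\tilde{g}[\ell]=1$ and $\bar{Y}[\ell]=\bot$ otherwise, the channel $Z \to \bar{Y}$ is exactly a vector BEC of erasure probability $\mu'$. Because the coupling makes $(\sinfo_{ij},\vec{g}_{ij})$ a deterministic function of $\bar{Y}$ alone (non-erased positions of $\bar{Y}$ equal $(\sinfo_{ij}[\ell],g_{ij}[\ell])$ by construction, while erased positions force $(\sinfo_{ij}[\ell],g_{ij}[\ell])=(0,0)$), data processing gives $H(\vec{\mathbf{x}}\mid\sinfo_{ij},\vec{g}_{ij},V) \geq H(\vec{\mathbf{x}}\mid\bar{Y},V)$. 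Applying the conditional SDPI to $Z \to \bar{Y}$ and rearranging yields $H(\vec{\mathbf{x}}\mid\bar{Y},V) \geq \mu'\,H(\vec{\mathbf{x}}\mid V) + (1-\mu')\,H(\vec{\mathbf{x}}\mid Z,V)$, and substituting $\mu'=(\mu_{ij}-\mu_{kj})/(1-\mu_{kj})$ produces (\ref{eq:lem:split:2}).

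The main obstacle is executing the coupling argument in part (2): one must verify carefully that (i) the marginal of $\vec{g}_{ij}$ is preserved so the LHS of (\ref{eq:lem:split:2}) is unchanged, (ii) the induced channel $Z \to \bar{Y}$ is a genuine vector BEC so the SDPI applies with the claimed constant $1-\mu'$, and (iii) $(\sinfo_{ij},\vec{g}_{ij})$ is fully reconstructible from $\bar{Y}$ so that the data-processing step is justified. Once these structural checks are in place, the remainder reduces to routine manipulation of conditional entropies.
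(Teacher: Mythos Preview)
Your proposal is correct and follows essentially the same approach as the paper: both construct a degradation coupling (the paper writes $g_{ij}[\ell]=g'[\ell]\,g^{\ddagger}[\ell]$ for a virtual signal $(\sinfo',\vec{g}')$ with parameter $\mu'\in\{0,\mu_{kj}\}$, exactly your coupling) and then invoke the strong data processing inequality of \cite{AGKN2014:isit}. The one execution difference is that the paper applies the SDPI directly to the channel $(\sinfo',\vec{g}')\to(\sinfo_{ij},\vec{g}_{ij})$ and must evaluate its contraction coefficient $s^*$ from scratch in an appendix, whereas your auxiliary variable $\bar{Y}$ turns the degradation into a bona fide vector BEC so the known constant $1-\mu'$ applies immediately, with a one-line data-processing step replacing that computation.
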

\begin{remark}
Inequality (\ref{eq:lem:split:1}) captures an intuition that can be illustrated through the following toy problem. Consider a scenario where the sender has 4 bits $b_1,b_2,c_1,c_2$. It knows that User~2 knows $c_1$ and $c_2$ already and User~3 knows $b_1$ and $b_2$. On the other hand, the sender only knows that User~1 knows \emph{either} $b_1$ or $b_2$ (but not both) and \emph{either} $c_1$ or $c_2$ (but not both) and that both of these uncertainties are the result of a (fair) coin flip. If the sender sends a single transmission such that both User~2 and User~3 learn something new about $b_1,b_2,c_1,c_2$, what is the minimum probability that User~1 also learns something new?

One possible transmission would be to send $b_1\oplus c_1$. In this case, User~2 learns $c_1$ and User~3 learns $b_1$, and there is a 75\% chance that User~1 learns either $b_1$, $c_1$, or $b_1\oplus c_1$. In comparison, we can evaluate (\ref{eq:lem:split:1}) for the porposed transmission by letting $i=1$, $j=2$, $k=3$, $\mu_{12}=\mu_{13} = \frac{1}{2}$, and assuming $\msg_{2}=[b_1\quad b_2]$, $\msg_3 = [c_1\quad c_2]$, and $V = (\sinfo_{13},\vec{g}_{13})$.
In doing so, we see that the right hand side of (\ref{eq:lem:split:1}) evaluates to $ \mu_{12}(1) + (1-\mu_{12})(\mu_{13})= \frac{3}{4}$, signifying that the 75\% chance of ``leaking'' information to User~1 is the lowest possible. 

Notice that, as stated, Lemma~\ref{lem:split} does not assume decodability of any message. Moreover, it applies regardless of the number of channel uses, whereas the toy example assumed only a single channel use. Consequently, Lemma~\ref{lem:split} can be viewed as a powerful extension of the intution from the toy example to vector (i.e., coded) representations of message bits.
\label{rem:toyexample}\end{remark}
\begin{remark}\label{rem:alignment2}
One can note that if the transmitter is not blind, the sender can construct a signal that invalidates 4. For example, consider the scenario in Remark~\ref{rem:toyexample}, but now assume that the sender is aware that User~1 knows $b_1$ and $c_1$. The sender can now use this knowledge to send a single transmission $b_1\oplus c_1$. One can easily verify that, for this one transmission, the left hand side now evaluates to $0$, while the right hand side evaluates to $\frac{1}{2}$ which violates the claim. Therefore, the inequality specifically captures the impact of a blind sender.
\end{remark}
\begin{remark}
Inequality (\ref{eq:lem:split:1}) can more generally be interpreted as follows. Note that $H(\vec{x}|V)$ corresponds to the case that there is no side information about $\msg_j$ provided in the conditioning, and $H(\vec{x}|V,\msg_j)$ corresponds to the case that all of $\msg_j$ is provided as the side information in the conditioning. Therefore inequality (\ref{eq:lem:split:1}) lower bounds $H(\vec{\mathbf{x}}|\sinfo_{ij},\vec{g}_{ij},V)$ with a weighted average of two extreme cases, where either none or all of $\msg_j$ is provided as side information. A similar interpretation holds for (\ref{eq:lem:split:2}), where $\msg_j$ is replaced with $(\sinfo_{kj},\vec{g}_{kj})$.
\end{remark}

\begin{proof}
To prove Lemma~\ref{lem:split}, we first define a virtual side information signal, $\sinfo^\prime$, such that $\sinfo_{ij}$ is a physically degraded version of $\sinfo^\prime$. To do so we also specify two channel state sequences, $\vec{g}^\prime$ and $\vec{g}^{\ddagger}$ drawn i.i.d from two different Bernoulli distributions that take a values of zero with probabilities $\mu^\prime$ and $\delta = \frac{\mu_{ij}-\mu^\prime}{1-\mu^\prime}$, respectively. The side information signals are constructed such that for $\ell\in\{1,\ldots,m_j\}$,
\begin{align}
    \phi^\prime[\ell] ={}& g^\prime[\ell]{w}_j[\ell] \qquad \text{ and }\qquad
    g_{ij}[\ell] = g^\prime[\ell]g^\ddagger[\ell],
\end{align}
which necessarily implies $\mu^\prime \leq\mu_{ij}$. 

We now establish a relationship between the virtual side information signal, $\sinfo^\prime$, and the degraded side information, $\sinfo_{ij}$, using a strong data processing inequality proven in~\cite{AGKN2014:isit} which states, for random variables $U \leftrightarrow X \leftrightarrow Y$, that form a Markov chain,
\begin{align*}
I(Y;U) \leq s^*(X;Y)I(X;U),
\end{align*} 
where 
\begin{align*}
s^*(X;Y) \triangleq{}& \sup_{Q_X\neq P_X}\frac{D(Q_Y||P_Y)}{D(Q_X||P_X)},
\end{align*}
and $Q_Y$ is the marginal distribution of $Y$ from the joint distribution $Q_{XY} = P_{Y|X}Q_X$. 
We apply the strong data processing inequality by letting $U = (\vec{\mathbf{x}},V)$, $X = (\sinfo^\prime,\vec{g}^\prime)$, and $Y = (\sinfo_{ij},\vec{g}_{ij})$, to show
\begin{align}
H\left(\vec{\mathbf{x}}\middle|\sinfo_{ij},\vec{g}_{ij},V\right) 
={}& -I\left(\sinfo_{ij},\vec{g}_{ij};\vec{\mathbf{x}} \middle| V\right)+H\left(\vec{\mathbf{x}} \middle| V \right)\nonumber\\
={}& -I\left(\sinfo_{ij},\vec{g}_{ij};\vec{\mathbf{x}},V\right)+H\left(\vec{\mathbf{x}} \middle|V\right)\nonumber\\
    \geq{}& -s^*\left(\left(\sinfo^\prime,\vec{g}^\prime\right);\left(\sinfo_{ij},\vec{g}_{ij}\right)\right) 
          I\left(\sinfo^\prime,\vec{g}^\prime;\vec{\mathbf{x}},V\right) + H\left(\vec{\mathbf{x}}\middle|V\right)\nonumber\\
    \stackrel{(a)}{=}{}& -\frac{1-\mu_{ij}}{1-\mu^\prime}
        \left(I\left(\sinfo^\prime,\vec{g}^\prime;\vec{\mathbf{x}}\middle|V\right) + I\left(\sinfo^\prime,\vec{g}^\prime;V\right)\right)+H\left(\vec{\mathbf{x}}\middle|V\right)\nonumber\\
    ={}& -\frac{1-\mu_{ij}}{1-\mu^\prime}
        \left(H\left(\vec{\mathbf{x}}\middle|V\right)-H\left(\vec{\mathbf{x}}\middle|V,\sinfo^\prime,\vec{g}^\prime\right)\right)+H\left(\vec{\mathbf{x}}\middle|V\right)\nonumber\\
    ={}& \frac{\mu_{ij}-\mu^\prime}{1-\mu^\prime} H\left(\vec{\mathbf{x}} \middle|V\right) + \frac{1-\mu_{ij}}{1-\mu^\prime}H\left(\vec{\mathbf{x}}\middle|V,\sinfo^\prime,\vec{g}^\prime\right).\label{eq:lemproof1}
\end{align}
Step (a), where we evaluated $s^*\left(\left(\sinfo^\prime,\vec{g}^\prime\right);\left(\sinfo_{ij},\vec{g}_{ij}\right)\right)$, is proven in Appendix~\ref{app:sstar}. 
Recall that we only require $\mu^\prime<\mu_{ij}$ in order for the virtual signal to be properly defined, and we notice the following to complete the proof:
\begin{itemize}
\item If $\mu^\prime=0$, then $(\sinfo^\prime,\vec{g}^\prime) = (\msg_j,\vec{1})$ and we prove (\ref{eq:lem:split:1}),
\item If $\mu^\prime=\mu_{kj}<\mu_{ij}$, then $(\sinfo^\prime,\vec{g}^\prime)$ is statistically equivalent to $(\sinfo_{kj},\vec{g}_{kj})$ and we prove (\ref{eq:lem:split:2}).\vspace{-3ex}
\end{itemize}\end{proof}

We now use Lemma~\ref{lem:split} to prove Theorem~\ref{thm:3uOB}. First, we note that two side info parameter relationships affect the form of (\ref{eq:thm:3uOB}): The term $\frac{\plus{\mu_{ij}-\mu_{kj}}\plus{\mu_{ik}-\mu_{jk}}}{1-\mu_{kj}}$ is nonzero only if both $\mu_{kj} < \mu_{ij}$ and $\mu_{jk} < \mu_{ik}$. In this case,
\begin{align}
r_i + 
\mu_{ij}r_j + 
\left(\mu_{ik}-\frac{(\mu_{ij}-\mu_{kj})(\mu_{ik}-\mu_{jk})}{1-\mu_{kj}}\right)r_k 
\leq{}& 1.\label{eq:3u2}
\end{align}
Otherwise, if either $\mu_{kj} \geq \mu_{ij}$ or  $\mu_{jk} \geq \mu_{ik}$, then
\begin{align}
r_i + 
    \mu_{ij}r_j + 
    \mu_{ik}r_k 
\leq{}& 1.\label{eq:3u1}
\end{align}
We prove these two cases separately, and only address the first case, (\ref{eq:3u2}), here. The proof of (\ref{eq:3u1}) will use similar techniques, and may be found in Appendix~\ref{app:3u2}. We therefore assume $\mu_{kj}< \mu_{ij}$ and $\mu_{jk}< \mu_{ik}$, and start with Fano's inequality at User~$i$:
\begin{align}
    nr_i 
    \leq{}& I\left(\vec{x},\sinfo_{ij},\vec{g}_{ij},\sinfo_{ik},\vec{g}_{ik};\msg_i\right) + o(n)\nonumber\\
    ={}& H\left(\vec{x}\middle|\sinfo_{ij},\vec{g}_{ij},\sinfo_{ik},\vec{g}_{ik}\right) - H\left(\vec{x}\middle|\msg_i,\sinfo_{ij},\vec{g}_{ij},\sinfo_{ik},\vec{g}_{ik}\right) + o(n)\nonumber\\
    \leq{}& n - H\left(\vec{x}\middle|\msg_i,\sinfo_{ij},\vec{g}_{ij},\sinfo_{ik},\vec{g}_{ik}\right) + o(n)\label{eq:3uprooffano}\\
    \stackrel{(a)}{\leq}{}& n - \frac{\mu_{ij} - \mu_{kj}}{1-\mu_{kj}}\overbrace{H\left(\vec{\mathbf{x}}\middle|\msg_i,\sinfo_{ik},\vec{g}_{ik}\right)}^A 
     - \frac{1-\mu_{ij}}{1-\mu_{kj}}\overbrace{H\left(\vec{\mathbf{x}} \middle|\msg_i,\sinfo_{kj},\vec{g}_{kj},\sinfo_{ik},\vec{g}_{ik}\right)}^B,\label{eq:3uprooffano2}
\end{align}
where in step (a) we applied (\ref{eq:lem:split:2}) from Lemma~\ref{lem:split} by letting $V=\left(\msg_i,\sinfo_{ik},\vec{g}_{ik}\right)$. Notice there are two negative entropy terms, $A$ and $B$, to account for. To address the quantity $A$, we enhance side information at User~$j$ from $\left(\sinfo_{ji},\vec{g}_{ji}\right)$ to $\msg_i$, and observe:
\begin{align}
    nr_j 
    \leq{}& I\left(\vec{x},\sinfo_{ji},\vec{g}_{ji},\sinfo_{jk},\vec{g}_{jk};\msg_j\right) +o(n)\nonumber\\
    \leq{}& I\left(\vec{x},\msg_{i},\sinfo_{jk},\vec{g}_{jk};\msg_j\right) +o(n)\nonumber\\
    ={}& H\left(\vec{x}\middle|\msg_{i},\sinfo_{jk},\vec{g}_{jk}\right) - H\left(\vec{x}\middle|\msg_{i},\msg_j,\sinfo_{jk},\vec{g}_{jk}\right) + o(n)\nonumber\\
    \stackrel{(b)}{\leq}{}& H\left(\vec{x}\middle|\msg_{i},\sinfo_{jk},\vec{g}_{jk}\right) - \mu_{jk}H\left(\vec{x}\middle|\msg_{i},\msg_j\right) + o(n)\nonumber\\
    \stackrel{(c)}{\leq}{}& \overbrace{H\left(\vec{x}\middle|\msg_{i},\sinfo_{ik},\vec{g}_{ik}\right)}^{A} - \mu_{jk}H\left(\vec{x}\middle|\msg_{i},\msg_j\right) + o(n),\label{eq:3uproof2A1}
\end{align}
where in (b) we used (\ref{eq:lem:split:1}) from Lemma~\ref{lem:split} while letting $V=(\msg_i,\msg_j)$, and 
in (c) we observe that, since $\mu_{kj} < \mu_{ij}$ and that $g_{jk}[\ell]=0$ implies $(\phi_{jk}[\ell]=0,g_{jk}[\ell]=0)$ is independent of $\msg_i$ and $\vec{x}$, replacing $\left(\sinfo_{jk},\vec{g}_{jk}\right)$ with $\left(\sinfo_{ik},\vec{g}_{ik}\right)$ reduces the effective conditioning (see Claim~\ref{cl:staten} in Appendix~\ref{app:KuOB}). At User~$k$ we enhance side information from $\left(\sinfo_{ki},\vec{g}_{ki},\sinfo_{kj},\vec{g}_{kj}\right)$ to $(\msg_j,\msg_k)$ to find:
\begin{align}
    nr_k 
    \leq{}& I\left(\vec{x},\sinfo_{ki},\vec{g}_{ki},\sinfo_{kj},\vec{g}_{kj};\msg_k\right) +o(n)\nonumber\\
    \leq{}& I\left(\vec{x},\msg_{i},\msg_j;\msg_k\right) +o(n)\nonumber\\
    \leq{}& H\left(\vec{x}\middle|\msg_{i},\msg_j\right) + o(n).\label{eq:3uproof2A2}
\end{align}

To account for the quantity $B$, we observe 
\begin{align}
n\mu_{ik}r_k 
    ={}& nr_k - n(1-\mu_{ik})r_k \nonumber\\
    \leq{}& I\left(\vec{x},\sinfo_{ki},\vec{g}_{ki},\sinfo_{kj},\vec{g}_{kj};\msg_k\right) - n(1-\mu_{ik})r_k + o(n)\nonumber\\
    \leq{}& I\left(\vec{x},\msg_{i},\sinfo_{kj},\vec{g}_{kj},\sinfo_{ik},\vec{g}_{ik};\msg_k\right) - n(1-\mu_{ik})r_k +o(n)\nonumber\\
    ={}& H\left(\vec{x}|\msg_{i},\sinfo_{kj},\vec{g}_{kj},\sinfo_{ik},\vec{g}_{ik}\right)  - H\left(\vec{x}\middle|\msg_{i},\sinfo_{kj},\vec{g}_{kj},\msg_k\right) 
     + I\left(\sinfo_{ik},\vec{g}_{ik};\msg_k\right) - n(1-\mu_{ik})r_k + o(n)\nonumber\\ 
    ={}& H\left(\vec{x}\middle|\msg_{i},\sinfo_{kj},\vec{g}_{kj},\sinfo_{ik},\vec{g}_{ik}\right)  - H\left(\vec{x}\middle|\msg_{i},\sinfo_{kj},\vec{g}_{kj},\msg_k\right) + o(n)\nonumber\\ 
    \stackrel{(d)}{\leq}{}& H\left(\vec{x}\middle|\msg_{i},\sinfo_{kj},\vec{g}_{kj},\sinfo_{ik},\vec{g}_{ik}\right)  - \mu_{ij}H\left(\vec{x}\middle|\msg_{i},\msg_k\right)+ o(n),\nonumber\\
        \leq{}& \overbrace{H\left(\vec{x}\middle|\msg_{i},\sinfo_{kj},\vec{g}_{kj},\sinfo_{ik},\vec{g}_{ik}\right)}^{B} - \mu_{kj}H\left(\vec{x}\middle|\msg_{i},\msg_k\right) + o(n).\label{eq:3uproof2B1}
\end{align}
In step (d) we used (\ref{eq:lem:split:2}). Also, like (\ref{eq:3uproof2A2}), we find
\begin{align}
    nr_j 
    \leq{}& H(\vec{x}|\msg_{i},\msg_j) + o(n).\label{eq:3uproof2B2}
\end{align}

By appropriately scaling (\ref{eq:3uproof2A1}), (\ref{eq:3uproof2A2}), (\ref{eq:3uproof2B1}), and (\ref{eq:3uproof2B2}), and then summing with (\ref{eq:3uprooffano2}) we arrive at (\ref{eq:3u2}), as desired.\hfill\qed

\subsection{$K$-user Outer Bound}
The construction of the bound is governed by a recursion specified using a tree data structure which we refer to as an \emph{outer bound tree}:
\begin{definition}[Outer Bound Tree (OBT)]\label{def:obt}
A $K$-user OBT is directed labeled tree with $K$ levels where each node in the first $K-2$ levels has 2 children and each node in the $K-1$-th level has one child. The label of the $i$-th node in level $\ell$ is denoted as $v[\ell,i]\in\{1,\ldots,K\}$, where if $\ell<K$ then $i\in \{1,\ldots,2^{\ell-1}\}$ and if $\ell=K$ then $i\in \{1,\ldots,2^{\ell-2}\}$. The index $i$ specifies the precise location in the level: nodes $i=2j-1$ and $i=2j$ in level $\ell<K$ are the left and right children, respectively, of a node $j$ in level $\ell-1$. Node $i$ in level $K$ is the sole child of node $i$ in level $K-1$. Finally, the labels of an OBT must satisfy the following:
\begin{enumerate}
    \item For any path from the root node of the tree to any leaf node, no labels are repeated.
    \item Any two nodes with the same parent cannot have the same label.
\end{enumerate}
\end{definition}
The first requirement is equivalent to saying that the sequence of labels along any path from root to leaf is a permutation of $\{1,\ldots,K\}$. This is demonstrated in Figure~\ref{fig:tree4}, where we provide an example of an 4-user OBT.

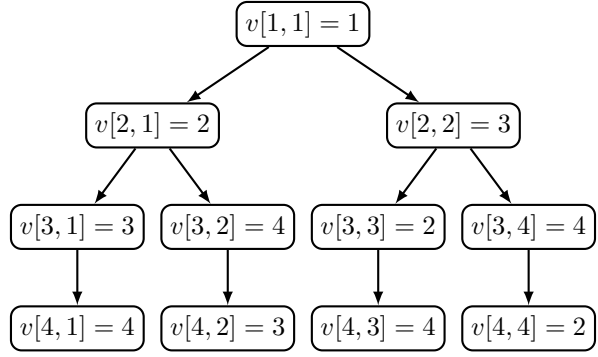
\begin{figure}
\centering
\begin{tikzpicture}[yscale=1.35]
\node (a) at (0,0) [draw,thick,rounded corners] {$v[1,1] = 1$};
\node (b) at (-2,-1) [draw,thick,rounded corners] {$v[2,1] = 2$};
\node (c) at (2,-1) [draw,thick,rounded corners] {$v[2,2] = 3$};
\node (d) at (-3,-2) [draw,thick,rounded corners] {$v[3,1] = 3$};
\node (e) at (-1,-2) [draw,thick,rounded corners] {$v[3,2] = 4$};
\node (f) at (1,-2) [draw,thick,rounded corners] {$v[3,3] = 2$};
\node (g) at (3,-2) [draw,thick,rounded corners] {$v[3,4] = 4$};
\node (h) at (-3,-3) [draw,thick,rounded corners] {$v[4,1] = 4$};
\node (i) at (-1,-3) [draw,thick,rounded corners] {$v[4,2] = 3$};
\node (j) at (1,-3) [draw,thick,rounded corners] {$v[4,3] = 4$};
\node (k) at (3,-3) [draw,thick,rounded corners] {$v[4,4] = 2$};
\draw[thick,-latex] (a)--(b);
\draw[thick,-latex] (a)--(c);
\draw[thick,-latex] (b)--(d);
\draw[thick,-latex] (b)--(e);
\draw[thick,-latex] (c)--(f);
\draw[thick,-latex] (c)--(g);
\draw[thick,-latex] (d)--(h);
\draw[thick,-latex] (e)--(i);
\draw[thick,-latex] (f)--(j);
\draw[thick,-latex] (g)--(k);
\end{tikzpicture}
\caption{Possible OBT for $K=4$ users. Notice that the sequence of labels along each root-to-leaf path is a permutation of the user indices.}\label{fig:tree4}\end{figure}

We now state the following outer bound for the $K$-user BIC:
\begin{theorem}[]\label{thm:KuOB} Consider a $K$-user BIC with $K\geq 3$, defined by parameters $\{\mu_{ij}\}$. The rate tuple $(r_1,\ldots,r_K)$ is achievable only if it satisfies, 
\begin{align}
\Gamma_A[1,1] \leq 1,
\end{align}
for any $K$-user OBT, where
\begin{align}
\Gamma_A[\ell,i] 
&=
\begin{cases}
r_{v[\ell,i]} + \zeta[\ell,i]\Gamma_A[\ell+1,2i-1] + (1-\zeta[\ell,i])\Gamma_B[\ell+1,2i] &\text{ if }\ell<K-1 \cr
r_{v[\ell,i]} + \zeta[\ell,i]r_{v[\ell+1,i]} &\text{ if }\ell=K-1 \cr
0 &\text{ otherwise } \cr
\end{cases},\label{eq:KuOBrec1}\\
\Gamma_B[\ell,i]
&=
\begin{cases}
\eta_{v[\ell,i]}\left[\ell-1,\left\lceil\frac{i}{2}\right\rceil\right]r_{v[\ell,i]} 
    + \zeta[\ell,i]\Gamma_A[\ell+1,2i-1] 
    + (1-\zeta[\ell,i])\Gamma_B[\ell+1,2i] &\text{ if }\ell<K-1 \cr
\eta_{v[\ell,i]}\left[\ell-1,\left\lceil\frac{i}{2}\right\rceil\right]r_{v[\ell,i]}
    + \zeta[\ell,i]r_{v[\ell+1,i]} &\text{ if }\ell=K-1 \cr
0 &\text{ otherwise } \cr
\end{cases},\label{eq:KuOBrec2}\\
\zeta[\ell,i] 
&=
\begin{cases}
\frac{\displaystyle \plus{\eta_{v[\ell+1,2i-1]}[\ell,i]-\mu_{v[\ell+1,2i],v[\ell+1,2i-1]}}}
{\displaystyle 1-\mu_{v[\ell+1,2i],v[\ell+1,2i-1]}} &\text{ if }\ell<K-1 \cr
 \eta_{v[\ell+1,2i-1]}[\ell,i]&\text{ if }\ell=K-1 \cr
0 &\text{ otherwise } \cr
\end{cases},\label{eq:KuOBrec3}
\end{align}
where we have, 
if $\ell>1$,
\begin{align}
\eta_j[\ell,i] 
&=
\begin{cases}
1 &\text{ if } j=v[\ell,i],\ i\text{ is odd}\cr
0 &\text{ if } j=v\left[\ell-1,\left\lceil\frac{i}{2}\right\rceil\right]\cr
\min\left\{\mu_{v[\ell,i],j},\eta_j\left[\ell-1,\left\lceil\frac{i}{2}\right\rceil\right]\right\} & \text{ otherwise }
\end{cases},\label{eq:KuOBrec4a}
\end{align}
and if $\ell=1$
\begin{align}
    \eta_j[1,1] = 
    \begin{cases}
    1 & \text{ if } j=v[1,1]\cr
    \mu_{v[1,1],j} & \text{ otherwise }
    \end{cases}. \label{eq:KuOBrec4}
\end{align}
\end{theorem}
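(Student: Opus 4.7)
The plan is to generalize the 3-user argument in Section~\ref{sec:3uOB} by structural induction over the outer bound tree, repeatedly invoking Lemma~\ref{lem:split} to split conditional entropies into convex combinations that correspond to the two children at each internal node. The OBT serves purely as a bookkeeping device: the labels $v[\ell,i]$ record the sequence of users whose Fano inequalities are applied and whose side-information realizations appear in the conditioning; $\eta_j[\ell,i]$ tracks the effective side-information parameter about $\msg_j$ folded into the conditioning at node $[\ell,i]$; and $\zeta[\ell,i]$ stores the convex weight produced by the matching application of Lemma~\ref{lem:split}.

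First, I would attach to each tree node $[\ell,i]$ a conditional entropy $H[\ell,i] = H(\vec{\mathbf{x}}\mid \mathcal{C}[\ell,i])$, where the conditioning $\mathcal{C}[\ell,i]$ is built along the root-to-node path and, for each $j$, contains either $\msg_j$ (when $\eta_j[\ell,i]=1$), nothing (when $\eta_j[\ell,i]=0$), or a side-information pair $(\sinfo^\prime_j,\vec{g}^\prime_j)$ of parameter $\eta_j[\ell,i]$, obtained by downgrading an existing side-information vector (exactly the manipulation established in Claim~\ref{cl:staten}). At the root, Fano's inequality at user $v[1,1]$ together with (\ref{eq:KuOBrec4}) yields $nr_{v[1,1]} \leq n - H[1,1] + o(n)$, matching the initialization of $\Gamma_A[1,1]$.

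The inductive step at an internal node $[\ell,i]$ with $\ell < K-1$ applies Lemma~\ref{lem:split} to $H[\ell,i]$, taking the ``target'' message to be $\msg_{v[\ell+1,2i-1]}$ (the left child's label) and the ``reference'' user to be $v[\ell+1,2i]$ (the right child's label). Comparing with (\ref{eq:lem:split:2}), the effective parameter $\eta_{v[\ell+1,2i-1]}[\ell,i]$ plays the role of $\mu_{ij}$, while $\mu_{v[\ell+1,2i],v[\ell+1,2i-1]}$ plays the role of $\mu_{kj}$, producing exactly the weight $\zeta[\ell,i]$ from (\ref{eq:KuOBrec3}). The two resulting sub-entropies are identified with $H[\ell+1,2i-1]$ and $H[\ell+1,2i]$: the left child adds $\msg_{v[\ell+1,2i-1]}$ fully to the conditioning (hence the $\eta=1$ rule for odd $i$ in (\ref{eq:KuOBrec4a})), whereas the right child substitutes the weaker side-information vector associated with $v[\ell+1,2i]$ and then possibly downgrades it further, explaining the $\min\{\mu_{v[\ell,i],j},\eta_j[\ell-1,\lceil i/2\rceil]\}$ rule. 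To convert this entropy recursion into the rate recursion of (\ref{eq:KuOBrec1})--(\ref{eq:KuOBrec2}), I then bound each sub-entropy via a Fano inequality at the appropriate child user, exactly as was done for the quantities $A$ and $B$ in the 3-user proof (see (\ref{eq:3uproof2A1})--(\ref{eq:3uproof2B2})); the $\eta$-prefactor appearing in $\Gamma_B$ comes from the step that isolates only the \emph{unknown} fraction of the relevant message, in analogy with (\ref{eq:3uproof2B1}).

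At the leaf level $K$, the recursion terminates with a direct Fano inequality at user $v[K,i]$, contributing the $r_{v[\ell+1,i]}$ summands in the $\ell=K-1$ branches of (\ref{eq:KuOBrec1})--(\ref{eq:KuOBrec2}). Telescoping the splits upward with weights $\zeta$ and $1-\zeta$ and discarding the nonnegative leaf entropies collapses everything into $n\Gamma_A[1,1]\leq n + o(n)$, i.e.\ $\Gamma_A[1,1]\leq 1$, which is the claim. The main obstacle, and the bulk of the technical work, will be the careful bookkeeping to verify at every step that the conditional entropy produced by the next application of Lemma~\ref{lem:split} and of Fano's inequality exactly matches the $\mathcal{C}[\ell,i]$ dictated by the OBT's $\eta$ and $v$ values; this is tedious but follows the template of equations (\ref{eq:3uprooffano2})--(\ref{eq:3uproof2B2}) essentially verbatim, with Claim~\ref{cl:staten} invoked whenever a side-information vector needs to be replaced by one of smaller parameter.
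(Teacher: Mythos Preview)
Your proposal is correct and follows essentially the same route as the paper's proof in Appendix~\ref{app:KuOB}: attach to each OBT node a virtual user with enhanced side information governed by $\eta_j[\ell,i]$, apply Fano's inequality at that node, invoke Lemma~\ref{lem:split} (together with Claim~\ref{cl:staten}) to split the residual conditional entropy between the two children with weight $\zeta[\ell,i]$, and telescope. One slip to fix when you write it out: in this paper's convention $\eta_j=1$ means the side information about $\msg_j$ is fully \emph{erased} (i.e., $\vec{\psi}_j^{(1)}=(\vec{0},\vec{0})$) and $\eta_j=0$ means $\msg_j$ is provided in full, so your sentence ``contains either $\msg_j$ (when $\eta_j[\ell,i]=1$), nothing (when $\eta_j[\ell,i]=0$)'' has the two cases reversed; the rest of your bookkeeping is consistent with the paper once this is corrected.
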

The proof of Theorem~\ref{thm:KuOB} may be found in Appendix~\ref{app:KuOB}. Here we remark on how the intuitions from Theorem~\ref{thm:3uOB} are extended to $K$ users.

\begin{remark}
Consider the 3-user bound with respect to the more general statement of Theorem~\ref{thm:KuOB}. Figure~\ref{fig:tree3} depicts the exact assignment of labels for a 3-user OBT that results in Theorem~\ref{thm:3uOB}.
\begin{figure}
\centering
\begin{tikzpicture}[yscale=1.35]
\node (a) at (0,0) [draw,thick,rounded corners] {$v[1,1] = i$};
\node (b) at (-1,-1) [draw,thick,rounded corners] {$v[2,1] = j$};
\node (c) at (1,-1) [draw,thick,rounded corners] {$v[2,2] = k$};
\node (d) at (-1,-2) [draw,thick,rounded corners] {$v[3,1] = k$};
\node (e) at (1,-2) [draw,thick,rounded corners] {$v[3,2] = j$};
\draw[thick,-latex] (a)--(b);
\draw[thick,-latex] (a)--(c);
\draw[thick,-latex] (b)--(d);
\draw[thick,-latex] (c)--(e);
\end{tikzpicture}
\caption{The OBT for Theorem~\ref{thm:3uOB}.}\label{fig:tree3}\end{figure}
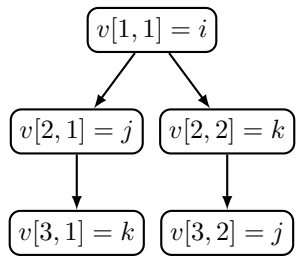

Recall that the construction of the outer bound in Theorem~\ref{thm:3uOB} began with applying Fano's inequality at User $i$ (i.e., the root node label of the OBT), and then applying (\ref{eq:lem:split:2}) of Lemma~\ref{lem:split}. Applying (\ref{eq:lem:split:2}) resulted in two terms $A$ and $B$ in (\ref{eq:3uprooffano2}), each of which was canceled by analysis of a different user with enhanced side information. This is reflected in the first case of (\ref{eq:KuOBrec1}), where in addition to the rate of the user associated with the node label, we have the quantities $\Gamma_A[\cdot]$ and $\Gamma_B[\cdot]$ associated with the expressions that will cancel $A$ and $B$, respectively. The scaling terms $\zeta[\ell,i]$ reflect the appropriate scaling terms needed for the cancellation; e.g., consider the final step in the proof of Theorem~\ref{thm:3uOB} where we took a weighted sum of (\ref{eq:3uprooffano2})--(\ref{eq:3uproof2B2}). The last quantity, $\eta_j[\ell,i]$, tracks the side information enhancement through each level of recursion.
\end{remark}

\begin{remark}
It is worth noting that the terms associated with the $K-1$-th layer of the OBT are special: This layer represents the ``base case'' of the recursion, and in the 3-user scenario, we reached this base case after only one application of (\ref{eq:lem:split:2}). At the $K-1$-th layer, instead of (\ref{eq:lem:split:2}) we apply (\ref{eq:lem:split:1}) which is reflected by the associated value of $\zeta[\ell,i]$ in (\ref{eq:KuOBrec3}).
\end{remark}

\begin{remark}
By evaluating Theorem~\ref{thm:KuOB} and comparing with the condition for achievability using conventional random coding (\ref{eq:randomnetworkrate}) we arrive at the following result:
\begin{proposition}\label{prop:KuSym}
Consider a $K$-user BIC where $\mu_{ij}=\mu$ for all $i\neq j$. The capacity region is the set of all rate tuples $(r_1,\ldots,r_K)$ satisfying for every $i\in\{1,\ldots,K\}$
\begin{align}
r_i + \mu\sum_{j\neq i} r_j \leq{}& 1\label{eq:KuSym}.
\end{align}
\end{proposition}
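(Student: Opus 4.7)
The plan is to dispatch both directions of the capacity claim using tools already developed in the paper. Achievability is immediate: conventional random coding achieves every rate tuple obeying (\ref{eq:randomnetworkrate}), which in the symmetric case $\mu_{ij}=\mu$ becomes precisely (\ref{eq:KuSym}). The substance of the argument is therefore the converse, which I would extract from Theorem~\ref{thm:KuOB} by showing that the $K$-user OBT recursion collapses dramatically under symmetry.

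The key structural observation I would prove first, by induction on $\ell$, is that $\eta_j[\ell,i]=\mu$ for every label $j$ that does not appear on the root-to-node path ending at $[\ell,i]$. The base case $\eta_j[1,1]=\mu$ for $j\neq v[1,1]$ is immediate from (\ref{eq:KuOBrec4}). For the inductive step, the tree constraints in Definition~\ref{def:obt} ensure that the ``is on path'' cases of (\ref{eq:KuOBrec4a}) (namely $j=v[\ell,i]$ or $j=v[\ell-1,\lceil i/2\rceil]$) do not trigger for off-path $j$; what remains is $\min\{\mu,\eta_j[\ell-1,\lceil i/2\rceil]\}=\min\{\mu,\mu\}=\mu$. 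Applying this, the left-child label $v[\ell+1,2i-1]$ is fresh, hence off-path at $[\ell,i]$, which yields $\eta_{v[\ell+1,2i-1]}[\ell,i]=\mu$. Combined with $\mu_{v[\ell+1,2i],v[\ell+1,2i-1]}=\mu$, equation (\ref{eq:KuOBrec3}) gives $\zeta[\ell,i]=[\mu-\mu]^+/(1-\mu)=0$ for every interior $\ell<K-1$, while $\zeta[K-1,i]=\eta_{v[K,i]}[K-1,i]=\mu$. The same off-path argument shows that the coefficient $\eta_{v[\ell,i]}[\ell-1,\lceil i/2\rceil]$ appearing in $\Gamma_B[\ell,i]$ equals $\mu$, because $v[\ell,i]$ is not on the path to its parent.

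With $\zeta[\ell,i]=0$ for $\ell<K-1$, recursions (\ref{eq:KuOBrec1}) and (\ref{eq:KuOBrec2}) prune the left subtree entirely, reducing to $\Gamma_A[\ell,i]=r_{v[\ell,i]}+\Gamma_B[\ell+1,2i]$ and $\Gamma_B[\ell,i]=\mu r_{v[\ell,i]}+\Gamma_B[\ell+1,2i]$, with base cases $\Gamma_A[K-1,i]=r_{v[K-1,i]}+\mu r_{v[K,i]}$ and $\Gamma_B[K-1,i]=\mu r_{v[K-1,i]}+\mu r_{v[K,i]}$. Unrolling along the rightmost path $[1,1]\to[2,2]\to[3,4]\to\cdots\to[K,2^{K-2}]$ yields
\begin{align*}
\Gamma_A[1,1]=r_{v[1,1]}+\mu\sum_{\ell=2}^{K}r_{v[\ell,2^{\ell-1}]},
\end{align*}
and since the labels along this path form a permutation of $\{1,\ldots,K\}$, this equals $r_{v[1,1]}+\mu\sum_{j\neq v[1,1]}r_j$. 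For each $i$, I would choose an OBT with $v[1,1]=i$ and the rightmost-path labels cycling through $\{1,\ldots,K\}\setminus\{i\}$ (the left-branch labels may be assigned in any way consistent with Definition~\ref{def:obt}; they do not affect the bound). Theorem~\ref{thm:KuOB} then yields $r_i+\mu\sum_{j\neq i}r_j\leq 1$ for each $i$, matching (\ref{eq:KuSym}). The main obstacle is purely bookkeeping: recognizing that symmetry forces every interior $\zeta$ to zero so that a single root-to-leaf path drives the entire bound.
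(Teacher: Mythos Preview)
Your proposal is correct and follows essentially the same approach as the paper: both obtain achievability from (\ref{eq:randomnetworkrate}) and the converse by showing that symmetry forces the OBT recursion to collapse onto a single root-to-leaf path. You are in fact more careful than the paper in distinguishing $\zeta[\ell,i]=0$ for $\ell<K-1$ from $\zeta[K-1,i]=\mu$; note only the minor indexing slip in your displayed sum, where the $\ell=K$ term should read $r_{v[K,2^{K-2}]}$ rather than $r_{v[K,2^{K-1}]}$, consistent with the rightmost path you wrote out just before.
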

\begin{proof}
Achievability results directly from evaluation of (\ref{eq:randomnetworkrate}). To prove the converse, we observe that when $\mu_{ij}=\mu$ for all $j\neq i$, for all $\ell$
\begin{align}
\zeta[\ell,i] = 0,
\end{align}
and if $\ell>1$,
\begin{align*}
\eta_j[\ell,i]
=
\begin{cases}
1 &\text{ if } j=v[\ell,i]\text{ and $i$ is odd}\cr
0 &\text{ if } j=v[\ell-1,\left\lceil\frac{i}{2}\right\rceil]\cr
\mu & \text{ otherwise }
\end{cases}.
\end{align*}
Evaluating recursively through the OBT yields
\begin{align*}
    \Gamma_A[1,1] = r_{v[1,1]} + \mu r_{v[2,2]} + \ldots \mu r_{v[K,2^K]} \leq{}1.
\end{align*}
Since the path from root to leaf is a permutation of user indices (i.e., all user indices are represented and there exist no repeats), we arrive at (\ref{eq:KuSym}).
\end{proof}
\end{remark}

\section{Numerical Results}\label{sec:num}

In this section we perform numerical analysis of inner and outer bounds to illustrate 1) the gain in achievable rate of hybrid coding over conventional random coding, and 2) the gap between our derived inner and outer bounds.

To limit the scope of possible configurations  (parameterized by $\mu_{ij}$ terms), we focus on two symmetric scenarios for a representative set of parameters. In the first scenario, we consider side information that is ``one-sided symmetric'' (i.e., network parameters such that $\mu_{ij}=\mu_{ik}$ for all $i\neq j \neq k$) while in the second, we consider side information that is ``pairwise symmetric'' (i.e., network parameters such that $\mu_{ij}=\mu_{ji}$ for all $i\neq j$). For each scenario, we will assume that, the size of side information at User~1 is the least and at User~3 is the most, and we plot the following:
\begin{enumerate}
\item The 3-user outer bound of Theorem~\ref{thm:3uOB}, applied to the symmetric rate.
\item The achieved symmetric rate of  the hybrid coding scheme described in Section~\ref{sec:achieve},
\item The \emph{grouped} random coding strategy (described at the beginning of Section~\ref{sec:example}) wherein first, a sufficient number of random equations are sent such that Users~2 and 3 can decode $\msg_2$ and $\msg_3$, and then $\msg_1$ is sent,
\item The conventional random coding strategy (also described at the beginning of Section~\ref{sec:example}) wherein a sufficient number of random equations are sent such that all users can decode all messages,
\end{enumerate}

In Figures~\ref{fig:q1a} and~\ref{fig:q2b}, we demonstrate the gap between our BIC inner and outer bounds while focusing on varying the amount of information at the user with the \emph{least} side information. Figure~\ref{fig:q1a} demonstrates the gap between inner and outer bounds on symmetric capacity for a one-sided symmetric BIC problem. In particular, we fix $\mu_{21}=\mu_{23}=\frac{1}{2}$ and $\mu_{31}=\mu_{32}=\frac{1}{3}$ and consider the impact of varying $\mu_{12}=\mu_{13}=a$ across the range from $\frac{1}{2}$ to 1.
Figure~\ref{fig:q2b} demonstrates the gap between inner and outer bounds on symmetric capacity for a pairwise symmetric BIC problem. In particular, we fix $\mu_{23}=\mu_{32}=\frac{1}{3}$ and $\mu_{13}=\mu_{31}=\frac{1}{2}$
and consider the impact of varying $\mu_{12}=\mu_{21}=a^\prime$ across the range from $\frac{1}{2}$ to 1.

In Figures~\ref{fig:q1c} and~\ref{fig:q2d}, we demonstrate the gap between our BIC inner and outer bounds while focusing on varying the amount of information at the user with the \emph{most} side information. Specifically, in Figure~\ref{fig:q1c} we look at a one-sided symmetric scenario and fix $\mu_{12}=\mu_{13}=\frac{1}{2}$ and $\mu_{21}=\mu_{23}=\frac{1}{3}$, while varying $\mu_{31}=\mu_{32}=c$ across a range from 0 to $\frac{1}{2}$, while in Figure~\ref{fig:q2d} we look at the pairwise symmetric scenario and fix $\mu_{12}=\mu_{21}=\frac{2}{3}$ and $\mu_{13}=\mu_{31}=\frac{1}{2}$, while varying $\mu_{23}=\mu_{32}=c^\prime$ across a range from 0 to $\frac{1}{2}$. 

In the two BIC problems depicted in Figures~\ref{fig:q1a} and~\ref{fig:q2b}, we point out that as the user with the least amount of side information loses even more side information (increasing $a$ or $b$), the rate achievable by conventional random codes decreases. At some point in each Figures~\ref{fig:q1a} and~\ref{fig:q2b}, it is in fact to better to apply a grouped random coding strategy and assume that User~1 will not attempt to decode $\msg_2$ and $\msg_3$. On the other hand, in the BIC problems depicted in Figures~\ref{fig:q1c} and~\ref{fig:q2d}, since amount of side information of the least knowledgeable user remains constant (i.e., $\mu_{12}$ and $\mu_{13}$ are fixed), the rate achieved by conventional random coding is constant across the range. 

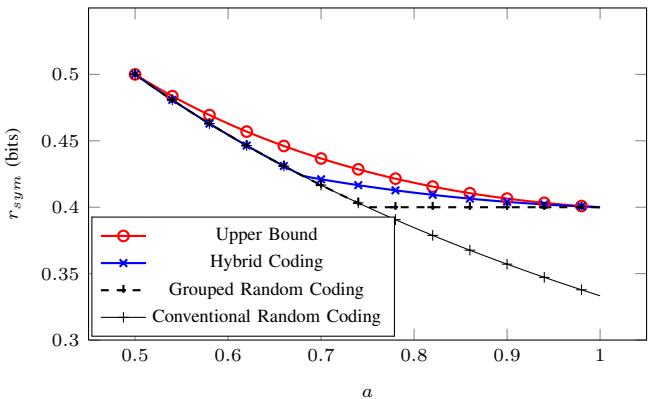
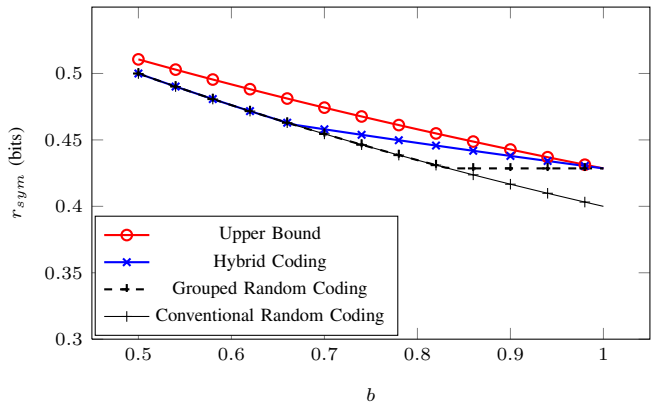
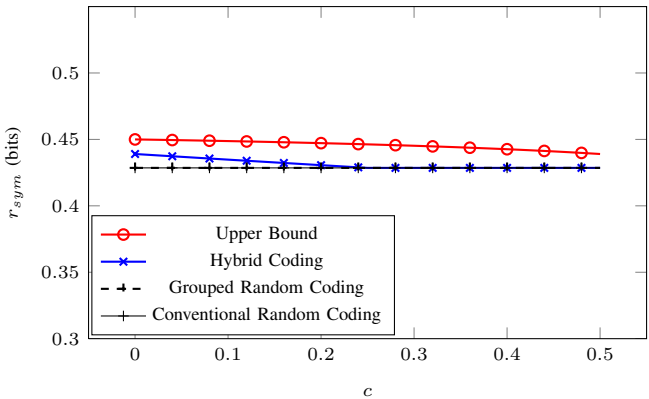
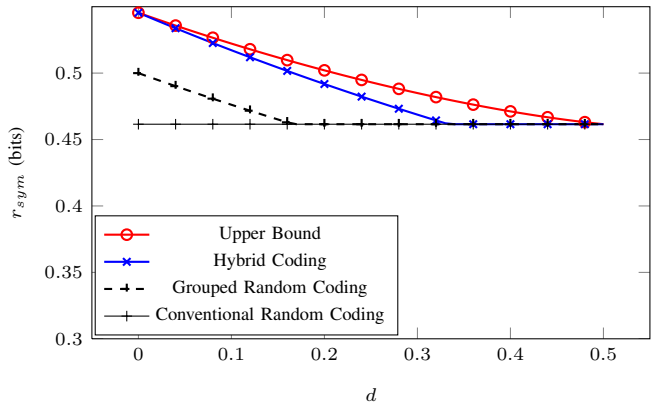
\begin{figure}[ht]
    \centering
\subfigure[$\mu_{12}=\mu_{13}=a$, $\mu_{21}=\mu_{23}=\frac{1}{2}$, and $\mu_{31}=\mu_{32}=\frac{1}{3}$ ]{
\begin{tikzpicture}[font=\scriptsize,%
    every axis/.style={
        ymax=0.55,%
        ymin=0.3,%
        ytick={0.3,0.35,...,0.55}}]
    \begin{axis}[
    width=9cm,
    height=6cm,
        xlabel=$a$,
        ylabel=$r_{sym}$ (bits),
        every axis y label/.style=
            {at={(ticklabel cs:0.5)},rotate=90,anchor=near ticklabel},
        legend style={at={(0.005,0.01)},anchor=south west}]
    \addplot[thick,color=red,mark=o,mark repeat=4]
        plot file {1a1213-UB.data};
    \addlegendentry{Upper Bound};
    \addplot[thick,color=blue,mark=x,mark repeat=4]
        plot file {1a1213-HC.data};
    \addlegendentry{Hybrid Coding};
    \addplot[color=black!50!black,mark=+,thick,dashed,mark repeat=4]
        plot file {1a1213-GRC.data};
    \addlegendentry{Grouped Random Coding};
    \addplot[color=black!50!black,mark=+,mark repeat=4]
        plot file {1a1213-RC.data};
    \addlegendentry{Conventional Random Coding};
    \end{axis}
\end{tikzpicture}\label{fig:q1a}
}
\subfigure[$\mu_{12}=\mu_{21}=b$, $\mu_{13}=\mu_{31}=\frac{1}{2}$, and $\mu_{23}=\mu_{32}=\frac{1}{3}$ ]{
\begin{tikzpicture}[font=\scriptsize,%
    every axis/.style={
        ymax=0.55,%
        ymin=0.3,%
        ytick={0.3,0.35,...,0.55}}]
    \begin{axis}[
    width=9cm,
    height=6cm,
        xlabel=$b$,
        ylabel=$r_{sym}$ (bits),
        every axis y label/.style=
            {at={(ticklabel cs:0.5)},rotate=90,anchor=near ticklabel},
        legend style={at={(0.005,0.01)},anchor=south west}]
    \addplot[thick,color=red,mark=o,mark repeat=4]
        plot file {2a1213-UB.data};
    \addlegendentry{Upper Bound};
    \addplot[thick,color=blue,mark=x,mark repeat=4]
        plot file {2a1213-HC.data};
    \addlegendentry{Hybrid Coding};
    \addplot[color=black!50!black,mark=+,thick,dashed,mark repeat=4]
        plot file {2a1213-GRC.data};
    \addlegendentry{Grouped Random Coding};
    \addplot[color=black!50!black,mark=+,mark repeat=4]
        plot file {2a1213-RC.data};
    \addlegendentry{Conventional Random Coding};
    \end{axis}
\end{tikzpicture}    \label{fig:q2b}
}\\
\subfigure[$\mu_{12}=\mu_{13}=\frac{2}{3}$, $\mu_{21}=\mu_{23}=\frac{1}{2}$, and $\mu_{31}=\mu_{32}=c$ ]{
\begin{tikzpicture}[font=\scriptsize,%
    every axis/.style={
        ymax=0.55,%
        ymin=0.3,%
        ytick={0.3,0.35,...,0.55}}]
    \begin{axis}[
    width=9cm,
    height=6cm,
        xlabel=$c$,
        ylabel=$r_{sym}$ (bits),
        every axis y label/.style=
            {at={(ticklabel cs:0.5)},rotate=90,anchor=near ticklabel},
        legend style={at={(0.005,0.01)},anchor=south west}]
    \addplot[thick,color=red,mark=o,mark repeat=4]
        plot file {1c1213-UB.data};
    \addlegendentry{Upper Bound};
    \addplot[thick,color=blue,mark=x,mark repeat=4]
        plot file {1c1213-HC.data};
    \addlegendentry{Hybrid Coding};
    \addplot[color=black!50!black,mark=+,thick,dashed,mark repeat=4]
        plot file {1c1213-GRC.data};
    \addlegendentry{Grouped Random Coding};
    \addplot[color=black!50!black,mark=+,mark repeat=4]
        plot file {1c1213-RC.data};
    \addlegendentry{Conventional Random Coding};
    \end{axis}
\end{tikzpicture}
    \label{fig:q1c}
}
\subfigure[$\mu_{12}=\mu_{21}=\frac{2}{3}$, $\mu_{13}=\mu_{31}=\frac{1}{2}$, and $\mu_{23}=\mu_{32}=d$ ]{
\begin{tikzpicture}[font=\scriptsize,%
    every axis/.style={
        ymax=0.55,%
        ymin=0.3,%
        ytick={0.3,0.35,...,0.55}}]
    \begin{axis}[
    width=9cm,
    height=6cm,
        xlabel=$d$,
        ylabel=$r_{sym}$ (bits),
        every axis y label/.style=
            {at={(ticklabel cs:0.5)},rotate=90,anchor=near ticklabel},
        legend style={at={(0.005,0.01)},anchor=south west}]
    \addplot[thick,color=red,mark=o,mark repeat=4]
        plot file {2c1213-UB.data};
    \addlegendentry{Upper Bound};
    \addplot[thick,color=blue,mark=x,mark repeat=4]
        plot file {2c1213-HC.data};
    \addlegendentry{Hybrid Coding};
    \addplot[color=black!50!black,mark=+,thick,dashed,mark repeat=4]
        plot file {2c1213-GRC.data};
    \addlegendentry{Grouped Random Coding};
    \addplot[color=black!50!black,mark=+,mark repeat=4]
        plot file {2c1213-RC.data};
    \addlegendentry{Conventional Random Coding};
    \end{axis}
\end{tikzpicture}
    \label{fig:q2d}
}
\caption{Inner and outer bounds on the symmetric capacity of example 3-user BIC problems: (a) One-sided side information symmetry, and (b) pairwise side information symmetry, while varying the least knowledgeable user's side information under; and
(c) one-sided side information symmetry, and (d) pairwise side information symmetry, while varying the most knowledgeable user's side information. }\label{fig:q}
\end{figure}

With the figures, we highlight the following observations about our inner and outer bounds:
\begin{enumerate}
    \item There exists a threshold for side information parameters where below this threshold, in the best hybrid coding strategy all three users decode all messages and thus the achieved rate is the same as conventional random codes. In particular, this is true for small $a$ and $b$ in Figures~\ref{fig:q1a} and \ref{fig:q2b} and larger $c$ and $d$ in Figures~\ref{fig:q1c} and \ref{fig:q2d}, respectively. However, beyond this threshold (larger $a$ and $b$ and smaller $c$ and $d$), we observe a clear potential for increased rate from hybrid codes. It is worth noting that the regimes where hybrid codes offer a rate increase are those further from the fully symmetric BIC problem (where all network parameters, $\mu_{ij}$, are the same). Recall that for the fully symmetric BIC problem the entire capacity region is achievable using conventional random coding (see Proposition~\ref{prop:KuSym}). 
    \item In Figures~\ref{fig:q1a} and~\ref{fig:q2b}, when $a=1$ or $b=1$ there exist no opportunities at all to exploit the side information at User~1. Hence, both hybrid coding and grouped random coding achieve the genie upper bound.
    \item Although there exists a gap between our inner and outer bounds, we highlight a specific case where our new hybrid coding scheme both provides strictly positive rate gain over conventional random coding \emph{and meets the new upper bound}: in Figure~\ref{fig:q2d} when $d=0$. This scenario is related to the one considered in the motivating example of Section~\ref{sec:example}, in the sense that Users~2 and 3 know each other's complete message as side information. 
\end{enumerate}

\section{Blind Index Coding over Wireless Channels}\label{sec:BICW}

In this section, we generalize the BIC problem model further to consider the impact of uncertainty not only within the side information given to users, but also in the sender-to-user broadcast channel (recall that in the BIC problem this channel was error free). In particular, we emulate loss of packetized transmissions due to fading in wireless channels using a binary fading model for the sender-to-user broadcast. Consequently, the problem considered here will be referred to as blind index coding over wireless channels (BICW). 

As we will see, considering wireless transmissions adds new challenges to the problem, and surprisingly repetition of uncoded bits (within the hybrid coding framework) will become a powerful technique for increasing achievable rate. Unlike the BIC problem considered in the previous sections, even the 2-user BICW problem is nontrivial. Hence, in this section we focus on a a 2-user problem representative of general BICW problems. After formally defining the representative problem, we define a hybrid coding scheme that not only XORs randomc combinations of some messages with uncoded bits of others, but also uses repetition of uncoded bits. We derive the achievable rate regions of these hybrid codes with repetitions, and then denomstrate numerically the resulting gain in achievable rate that our scheme provides over conventional methods.     

\subsection{Wireless Broadcast Channel Model}
\label{sec:BICWmod}
In the BICW scenario the channel output received by by User~$i$, $\vec{y}_i$, is governed by a binary fading process. Specifically, let $\vec{\gamma}_{i}$ be a binary vector with the same length as the channel input vector $\vec{x}$ and drawn i.i.d from a Bernoulli$(1-\epsilon_{i})$ distribution. The channel output for User~$i$ is given by the input-output relationship
\begin{align}
    y_{i}[\ell] = \gamma_{i}[\ell]{x}[\ell].
\end{align}
User~$i$ knows $\vec{\gamma}_{i}$, however the sender is only aware of parameters $\{\epsilon_{i}\}$, which govern the probabilistic behavior of the sender-to-user broadcast channel.

In this section, we assume the model depicted in Figure~\ref{fig:BICW}, containing only two users where $\epsilon_1 < \epsilon_2$, $\mu_{12}=1$, and $\mu_{21}=\mu$ (i.e., User~1 has a better channel than User~2 but no side information). 
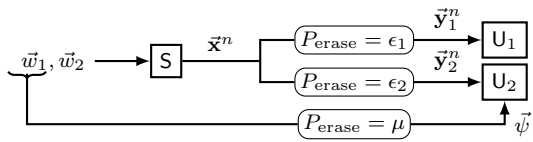
\begin{figure}[ht]
\centering
\begin{tikzpicture}[yscale=0.28,font=\footnotesize]
\node (m) at (0,0) [] {$\msg_1, \msg_2$};
\node (s) at (1.5,0) [draw,thick] {$\sender$};
\node (e1) at (4,1) [draw,rounded corners,inner sep=2pt] {$P_\mathrm{erase} = \epsilon_1$};
\node (e2) at (4,-1) [draw,rounded corners,inner sep=2pt] {$P_\mathrm{erase} = \epsilon_2$};
\node (eSI) at (4,-3) [draw,rounded corners,inner sep=2pt] {$P_\mathrm{erase} = \mu$};
\node (d1) at (6,1) [draw,thick] {$\user_1$};
\node (d2) at (6,-1) [draw,thick] {$\user_2$};
\draw[thick,-latex] (m) -- (s);
\draw[thick] (s) -- node[above] {$\vec{\mathbf{x}}^n$} (2.75,0);
\draw[thick] (2.75,0) |- (e1);
\draw[thick] (2.75,0) |- (e2);
\draw[thick,-latex] (e1) -- node[above] {$\vec{\mathbf{y}}_1^n$}(d1);
\draw[thick,-latex] (e2) -- node[above] {$\vec{\mathbf{y}}_2^n$}(d2);

\draw[decorate,decoration={brace},thick] (-0.1,-0.5) -- (-0.6,-0.5);

\draw[thick] (-0.35,-0.6) |- (eSI);
\draw[thick,-latex] (eSI) -| node[right] {$\vec{\mathbf{\psi}}$}(d2);
\end{tikzpicture}
\caption{2-user instance of the BICW problem.}\label{fig:model2}
\end{figure}

\begin{remark}
We assume that $\epsilon_1 < \epsilon_2$ and that side information was only given to User~2 (i.e., $\mu_{21}=1$) for ease of exposition. In all other 2-user settings (i.e., arbitrary $\epsilon_1$ and $\epsilon_2$ and side information at either user), either there is no index coding gain even if the sender knows the side information or the natural generalization of our proposed scheme recovers some index coding gain to outperform conventional approaches.
\end{remark}

Our main result for this setting is as follows.
\begin{theorem}\label{thm:HRC}
For the 2-user BICW problem defined above, the rate region $\mathcal{R}$ is achievable, where $\mathcal{R}$ is the set of all non-negative rate pairs $(r_1,r_2)$ satisfying,
\begin{align}
    r_1+r_2 \leq{}& 1-\epsilon_1,\label{eq:ratereg_A}\\
    \omega_1(L)r_1 +\omega_2(L) r_2 \leq{}& 1-\epsilon_2,\quad L=1,\ldots,L_{max}\label{eq:ratereg_B}
\end{align}
where
\begin{align}
\omega_1(L) ={}&  \frac{1-\epsilon_2}{1-\epsilon_1}\epsilon_1^{L} + \mu(1-\epsilon_2^{L})\omega_2(L) 
    + L(1-\epsilon_2)\left(1 -\omega_2(L)\right),\label{eq:weight1}\\
\omega_2(L) ={}& \min\left\{\frac{1-\epsilon_1^{L}}{1-\mu\epsilon_2^{L}},1\right\},\label{eq:weight2}\\
L_{max} \triangleq{}& 1+\left\lfloor\frac{\log(\mu)}{\log(\epsilon_1/\epsilon_2)}\right\rfloor.\label{eq:Lmax}
\end{align}
\end{theorem}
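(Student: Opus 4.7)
The plan is to prove Theorem~\ref{thm:HRC} by an achievability argument extending the hybrid coding of Section~\ref{sec:achieve}, but using \emph{repetition of uncoded bits} to cope with channel erasures. For each integer $L \in \{1, \ldots, L_{\max}\}$ I would exhibit a hybrid-with-repetition scheme in which each uncoded bit of one message is reused in $L$ consecutive transmissions, each XORed with a fresh random linear combination (RLC) of the other message, supplemented by pure RLCs to balance the equation counts at the two users. Time-sharing among per-$L$ schemes then achieves the entire convex region $\mathcal{R}$ described by (\ref{eq:ratereg_A})--(\ref{eq:ratereg_B}); the cutoff $L_{\max}$ arises because, beyond that value, $\omega_2(L)$ saturates at $1$ and any further repetition only inflates $\omega_1(L)$, so the associated constraint in (\ref{eq:ratereg_B}) is dominated by one already in the list.

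I would analyze decoding user-by-user, following the template of Section~\ref{sec:BIC:achieve}. For each user, the error event decomposes into (a) concentration events on the binomial counts of received repetitions and of side-information indicators, handled via Hoeffding/Chernoff, and (b) rank-deficiency of the random binary encoding matrix, handled via the bound of~\cite{MacKay2005}. The sum-rate bound $r_1 + r_2 \leq 1-\epsilon_1$ arises from User~1's analysis: since User~1 has no side information ($\mu_{12}=1$), every received hybrid transmission is a single linear equation jointly mixing $\msg_1$ and $\msg_2$, forcing User~1 to effectively decode both messages; $n(1-\epsilon_1)$ received symbols supply at most $n(1-\epsilon_1)$ independent equations in the joint message space, regardless of $L$. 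The User~2 analysis instead produces the $L$-indexed family of constraints: per group of $L$ repeated transmissions, User~2 harvests clean equations in the desired message through two mechanisms --- canceling the repeated uncoded bit using its side information (probability $1-\mu$) or XORing two received copies to eliminate it (contingent on the binomial survival count at User~2) --- and tallying the expected counts yields the weighted-sum $\omega_1(L) r_1 + \omega_2(L) r_2 \leq 1-\epsilon_2$.

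The main obstacle is producing the closed-form weights $\omega_1(L)$ and $\omega_2(L)$ from (\ref{eq:weight1})--(\ref{eq:weight2}), which arise from a coupled optimization across two regimes. The $\min\{\cdot,1\}$ in $\omega_2(L)$ marks the threshold between one regime in which User~2's decoding is bottlenecked by inadequate surviving side information plus received repetitions, and another in which it is bottlenecked by the total number of independent equations available. Correspondingly $\omega_1(L)$ splits into three additive contributions with clear operational meanings: $\frac{1-\epsilon_2}{1-\epsilon_1}\epsilon_1^L$ accounts for the pure-RLC padding required when an entire group is lost at User~1, $\mu(1-\epsilon_2^L)\omega_2(L)$ quantifies the effective degradation of side information at User~2 due to channel erasures, and $L(1-\epsilon_2)(1-\omega_2(L))$ represents the per-channel-use overhead of $L$-fold repetition in the unsaturated regime, vanishing precisely when $L \geq L_{\max}$. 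Once these weights are derived, verifying that the per-$L$ achievable pentagons glue together via time-sharing into exactly the intersection region $\mathcal{R}$ is a routine linear-algebraic check.
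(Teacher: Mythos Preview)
Your high-level plan is sound and lands on the same scheme family as the paper, but the execution differs from the paper's in two structural ways that are worth flagging.

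First, the paper does \emph{not} rely on time-sharing among integer-$L$ schemes. It introduces an additional continuous parameter $\alpha\in[0,1]$ (a fraction of $\msg_1$ bits that are repeated $L{+}1$ rather than $L$ times) and a random-coding fraction $\rho$, packaged as an ``RRP matrix'' $\mathbf{U}_1$ with parameters $(\rho_1,L_1,\alpha_1)$, while $\mathbf{U}_2$ is pure random ($\rho_2=1$). Lemma~\ref{lem:achieveBICW} gives the achievable region for fixed $(\rho_1,L_1,\alpha_1)$ as four explicit inequalities, and the proof then exhibits, for each $(r_1,r_2)\in\mathcal{R}$, an explicit optimal choice $(\rho_1^*,L_1^*,\alpha_1^*)$ under which those inequalities reduce exactly to (\ref{eq:ratereg_A}) and the tightest instance of (\ref{eq:ratereg_B}). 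The weights $\omega_1(L),\omega_2(L)$ drop out of this substitution algebraically; there is no separate ``coupled optimization across two regimes'' to solve. Your time-sharing route can be made to work, but you would still need to show that the per-$L$ pentagons (after optimizing over the analogue of $\rho$) have corner points matching those of $\mathcal{R}$, which is precisely what the $\alpha$ parameter accomplishes directly.

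Second, your User~1 analysis diverges from the paper's. You have User~1 decode $(\msg_1,\msg_2)$ jointly and obtain $r_1+r_2\le 1-\epsilon_1$ by counting received equations. The paper instead has User~1 decode \emph{successively}: User~1 first decodes $\msg_2$ by the \emph{same} repetition-cleaning mechanism you describe for User~2 (XORing two received copies of the same $\msg_1$ bit yields a clean RLC of $\msg_2$; User~1 simply lacks the side-information option), then peels $\msg_2$ and decodes $\msg_1$ from the random-coding block. Both routes yield $r_1+r_2\le 1-\epsilon_1$ after substituting the optimal $\rho_1^*$, but the paper's successive analysis is what produces the separate inequalities (\ref{eq:achJLA1}) and (\ref{eq:achJLA2}) that drive the choice of $\rho_1^*$ and hence the first term $\frac{1-\epsilon_2}{1-\epsilon_1}\epsilon_1^L$ in $\omega_1(L)$. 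Your interpretation of that term as ``padding required when an entire group is lost at User~1'' is morally right, but in the paper it emerges as the cost, at User~2, of the $\rho_1^*$ fraction of channel uses devoted to random coding of $\msg_1$ so that User~1 can finish decoding $\msg_1$ after removing $\msg_2$.
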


\begin{remark}
Notice that as $\epsilon_2\rightarrow 0$ (and by the assumption $\epsilon_2>\epsilon_1$, as $\epsilon_1\rightarrow 0$), the BICW problem reverts to a BIC problem. Moreover as $\epsilon_2\rightarrow 0$, $\omega_1(L)\rightarrow\mu$
and $\omega_2(L)\rightarrow 1$, resulting in the achievable region of rate pairs satisfying:
\begin{align*}
    r_1+r_2\leq{}& 1,\\
    \mu r_1+r_2\leq{}& 1,
\end{align*}
which is equivalent (given assumptions on $\mu_{12}$ and $\mu_{21}$) to the 2-user BIC capacity region (formally stated in Proposition~\ref{prop:2u}). 
\end{remark}

\subsection{Proof of Theorem~\ref{thm:HRC}}
\label{sec:ach}
This section is organized as follows. We first define the hybrid coding scheme by specifying a class of generator matrices which map length-$m$ message vectors to length-$n$ codewords, and which are parametrized by three quantities: $\rho$, $L$, and $\alpha$. For each $n$, the transmitter maps two messages, $\msg_1$ and $\msg_2$ to codewords using corresponding generator matrices (with different parameters), and XORs the two codewords to produce the channel input vector.
We then specify the method of decoding and establish the achievable rate region for our coding scheme when fixing the generator matrix parameters for all $n$. By doing so, we show that for any $(r_1,r_2)\in \mathcal{R}$ (as defined in Theorem~\ref{thm:HRC}) there exists a choice of parameters such that $(r_1,r_2)$ is achievable, thus proving Theorem~\ref{thm:HRC}.

\subsubsection{Encoding}

Our hybrid coding scheme encodes $\msg_1$ and $\msg_2$ separately and linearly, before combining the resulting codewords through bit-wise XOR. 
The codeword for each message is constructed in a manner similar to the component of BIC hybrid codes from the previous section specific to a single message component: uncoded repetitions of message bits are supplemented by a random linear combinations. The specific mapping from message to codeword is formalized in the following definition, parametrized for a given $n$ by three quantities $\rho$, $L$ and $\alpha$:
\begin{definition}[Repetition plus Random Parity (RRP) Matrix]
\label{def:RRP}
An $n\times m$ RRP matrix with parameters $\rho\in [0,1]$, $L\in\mathbb{N}$, and $\alpha\in[0,1]$ is a binary matrix, $\mathbf{U}$, with the form:
\begin{align}
\mathbf{U} ={}& \begin{bmatrix}
\mathbf{B}^\top &
\mathbf{A}_{1}^\top &
\ldots &
\mathbf{A}_{L+1}^\top &
\mathbf{0} 
\end{bmatrix}^\top,\label{eq:U1}
\end{align}
where\vspace{-0.25cm}
\begin{align*}
\mathbf{A}_\ell = \begin{cases}
    \mathbf{I}_{m} & \text{ if } \ell\leq L\cr
    [\mathbf{I}_{\alpha m}\quad \mathbf{0}] & \text{ else}
    \end{cases},
\end{align*}
and $\mathbf{B}$ is a $\rho n\times m$ matrix with entries drawn i.i.d. from $\bern$. For feasibility, we require that $\alpha m$ is an integer, and
\begin{align}
    (L+\alpha)\frac{m}{n}+\rho \leq 1. \label{eq:feasible}
\end{align}
\end{definition}

\begin{remark}
Simply stated, an RRP matrix maps a length-$m$ message vector to a length-$n$ codeword by repeating each uncoded message bit either $L$ or $L+1$ times. The parameter $\alpha$ specifies the fraction of bits repeated $L+1$ times, while $\rho$ specifies the proportion of length-$n$ codeword reserved for random linear coded parity. Inequality (\ref{eq:feasible}) ensures that $\mathbf{U}$ is an $n\times m$ matrix.

It is worth noting that in the hybrid encoding scheme described for 3-user (non-wireless) BIC, the mapping of message $\msg_1$, $\msg_2$, and $\msg_3$ to sequences before XOR (i.e., the individually colored bars in Figure~\ref{fig:3coding}) could be interpreted as RRP matrices. For $\msg_1$, we chose $L=\alpha=0$ and for messages $\msg_2$ and $\msg_3$ we chose $L=1$ and $\alpha=0$.
The use of RRP matrices with $L>1$ and $\alpha>0$ (i.e., the \emph{repetition} of uncoded message bits) is the key innovation to hybrid coding that enables higher rate in the wireless setting.
\end{remark}

Using the defined RRP matrices, we now describe the encoding scheme that maps messages $\msg_1$ and $\msg_2$ to a length-$n$ channel input vector.
Let $n$, $m_1^{(n)}$, and $m_2^{(n)}$ be given. For each $n$, let $\mathbf{U}_1$ be a $n\times m_1^{(n)}$ RRP matrix with parameters $(\rho_1,L_1,\alpha_1)$ and $\mathbf{U}_2$ be a $n\times m_2^{(n)}$ RRP matrix with parameters $(\rho_2,L_2,\alpha_2)$. 
The channel input vector $\vec{\mathbf{x}}^n$ is given by (assuming modulo-2 addition):
\begin{align*}
    \vec{\mathbf{x}}^n  
        ={}& \begin{bmatrix}\mathbf{U}_1 & \mathbf{U}_2\end{bmatrix}\begin{bmatrix}\msg_1 \\ \msg_2\end{bmatrix}\\
        ={}& \mathbf{U}_1\msg_1 + \mathbf{U}_2\msg_2.
\end{align*}

Figure~\ref{fig:BICWcoding} depicts an example hybrid encoding with repetitions for the 2-user BICW setting. In this particular example, $L=2$ and $\alpha=0.5$.
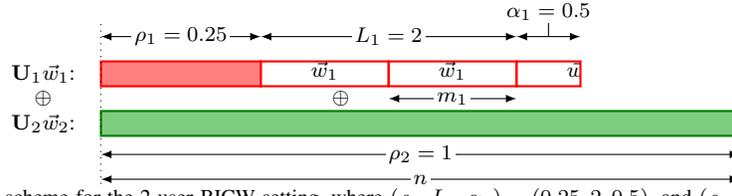
\begin{figure}
\centering
\begin{tikzpicture}[xscale=1.7,yscale=0.66,font=\footnotesize]

\filldraw [draw=green!50!black,fill=green!60!black!50,thick] (0,1) rectangle (5,1.5);
\draw (-0.45,1.25) node[] {$\mathbf{U}_2\msg_2$:\ };
\draw (-0.45,1.75) node[] {$\oplus$\ };

\draw (3.75,2.312) node[]{\footnotesize $\msg_1$};
\fill[white] (3.75,2) rectangle (4.75,2.5);
\filldraw [draw=red,fill=red!50,thick] (0,2) rectangle (1.25,2.5);
\draw [draw=red,thick] (1.25,2) rectangle (2.25,2.5);
\draw [draw=red,thick] (2.25,2) rectangle (3.25,2.5);
\draw [draw=red,thick] (3.25,2) rectangle (3.75,2.5);
\draw (1.75,2.312) node[]{\footnotesize $\msg_1$};
\draw (2.75,2.312) node[]{\footnotesize $\msg_1$};
\draw (-0.45,2.25) node[] {$\mathbf{U}_1\msg_1$:\ };

\draw[latex-latex,thin] (0,3) --node [fill=white,inner sep=1pt]{$\rho_1=0.25$} (1.25,3);
\draw[latex-latex,thin] (1.25,3) --node [fill=white,inner sep=1pt]{$L_1=2$} (3.25,3);
\draw[latex-latex,thin] (3.25,3) -- (3.75,3);
\draw[thin] (3.5,3) -- (3.5,3.5) node [fill=white,inner sep=1pt]{$\alpha_1=0.5$};
\draw[latex-latex,thin] (2.25,1.75) --node [fill=white,inner sep=1pt]{$m_1$} (3.25,1.75);
\draw[latex-latex,thin] (0,0.625) --node [fill=white,inner sep=1pt]{$\rho_2=1$} (5,0.625);
\draw[latex-latex,thin] (0,0.125) --node [fill=white,inner sep=1pt]{$n$} (5,0.125);
\draw[dotted,thin] (0,3.25) -- (0,0);
\draw (1.875,1.75) node {$\oplus$};
\draw[dotted,thin] (5,3.25) -- (5,0);
\end{tikzpicture}\vspace{-2.5ex}
\caption{An example hybrid coding scheme for the 2-user BICW setting, where $(\rho_1,L_1,\alpha_1)=(0.25,2,0.5)$, and $(\rho_2,L_2,\alpha_2)=(1,0,0)$. Outlined boxes represent uncoded bits, shaded boxes represent RLCs of a single message.
\vspace{-4.5ex}}\label{fig:BICWcoding}
\end{figure}

\subsubsection{Decoding}

We now specify the decoding strategy and then characterize the achievable rates for our scheme with fixed parameters $\rho_i$, $L_i$ and $\alpha_i$, $i=1,2$. In what follows, we choose $(\rho_2,L_2,\alpha_2) =(1,0,0)$ (i.e., User~2's generator matrix, $\mathbf{U}_2$, is a random matrix). Choosing parameters $(\rho_1,L_1,\alpha_1)$ is more nuanced and will be addressed within the analysis. For brevity, we will not explicitly analyze the error rates of our scheme for given $n$, but instead provide a sketch of the achievability proof using existing results for random linear codes over point-to-point erasure channels.

In our decoding strategy, User~1 first decodes $\msg_2$ and peels its interfering contribution from its received signal, and then decodes its desired message, $\msg_1$. User~2 only decodes $\msg_2$. We first describe decoding $\msg_2$ at each user.

Recall that the channel input at any time, $t$, is given by 
\mbox{$\mathbf{x}[t]=\mathbf{U}_1(t,:)\msg_1 + \mathbf{U}_2(t,:)\msg_2$},
where $\mathbf{U}_i(t,:)$ is the $t$-th row of generator matrix $\mathbf{U}_i$. 
The decoding strategy for $\msg_2$ used by both users is based on the following observation. If $t$ and $t^\prime\neq t$ both correspond to a repetition of the same message bit from $\msg_1$, then the modulo-2 sum of these yields
\mbox{$\mathbf{x}[t]+\mathbf{x}[t^\prime] = (\mathbf{U}_2(t,:)+\mathbf{U}_2(t^\prime,:))\msg_2$},
which is a random linear combination of only $\msg_2$ bits (since $\rho_2=1$). By this method we ``clean''  equations of $\msg_1$. User~2 has the additional option of using its side information to clean equations, which has the same essence. 

The cleaned random linear equations are used by each user in conjunction with those that by construction were only functions of $\msg_2$ (i.e., for those $t$ where in (\ref{eq:U1}) $\mathbf{U}_1(t,:)=0$) to decode $\msg_2$.
After decoding $\msg_2$, User~1 removes the contribution of $\msg_2$ from its received signal before decoding $\msg_1$. 
If any of these decodings fail, then an error occurs. We now claim that the decoding scheme yields the following achievable rates, proven in Appendix~\ref{app:lem:achieveBICW}:
\begin{lemma}\label{lem:achieveBICW}
Consider the 2-user BWIC problem defined by network parameters $\epsilon_1$, $\epsilon_2$, and $\mu$, and let $\rho_1\in[0,1]$, $L_1\in\mathbb{N}$, and $\alpha_1\in[0,1)$ be fixed. A rate pair $(r_1,r_2)$ is achievable if it satisfies,%
\begin{align}
r_1 \leq{}& \frac{1-\rho_1}{L_1+\alpha_1},\label{eq:achJLA0}\\
r_1 \leq{}& \rho_1\frac{1-\epsilon_1}{\epsilon_1^{L_1}-\alpha_1(\epsilon_1^{L_1}-\epsilon_1^{L_1+1})},\label{eq:achJLA1}\\
[1-\epsilon_1^{L_1}+\alpha_1(\epsilon_1^{L_1}-\epsilon_1^{L_1+1})]r_1+r_2 {}\leq{}& (1-\epsilon_1)(1-\rho_1),\label{eq:achJLA2}\\
\mu[1-\epsilon_2^{L_1}+\alpha_1(\epsilon_2^{L_1}-\epsilon_2^{L_1+1})]r_1+r_2 {}\leq{}& (1-\epsilon_2)(1-\rho_1).\label{eq:achJLA3}
\end{align}
\end{lemma}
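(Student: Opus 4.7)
The plan is to match each of the four inequalities to a distinct step of the decoding process, show that each step succeeds with probability tending to $1$ when the corresponding inequality is strict, and conclude by the union bound. Inequality (\ref{eq:achJLA0}) is just the Definition~\ref{def:RRP} feasibility condition (\ref{eq:feasible}) applied to $\mathbf{U}_1$, required so that the matrix has exactly $n$ rows. The remaining three correspond to (\ref{eq:achJLA2}) User~1 decoding $\msg_2$ via the cleaning procedure of the excerpt, (\ref{eq:achJLA3}) User~2 decoding $\msg_2$ via cleaning plus side information, and (\ref{eq:achJLA1}) User~1 decoding $\msg_1$ after subtracting the already-recovered $\mathbf{U}_2\msg_2$. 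The standard $m_i=nr_i-\delta_n$ convention with $\delta_n=o(n)$ is in force throughout.

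I would first formalize the cleaning mechanism. Because $(\rho_2,L_2,\alpha_2)=(1,0,0)$, every row of $\mathbf{U}_2$ is i.i.d.\ uniform in $\{0,1\}^{m_2}$, so XORing two received positions $t\neq t'$ that share the same uncoded $\msg_1$ bit produces the fresh uniform $\msg_2$-parity $(\mathbf{U}_2(t,:)+\mathbf{U}_2(t',:))\msg_2$. More generally, if bit $w_1[\ell]$ is received at $K_\ell$ of its repetition positions, a star-shaped XOR tree yields $\max(K_\ell-1,0)$ clean parities that are jointly i.i.d.\ uniform (a short linear-algebra check in $\GF$, since the map $(\mathbf{U}_2(t_1),\ldots,\mathbf{U}_2(t_{K_\ell}))\mapsto (\mathbf{U}_2(t_j)+\mathbf{U}_2(t_1))_{j\geq 2}$ is surjective with equal-size preimages); the parities from distinct bits are independent because they use disjoint $\mathbf{U}_2$ rows. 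A direct count then gives the expected number of clean $\msg_2$ equations at User~1 as the total received non-parity positions $(1-\epsilon_1)(1-\rho_1)n$ minus one per distinct received bit $\E[d]=m_1 d_1$, where $d_1=1-\epsilon_1^{L_1}+\alpha_1(\epsilon_1^{L_1}-\epsilon_1^{L_1+1})$ is the probability that a given $\msg_1$ bit is received in at least one position. At User~2 the bits whose side information is present (probability $1-\mu$ per bit, independent across bits) yield $K_\ell$ clean equations rather than $K_\ell-1$, so only bits missing from side information incur a cancellation cost, giving the expected clean count $(1-\epsilon_2)(1-\rho_1)n-\mu m_1 d_2$ with $d_2$ defined analogously. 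Dividing by $n$ and demanding these counts exceed $m_2$ reproduces (\ref{eq:achJLA2}) and (\ref{eq:achJLA3}) exactly.

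For $\msg_1$ decoding at User~1, once $\mathbf{U}_2\msg_2$ has been subtracted at all $(1-\epsilon_1)n$ received positions the residual system has coefficient matrix $\mathbf{U}_1(S,:)$ consisting of $(1-\epsilon_1)\rho_1 n$ uniform random parity rows, unit-vector rows for the $d\approx m_1 d_1$ distinct received uncoded bits, and zero rows from any received pure-$\msg_2$ positions. Projecting out the $d$-dimensional span of the unit vectors leaves the parity rows as $(1-\epsilon_1)\rho_1 n$ uniform random vectors in the residual $(m_1-d)$-dimensional space, so the full-rank-in-expectation condition $(1-\epsilon_1)\rho_1 n \geq m_1-\E[d]=m_1[\epsilon_1^{L_1}-\alpha_1(\epsilon_1^{L_1}-\epsilon_1^{L_1+1})]$ is exactly (\ref{eq:achJLA1}) after dividing by $n$.

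Finally I would run the two-part error analysis familiar from Section~\ref{sec:BIC:achieve}. For each of the three decoding steps I decompose its error event into (i) a count-deviation event where some random quantity (the $K_\ell$, the side-information masks at User~2, or the number of received parity positions) falls more than $\delta_n$ below its mean, and (ii) a rank-deficiency event conditional on all counts meeting their means. A Chernoff/LLN argument bounds (i) by $o(1)$ whenever the corresponding inequality is strict, and the random binary matrix bound from \cite{MacKay2005} used in the BIC proof bounds (ii) by $2^{-\Omega(n)}$ once the count slack is $\Omega(n)$. The main subtlety I expect is verifying the joint independence structure of the clean-$\msg_2$ matrix so that the MacKay bound applies: the $K_\ell-1$ XOR-tree equations from a given bit are i.i.d.\ uniform as above, contributions from different bits use disjoint $\mathbf{U}_2$ rows, and the side-information-cleaned rows at User~2 must remain independent of the XOR-tree rows. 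Once those independence checks are in hand, a union bound over the three decoding events completes the proof.
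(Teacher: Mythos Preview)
Your proposal is correct and takes essentially the same approach as the paper: both arguments match (\ref{eq:achJLA0}) to the RRP feasibility condition, (\ref{eq:achJLA2})--(\ref{eq:achJLA3}) to clean-equation counting for $\msg_2$ at each user, and (\ref{eq:achJLA1}) to decoding $\msg_1$ at User~1 after peeling $\msg_2$, followed by concentration plus a random-matrix rank argument. The only organizational difference is that you count clean $\msg_2$ equations per message bit (via $\max(K_\ell-1,0)$), whereas the paper counts per time slot (via the probability $\eta_i(t)$ that slot $t$ yields a clean equation); these sum to the same expected total, and your explicit attention to the joint independence of the cleaned rows is a point the paper treats as routine.
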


From Lemma~\ref{lem:achieveBICW}, it is clear that by considering the union or achievable rate pairs over all $(\rho_1,L_1,\alpha_1)$ we arrive at the rate region achievable by our schemes. Specifically, let $\mathcal{R}(\rho_1,L_1,\alpha_1)$ for $\rho_1\in[0,1]$, $L_1\in\mathbb{N}$, and $\alpha_1\in[0,1]$ be defined as the set of all pairs $(r_1,r_2)$ satisfying (\ref{eq:achJLA0})--(\ref{eq:achJLA3}), and we define a rate region: 
\begin{align}
\overline{\mathcal{R}}\triangleq \bigcup_{\rho_1,L_1,\alpha_1} \mathcal{R}(\rho_1,L_1,\alpha_1).\label{eq:BICW_ratereg_bar}
\end{align}

To complete the proof of Theorem~\ref{thm:HRC}, we now demonstrate that the region $\mathcal{R}$ (as defined in Theorem~\ref{thm:HRC}) is contained within $\overline{\mathcal{R}}$ (given in (\ref{eq:BICW_ratereg_bar})), and thus is achievable. To do so, we need only show that for every rate pair $(r_1,r_2)\in\mathcal{R}$, there exists parameters $(\rho_1,L_1,\alpha_1)$ such that (\ref{eq:achJLA0})--(\ref{eq:achJLA3}) are satisfied. We therefore fix $r_1$ to any value in the interval $[0,1-\epsilon_1]$, and choose parameters $\rho_1^*$, $L_1^*$, and $\alpha_1^*$ as
\begin{align}
L_1^* ={}& 
        \begin{array}[t]{cl}
        \maximize & \min\{L,L_{max}\}\\
        \subjectto & L\in\mathbb{N} \\
         & \frac{\epsilon_1^{L}}{1-\epsilon_1} r_1 \leq 1 - L r_1\end{array},\label{eq:opar2}\\
\alpha_1^* ={}& \begin{cases}
    0 & \text{ if }L_1^*=L_{max}\cr
    \frac{1-r_1\left(\frac{\epsilon_1^{L_1^*}}{1-\epsilon_1}+L_1^*\right)}{r_1(1-\epsilon_1^{L_1^*})} & \text{ if }L_1^*<L_{max}
\end{cases},\label{eq:opar3}\\
\rho_1^* ={}& \frac{\epsilon_1^{L_1^*}-\alpha_1^*(\epsilon_1^{L_1^*}-\epsilon_1^{L_1^*+1})}{1-\epsilon_1}r_1,\label{eq:opar1}
\end{align}
where $L_{max}$ is as defined in (\ref{eq:Lmax}).
Notice that given $r_1$, we first determine the appropriate $L_1^*$, then $\alpha_1^*$, then finally $\rho_1^*$ and that both (\ref{eq:achJLA0}) and (\ref{eq:achJLA1}) are satisfied by the chosen parameters.

Substituting these into (\ref{eq:achJLA2}) and (\ref{eq:achJLA3}), we see that $r_2$ is achievable if it satisfies both of the following inequalities:
\begin{align}
    r_2 
        \leq{}& (1-\epsilon_1)-\rho_1^*(1-\epsilon_1) 
            - r_1\left[1-\epsilon_1^{L_1^*}+\alpha_1^* (\epsilon_1^{L_1^*}-\epsilon_1^{L_1^*+1}) \right]\nonumber\\
        ={}& 1-\epsilon_1-r_1,\label{eq:thmproof0a}\\
    r_2 \leq{}& (1-\epsilon_2)-\rho_1^*(1-\epsilon_2) - r_1\mu\left[1-\epsilon_2^{L_1^*}+\alpha_1^* (\epsilon_2^{L_1^*}-\epsilon_2^{L_1^*+1}) \right]
            \nonumber\\
        ={}& (1-\epsilon_2)\left(1 - r_1\left[\frac{\epsilon_1^{L_1^*}}{1-\epsilon_1}+\mu\frac{1-\epsilon_2^{L_1^*}}{1-\epsilon_2} - \alpha_1^*\left(\epsilon_1^{L_1^*}-\mu\epsilon_2^{L_1^*}\right)\right]\right)
            \nonumber\\
        \stackrel{(a)}{=}{}& \frac{1-\epsilon_2-\omega_1(L_1^*)r_1}{\omega_2(L_1^*)},\label{eq:thmproof0b}
\end{align}
where in (a) we compared the evaluated expression with $\omega_1(L)$ and $\omega_2(L)$ as defined in (\ref{eq:weight1}) and (\ref{eq:weight2}) evaluated at $L_1=L_1^*$. We now point out that (\ref{eq:thmproof0a}) is equivalent to (\ref{eq:ratereg_A}) and (\ref{eq:thmproof0b}) is equivalent to (\ref{eq:ratereg_B}) evaluated at $L=L_1^*$. Moreover, since
\begin{align}
\frac{1-\epsilon_2-\omega_1(L_1^*)r_1}{\omega_2(L_1^*)}
    \geq{}& \min_L\ \frac{1-\epsilon_2-\omega_1(L)r_1}{\omega_2(L)},\label{eq:thmproof0}
\end{align}
and the right hand side of inequality represents the tightest version of (\ref{eq:ratereg_B}) for fixed $r_1$, we observe that any $(r_1,r_2)$ satisfying (\ref{eq:ratereg_A}) and (\ref{eq:ratereg_B}) for all $L\leq L_{max}$ (i.e., any $(r_1,r_2) \in \mathcal{R}$) is indeed achievable, thus completing the proof of Theorem~\ref{thm:HRC}.

\subsection{Numerical Results}
For blind index coding over wireless channels, we recall that the key difference was the usefulness of \emph{repeating} uncoded bits within the hybrid coding scheme. Therefore, we now provide numerical results for three BICW scenarios, characterized by $\epsilon_1$, $\epsilon_2$, and $\mu$. In each, we plot $\mathcal{R}$ and highlight regimes (along x-axes) wherein the number of repetitions used in our scheme increases. For each scenario, we point out the gain in $r_2$ offered by repetion-based hybrid codes over conventional schemes, and for further comparison we also depict rate regions achieved by: 1) Conventional random codes as defined in the beginning of Section~\ref{sec:BIC:achieve}, 2) Time-Division between separate random encoding of $\msg_1$ and $\msg_2$, and 3) the following genie-aided upper bound:
\begin{proposition}
For the 2-user BICW problem setting considered in Theorem~\ref{thm:HRC}, an achievable rate pair $(r_1,r_2)$ must satisfy
\begin{align}
    \max\left\{r_1 + r_2,\mu r_1 + \frac{1-\epsilon_1}{1-\epsilon_2} r_2\right\} \leq{}& 1-\epsilon_1.
\end{align}
\end{proposition}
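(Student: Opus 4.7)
The proposition asserts two cut-set-type inequalities, $r_1+r_2\leq 1-\epsilon_1$ and $\mu r_1 + \frac{1-\epsilon_1}{1-\epsilon_2} r_2 \leq 1-\epsilon_1$, and I would prove each by a genie-aided Fano argument that exploits the fact that since $\epsilon_1<\epsilon_2$, the output $\vec{y}_2$ can be coupled as a degraded version of $\vec{y}_1$ through an independent erasure of parameter $\delta=(\epsilon_2-\epsilon_1)/(1-\epsilon_1)$ (this coupling changes no marginal mutual information and is WLOG). In both parts the strategy is to reduce the sum of per-user Fano inequalities to a single quantity of the form $I(\vec{x};\vec{y}_1\mid\cdot)$, which is then bounded by $(1-\epsilon_1)n$ via the standard BEC capacity estimate, applied for each realization of the conditioning variable.

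For the first inequality I would give User~2 the genie $\msg_1$. Fano yields $nr_2 \leq I(\msg_2;\vec{y}_2\mid\msg_1) + o(n)$, and the data processing inequality along the degraded coupling upgrades this to $nr_2 \leq I(\msg_2;\vec{y}_1\mid\msg_1) + o(n)$. Combining with the direct Fano bound at User~1, $nr_1 \leq I(\msg_1;\vec{y}_1) + o(n)$, and using the chain rule gives $n(r_1+r_2) \leq I(\msg_1,\msg_2;\vec{y}_1) + o(n) = I(\vec{x};\vec{y}_1) + o(n) \leq (1-\epsilon_1)n + o(n)$.

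For the second inequality I would invoke the strong data processing inequality for the BEC---the same tool that drives Lemma~\ref{lem:split}---which for the coupled degraded channel has contraction constant $s^*=(1-\epsilon_2)/(1-\epsilon_1)$. Applied conditionally with $W=\msg_2$ under the genie $\msg_1$, this converts the User~2 Fano bound into $\frac{1-\epsilon_1}{1-\epsilon_2}\,n r_2 \leq I(\msg_2;\vec{y}_1\mid\msg_1) + o(n)$. On the User~1 side I would use that $H(\msg_1\mid\vec{\psi})=\mu m_1$ holds exactly (because, given $\vec{g}_{21}$, the side information deterministically reveals the bits at positions with $g_{21}[\ell]=1$), together with Fano at User~1, $H(\msg_1\mid\vec{y}_1,\vec{\psi})\leq H(\msg_1\mid\vec{y}_1)\leq o(n)$, to obtain $\mu n r_1 \leq I(\msg_1;\vec{y}_1\mid\vec{\psi}) + o(n)$.

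Summing these and telescoping finishes the argument. The crucial simplification is the identity $H(\vec{y}_1\mid\vec{\psi},\msg_1)=H(\vec{y}_1\mid\msg_1)$, valid because $\vec{\psi}$ is a deterministic function of $\msg_1$ and the exogenous randomness $\vec{g}_{21}$ (which is independent of $\vec{y}_1$ given $\msg_1$); this makes the two cross terms cancel and collapses the right-hand side to $H(\vec{y}_1\mid\vec{\psi})-H(\vec{y}_1\mid\msg_1,\msg_2)=I(\vec{x};\vec{y}_1\mid\vec{\psi})\leq (1-\epsilon_1)n$, which after dividing by $n$ yields the claim. I expect the main obstacle to be the careful handling of the conditional SDPI for the BEC with a random conditioning variable, together with verifying that the degraded coupling and the side-information model respect the requisite Markov structure; both ultimately reduce to the independent-erasure character of the two randomness sources.
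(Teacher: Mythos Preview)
Your proof is correct. For the first inequality your argument (genie $\msg_1$ to User~2, degraded coupling, chain rule into $I(\msg_1,\msg_2;\vec{y}_1)$) is essentially the paper's, which conditions on $\msg_1^+$ rather than all of $\msg_1$ but follows the same degraded-broadcast logic.

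For the second inequality the two proofs take genuinely different routes. The paper hands a genie to the \emph{sender} revealing which bits of $\msg_1$ sit in User~2's side information, writes Fano at each user in terms of the split $(\msg_1^+,\msg_1^-)$, and then links the two users' entropies via an external lemma from~\cite{VMA2014} of the form $H(\vec{y}_2\mid\vec{\gamma}_2,V)\ge\frac{1-\epsilon_2}{1-\epsilon_1}H(\vec{y}_1\mid\vec{\gamma}_1,V)$, valid when the encoder is blind to the channel states. You instead apply the BEC strong data processing inequality directly to the degraded coupling $\vec{y}_1\!\to\!\vec{y}_2$ (contraction $s^*=(1-\epsilon_2)/(1-\epsilon_1)$, tensorized and conditioned on $\msg_1$), then telescope using the exact identity $H(\msg_1\mid\vec{\psi})=\mu m_1$ and the cancellation $H(\vec{y}_1\mid\vec{\psi},\msg_1)=H(\vec{y}_1\mid\msg_1)$. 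Your version is more self-contained---it reuses precisely the SDPI machinery behind Lemma~\ref{lem:split}, just applied to the broadcast channel rather than the side-information channel---and avoids the sender-side genie altogether. The paper's version, in exchange, makes explicit that the bound survives even when the sender is \emph{not} blind to the side information, which is why it earns the label ``genie-aided''. One small notational point: your Fano step at User~1 should carry $\vec{\gamma}_1$ in the conditioning (the user observes $(\vec{y}_1,\vec{\gamma}_1)$), but this does not affect any of your inequalities.
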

\begin{proof}
The bound may be separated into two outer bounds that correspond to the first and second terms within the $\max$, respectively:
\begin{itemize}
    \item $r_1+r_2 \leq 1-\epsilon_1$,
    \item $\mu \frac{1-\epsilon_2}{1-\epsilon_1} r_1 + r_2 \leq 1-\epsilon_2$.
\end{itemize}
Denote the subvector of $\msg_1$ given as side information as $\msg_1^+$ and the complementary subvector as $\msg_1^-$.  We prove the first bound by applying Fano's inequality at each user to observe:
\begin{align}
nr_1 \leq{}& I\left(\vec{y}_1,\vec{\gamma}_1;\msg_1\right) + o(n)\nonumber\\
    ={}& I\left(\vec{y}_1,\vec{\gamma}_1;\msg_1\right) + o(n)\nonumber\\
    ={}& H\left(\vec{y}_1|\vec{\gamma}_1\right) - H\left(\vec{y}_1|\vec{\gamma}_1,\msg_1\right)  + o(n)\nonumber\\
    \stackrel{(a)}{\leq}{}& H\left(\vec{y}_1|\vec{\gamma}_1\right) - H\left(\vec{y}_2|\vec{\gamma}_2,\msg_1\right)  + o(n)\nonumber\\
    \leq{}& H\left(\vec{y}_1|\vec{\gamma}_1\right) - H\left(\vec{y}_2|\vec{\gamma}_2,\msg_1^+\right)  + o(n)\nonumber\\
    \leq{}& n(1-\epsilon_1) - H\left(\vec{y}_2|\vec{\gamma}_2,\msg_1^+\right)  + o(n), \label{eq:genieprooffano1}\\
nr_2 \leq{}& I\left(\vec{y}_2,\vec{\gamma}_2,\sinfo_{21},\vec{g}_{21};\msg_2\right) + o(n)\nonumber\\
    ={}&H\left(\vec{y}_2|\vec{\gamma}_2,\sinfo_{21},\vec{g}_{21}\right)- H\left(\vec{y}_2|\vec{\gamma}_2,\sinfo_{21},\vec{g}_{21},\msg_2\right) + o(n)\nonumber\\
    \leq{}&H\left(\vec{y}_2|\vec{\gamma}_2,\sinfo_{21},\vec{g}_{21}\right) + o(n)\nonumber\\
    ={}&H\left(\vec{y}_2|\vec{\gamma}_2,\msg_1^+\right) + o(n), \label{eq:genieprooffano2}
\end{align}
where in step (a) we observed that because the sender does not know the fading channel state of the sender-to-user channel. We complete the proof of the first bound by combining (\ref{eq:genieprooffano1}) and (\ref{eq:genieprooffano2}) and normalizing by $n$ as $n$ grows large.

To prove the second bound, we consider a genie which provides the sender of knowledge regarding which bits of $\msg_1$ are given as side information to User~2. We again applying Fano's inequality at each user, but in a different way, to observe
\begin{align}
n\mu r_1 
    \leq{}& I\left(\vec{y}_1,\vec{\gamma}_1;\msg_1^- \right) + o(n)\nonumber\\
    \leq{}& I\left(\vec{y}_1,\vec{\gamma}_1,\msg_1^+,\msg_2;\msg_1^- \right) + o(n)\nonumber\\
    \leq{}& H\left(\vec{y}_1\middle| \vec{\gamma}_1,\msg_1^+,\msg_2\right) + o(n),\label{eq:genieprooffano3}\\
nr_2 \leq{}& I\left(\vec{y}_2,\vec{\gamma}_2,\sinfo_{21},\vec{g}_{21};\msg_2\right) + o(n)\nonumber\\
    ={}&H\left(\vec{y}_2|\vec{\gamma}_2,\sinfo_{21},\vec{g}_{21}\right)- H\left(\vec{y}_2|\vec{\gamma}_2,\sinfo_{21},\vec{g}_{21},\msg_2\right) + o(n)\nonumber\\
    \leq{}& n(1-\epsilon_2) - H\left(\vec{y}_2|\vec{\gamma}_2,\msg_1^+,\msg_2\right) + o(n),\nonumber\\
    \stackrel{(b)}{\leq}{}& n(1-\epsilon_2) - \frac{1-\epsilon_1}{1-\epsilon_2}H\left(\vec{y}_1|\vec{\gamma}_1,\msg_1^+,\msg_2\right) + o(n),\label{eq:genieprooffano4}
\end{align}
where in step (b) we applied Lemma~1 of~\cite{VMA2014} which when applied to our problem states that (because the sender does not know the binary fading channel states $\{\vec{\gamma}_i\}$),
\begin{align*}
    H\left(\vec{y}_2|\vec{\gamma}_2,\msg_1^+,\msg_2\right)\geq{}&\frac{1-\epsilon_1}{1-\epsilon_2}H\left(\vec{y}_1|\vec{\gamma}_1,\msg_1^+,\msg_2\right).
\end{align*}
To complete the proof of the second outer bound, we scale (\ref{eq:genieprooffano3}) by $\frac{1-\epsilon_2}{1-\epsilon_1}$ and combine with (\ref{eq:genieprooffano4}).
\end{proof}

In Figure~\ref{fig:p1}, $\epsilon_1=\frac{1}{2}$, $\epsilon_2=\frac{3}{4}$, $\mu=\frac{1}{2}$ notice that when $r_1$ is near the point-to-point capacity of 0.5, hybrid coding recovers all of the available index coding gain. This is because when $r_1$ is near 0.5, the primary challenge is not blindly exploiting side information, but rather accounting for interference incurred at User~1. For this set of network parameters, we point out that for any fixed value of $r_1$, hybrid coding offers at least 62\% of the available index coding gain.

In Figure~\ref{fig:p2}, $\epsilon_1=\frac{1}{2}$, $\epsilon_2=\frac{9}{10}$, $\mu=\frac{1}{10}$ we consider a BICW setting where side information is plentiful (User~2 knows 90\% of $\msg_1$). In this case, $L_{max}=4$ and the piece-wise linear boundary of the hybrid coding achievable rate region has more linear segments, with segments corresponding to the number of repetitions used. For this setting and for any fixed $r_1$, HRC always achieves at least 68\% of the available index coding gain.

Finally, in Figure~\ref{fig:p3}, $\epsilon_1=\frac{1}{2}$, $\epsilon_2=\frac{3}{4}$, $\mu=\frac{9}{10}$ we consider a BICW setting with very little side information (User~2 knows 10\% of $\msg_1$). In this case, $L_{max}=1$ and from the figure, it is apparent that although any index coding gain is modest, it is still strictly positive for all $r_1\notin\{0,1-\epsilon_1\}$.

\begin{figure}[ht]
    \centering
\subfigure[ $\epsilon_1=\frac{1}{2}$, $\epsilon_2=\frac{3}{4}$, $\mu=\frac{1}{2}$]{
\begin{tikzpicture}[xscale=13,yscale=13,font=\scriptsize]
\filldraw[draw=black,fill=white] (0,0.25) -- node [above,pos=0.7,sloped]{Genie-Aided Upper Bound} (0.3333,0.1667) -- (0.5,0) -- (0,0) -- cycle;
\filldraw[draw=red,fill=red!4!white] (0,0.25) --  (0.39,0.11)node [anchor=north east,red,inner sep=1pt]{Hybrid} -- (0.5,0)  -- (0,0) -- cycle;
\filldraw[draw=green!70!black,dashed,fill=green!70!black!8!white] (0,0.25) -- (0.5,0) -- (0,0) -- cycle;
\node at(0.255,0.075)[above,green!70!black,inner sep=2pt] {Random Code};
\node at(0.255,0.075)[green!70!black,inner sep=2pt] {+};
\node at(0.255,0.075)[below,green!70!black,inner sep=2pt] {Time Division};
\filldraw[draw=blue,dotted,fill=blue!10!white] (0,0.25) -- (0.25,0) -- (0.5,0) -- (0,0) -- cycle;
\node at(0.1,0.02)[above,blue,inner sep=2pt] {Conventional Random Code};
\draw[] (0,0) rectangle (0.55,0.3);
\node at (0.275,-0.05)[below] {$r_1$};
\node at (-0.05,0.15)[rotate=90,above] {$r_2$};
\draw[] (0.1,0.01) -- (0.1,-0.01) node [below] {0.1};
\draw[] (0.2,0.01) -- (0.2,-0.01) node [below] {0.2};
\draw[] (0.3,0.01) -- (0.3,-0.01) node [below] {0.3};
\draw[] (0.4,0.01) -- (0.4,-0.01) node [below] {0.4};
\draw[] (0.5,0.01) -- (0.5,-0.01) node [below] {0.5};
\draw[] (0.01,0.1) -- (-0.01,0.1) node [left] {0.1};
\draw[] (0.01,0.2) -- (-0.01,0.2) node [left] {0.2};
\draw[] (0.01,0.3) -- (-0.01,0.3) node [left] {0.3};
\draw[thick,dotted] (0,0.325) -- (0,0);
\draw[thick,dotted] (0.4,0.325) -- (0.4,0);
\draw[thick,dotted] (0.5,0.325) -- (0.5,0);
\node at (0.45,0.32)[]{$L_1^*=1$};
\draw[latex-latex] (0,0.32) -- node [fill=white,inner sep=1pt]{$L_1^*=2$} (0.4,0.32);

\end{tikzpicture}
    \label{fig:p1}
}\\ 
\subfigure[$\epsilon_1=\frac{1}{2}$, $\epsilon_2=\frac{9}{10}$, $\mu=\frac{1}{10}$]{
\begin{tikzpicture}[xscale=13,yscale=26,font=\scriptsize]
\draw[] (0,0.1) --  (0.4082,0.0918) -- (0.5,0) -- (0,0) -- cycle;
\filldraw [draw=red,fill=red!4!white] (0,0) -- (0,0.1) -- (0.246154,0.088) -- (0.32,0.082426) -- (0.4,0.0724) -- (0.438,0.062) -- (0.5,0);
\filldraw [dashed,draw=green!70!black!80!white,fill=green!70!black!8!white] (0,0) -- (0,0.1) -- (0.5,0) -- cycle;
\filldraw [dotted,draw=blue!80!white,fill=blue!10!white] (0,0) -- (0,0.1) -- (0.1,0) -- cycle;
\draw[dotted,thick] (0.5,0.1325) -- (0.5,0);
\draw[dotted,thick] (0.4,0.1325) -- (0.4,0);
\draw[dotted,thick] (0.32,0.1325) -- (0.32,0);
\draw[dotted,thick] (0.246154,0.1325) -- (0.246154,0);
\draw[dotted,thick] (0,0.1325) -- (0,0);
\node at (0.45,0.13)[]{$L_1^*=1$};
\node at (0.36,0.13)[]{$L_1^*=2$};
\node at (0.282,0.13)[]{$L_1^*=3$};
\draw[latex-latex] (0,0.13) -- node [fill=white,inner sep=1pt]{$L_1^*=4$} (0.246154,0.13);

\draw[] (0,0) rectangle (0.55,0.12);
\node at (0.275,-0.025)[below] {$r_1$};
\node at (-0.05,0.06)[rotate=90,above] {$r_2$};

\draw[] (0.1,0.005) -- (0.1,-0.005) node [below] {0.1};
\draw[] (0.2,0.005) -- (0.2,-0.005) node [below] {0.2};
\draw[] (0.3,0.005) -- (0.3,-0.005) node [below] {0.3};
\draw[] (0.4,0.005) -- (0.4,-0.005) node [below] {0.4};
\draw[] (0.5,0.005) -- (0.5,-0.005) node [below] {0.5};
\draw[] (0.01,0.05) -- (-0.01,0.05) node [left] {0.05};
\draw[] (0.01,0.1) -- (-0.01,0.1) node [left] {0.10};
\end{tikzpicture}
\label{fig:p2}
}\\ 
\subfigure[$\epsilon_1=\frac{1}{2}$, $\epsilon_2=\frac{3}{4}$, $\mu=\frac{9}{10}$]{
\begin{tikzpicture}[xscale=13,yscale=13,font=\scriptsize]
\filldraw[draw=black,fill=white] (0,0.25) -- (0.45455,0.04545) -- (0.5,0) -- (0,0) -- cycle;
\filldraw[draw=red,fill=red!4!white] (0,0.25) --  (0.47619,0.02381) -- (0.5,0)  -- (0,0) -- cycle;
\filldraw[draw=green!70!black,dashed,fill=green!70!black!8!white] (0,0.25) -- (0.5,0) -- (0,0) -- cycle;
\filldraw[draw=blue,dotted,fill=blue!10!white] (0,0.25) -- (0.25,0) -- (0.5,0) -- (0,0) -- cycle;
\draw[] (0,0) rectangle (0.55,0.3);
\node at (0.275,-0.05)[below] {$r_1$};
\node at (-0.05,0.15)[rotate=90,above] {$r_2$};
\draw[] (0.1,0.01) -- (0.1,-0.01) node [below] {0.1};
\draw[] (0.2,0.01) -- (0.2,-0.01) node [below] {0.2};
\draw[] (0.3,0.01) -- (0.3,-0.01) node [below] {0.3};
\draw[] (0.4,0.01) -- (0.4,-0.01) node [below] {0.4};
\draw[] (0.5,0.01) -- (0.5,-0.01) node [below] {0.5};
\draw[] (0.01,0.1) -- (-0.01,0.1) node [left] {0.1};
\draw[] (0.01,0.2) -- (-0.01,0.2) node [left] {0.2};
\draw[] (0.01,0.3) -- (-0.01,0.3) node [left] {0.3};
\draw[thick,dotted] (0,0.325) -- (0,0);
\draw[thick,dotted] (0.5,0.325) -- (0.5,0);
\draw[latex-latex] (0,0.32) -- node [fill=white,inner sep=1pt]{$L_1^*=1$} (0.5,0.32);
\end{tikzpicture}
\label{fig:p3}
}
\caption{Rate regions achieved by different schemes --- Conventional random codes (blue), time-division between separate random codes (green), hybrid coding (red), and genie-aided (non-blind) index coding (white) --- for three different 2-user BICW problems. The number of repetitions used in the hybrid coding scheme is stated along the $x$-axis. (a) For this setting, $L_{max}=2$; (b) For this setting, $L_{max}=4$ and we have emphasized using dashed lines bounds (\ref{eq:ratereg_A}) and (\ref{eq:ratereg_B}) for all $L$ that comprise the boundary of $\mathcal{R}$; (c) For this setting, $L_{max}=1$ and notice even with very little side information, our hybrid coding scheme strictly outperforms conventional schemes.}
    \label{fig:BICW}
\end{figure}
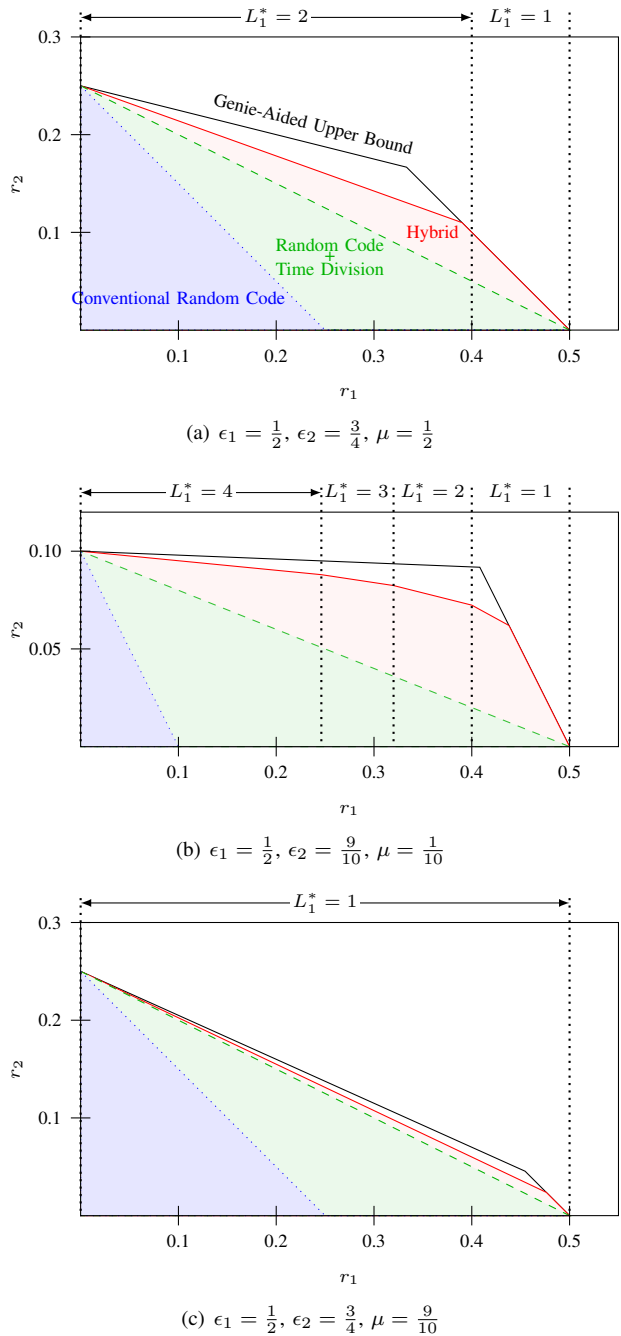

From the scenarios depicted i Figure~\ref{fig:BICW}, we make the following unifying conclusions:
\begin{enumerate}
\item Regardless of the network parameters ($\epsilon_1$, $\epsilon_2$, and $\mu$) hybrid coding always increases the achievable rate region.
\item If we consider a fixed $r_1$, the number of repetitions used in the hybrid encoding scheme increases when User~2 has a weaker channel and more side information (i.e., $\epsilon_2$ grows larger and $\mu$ grows smaller).
\item Hybrid coding can be capacity achieving, as seen on the boundary of the rate regions in all three figures when $r_1$ is close to it's maximum. 
\end{enumerate}

\section{Concluding Remarks}\label{sec:concl}

In this paper, we introduced a generalization of index coding called \emph{blind index coding}, which captures key issues in distributed caching and wireless settings. We demonstrated that the BIC problem introduces novel and interesting challenges that require new analytical tools through three main contributions: 1) we proposed a class of hybrid coding schemes which mix uncoded bits of a subset of messages with randomly linear combinations of other messages, 2) we presented new outer bounds that leveraged a lemma based on a strong data processing to capture the lack of knowledge at the sender, and 3) we demonstrated that in scenarios where the sender-to-user channel is not error-free (specifically, a wireless binary fading channel) repetition of uncoded bits within hybrid codes can further increase the achievable rate.

To further emphasize the importance of analyzing BIC problems, we refer the reader Figure~\ref{fig:2hEBC} which depicts the setting considered~\cite{KMA2014:isit}, which itself was a specific case in the broader class of multiple unicast and multiple multicast problems in wireless erasure networks~\cite{DGPHE2006}. Such problems consider the communication of multiple distinct messages to different users in a wireless network over probabilistic lossy links. 

\begin{figure}[ht]
\centering\vspace{-0.25cm}
    \begin{tikzpicture}[scale=2,font=\footnotesize]
        \node (w1) at (-0.3,0) [anchor=south east] {$\msg_1$};
        \node (w2) at (-0.3,0) [anchor=north east] {$\msg_2$};
        \node (s) at (0,0) [draw,thick] {$\mathsf{S}$};
        \node (r1a) at (1.82,0.7) [draw,circle,inner sep=1pt] {};
        \node (r2a) at (1.82,-0.7) [draw,circle,inner sep=1pt] {};
        \node (r1) at (2,0.7) [draw,thick] {$\mathsf{R}_1$};
        \node (r2) at (2,-0.7) [draw,thick] {$\mathsf{R}_2$};
        \node (d1) at (4.58,0.6) [draw,thick] {$\mathsf{D}_1$};
        \node (d1a) at (4.4,0.7) [draw,circle,inner sep=1pt] {};
        \node (d1b) at (4.4,0.5) [draw,circle,inner sep=1pt] {};
        \node (d2) at (4.58,-0.6) [draw,thick] {$\mathsf{D}_2$};
        \node (d2a) at (4.4,-0.7) [draw,circle,inner sep=1pt] {};
        \node (d2b) at (4.4,-0.5) [draw,circle,inner sep=1pt] {};
        \draw[dotted,thick] (s) -- node[above,pos=0.4]{$\vec{u}$} (0.82,0);
        \draw[-latex,dotted,thick] (0.82,0) --node[sloped,draw,solid,fill=white,inner sep=2pt,rounded corners]{$\epsilon_1$} (r1a) node[anchor=south east]{$\vec{v}_1$};
        \draw[-latex,dotted,thick] (0.82,0) --node[sloped,draw,solid,fill=white,inner sep=2pt,rounded corners]{$\epsilon_1$} (r2a) node[anchor=north east]{$\vec{v}_2$};
        \draw[thick] (r1) -- (2.6,0.7) node[above]{$\vec{x}_1$};
        \draw[-latex,thick] (2.6,0.7) --node[sloped,draw,solid,fill=white,pos=0.4,inner sep=2pt,rounded corners]{$\epsilon_2$} (d1a) node[anchor=south east]{$\vec{y}_{1}$};
        \draw[-latex,thick] (2.6,0.7) --node[sloped,draw,solid,fill=white,pos=0.3,inner sep=2pt,rounded corners]{$\epsilon_3$} (d2b) node[anchor=east]{$\vec{y}_{2}$\ };
        \draw[dotted,thick] (r2) -- (2.6,-0.7);
        \draw[-latex,dotted,thick] (2.6,-0.7) --node[sloped,draw,solid,fill=white,pos=0.3,inner sep=2pt,rounded corners]{$\epsilon_3$} (d1b) node[anchor=north]{\normalsize\textcolor{red}{?}\ \ };
        \draw[-latex,dotted,thick] (2.6,-0.7) --node[sloped,draw,solid,fill=white,pos=0.4,inner sep=2pt,rounded corners]{$\epsilon_2$} (d2a) node[anchor=north]{\normalsize\textcolor{red}{?}\ \ };
        \node (w1h) at (4.8,0.6) [right] {$\widehat{\msg_1}$};
        \node (w2h) at (4.8,-0.6) [right] {$\widehat{\msg_2}$};
        \fill [white!95!red!95!black,opacity=0.475] (1.55,1) -- (1.55,0) -- (3.3,-1) -- (-0.7,-1) -- (-0.7,1) -- cycle;
        \draw [red,very thick,rounded corners] (1.55,1) -- (1.55,0) -- (3.3,-1) -- (5.2,-1) -- (5.2,1) -- cycle;
    \end{tikzpicture}\vspace{-0.25cm}
    \caption{The symmetric two-hop erasure broadcast channel from~\cite{KMA2014:isit} with focus on the embedded BICW problem seen by Relay~1. The network consists of two hops of communication. The first hop is an erasure broadcast channel, whereas the second consists of two parallel, non-interfering erasure broadcasts. Destination~1 wants message $\msg_1$ and Destination~2 wants $\msg_2$, but with no knowledge of erasures, Relay~1 is unaware of the (side) information provided by Relay~2.}\label{fig:2hEBC}
\end{figure}
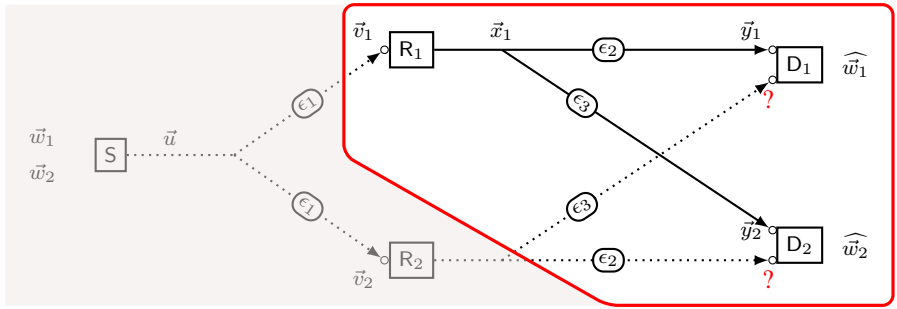

A key contribution of~\cite{KMA2014:isit} was the revelation that it was \emph{strictly suboptimal} for relays within such a network to apply conventional random network coding. Instead, relays imparted structure into their network coded transmissions by XORing \emph{unmixed received bits} of one message with random combinations of another; i.e., relays applied a version of hybrid coding to their received signals in order to outperform conventional random network codes.
From results presented in this work, one arrives at such a relaying strategy naturally. From the point of view of either relay, the transmissions of the other relay are \emph{side information}, and more importantly, due to the lossy nature of links the relay is \emph{blind} as to what side information was provided. Additionally, the transmission model from relays to destinations matches precisely the lossy sender-to-user broadcast considered in Section~\ref{sec:BICW}.

It is important to point out that the BIC and BICW problems in the general setting remains an open problem. Therefore, to conclude the paper we revisit one class of interesting symmetric side information BIC problems (from Section~\ref{sec:num}) that remains unsolved and yet offers a simple and concrete enough case for progress to be made, potentially revealing new insights. 
Consider the following 3-user BIC scenario when side information parameters are pairwise symmetric:
$\mu_{12}=\mu_{21}=a$, $\mu_{13}=\mu_{31}=b$, $\mu_{23}=\mu_{32}=c$ with $a\geq b\geq c$ (for a concrete example we refer the reader to Figure~\ref{fig:q2d}). 
From Theorem~\ref{thm:3uACH}, we find the symmetric achievable rate:
\begin{align}
    r_{sym}={}&\max\left\{\frac{1}{1+a+b+c-ab},\frac{1}{1+a+b}\right\},\label{eq:abc}
\end{align}
and from Theorem~\ref{thm:3uOB}, we have the capacity bound:
\begin{align}
    r_{sym}\leq{}&\frac{1}{1+a+b-\frac{(a-c)(b-c)}{1-c}}.
\end{align}
Notice first that, as in the numerical example of Figure~\ref{fig:q2d}, if $c=0$ or $c=b$ the upper bound is tight and capacity is achieved. However, within the interval $c\in(0,b)$ there exists a gap between achievability and converse. 

Additionally, recall that the first quantity in the max of (\ref{eq:abc}) is the rate achieved by hybrid coding and the second is by conventional random coding. Clearly, hybrid coding provides a rate gain when $c < ab$. This regime is one where the side information Users~2 and 3 have about each others' messages is large and thus Phases~1 and~2 in Figure~\ref{fig:3coding} are small. Our hybrid coding assumes that User~1 ignores these phases, but when they are larger (i.e., as $c$ grows) these transmissions may be used by User~1 to decode messages $\msg_2$ and $\msg_3$. In particular, the case where $c=ab$ (a point notably within the interval $(0,b)$) represents a threshold where the structure of our hybrid code can no longer expect to hide linear subspaces of $\msg_2$ and $\msg_3$ from User~1. 

We conjecture that at this threshold, any method of encoding $\msg_2$ and $\msg_3$ that satisfies the decodability condition at Users~2 and 3 also allows User~1 to decode $\msg_2$ and $\msg_3$ (i.e., at this threshold it is the converse and not achievable scheme that may be tightened).

\bibliographystyle{IEEEtran}
\bibliography{references}

\appendix
\subsection{Evaluating $s^*((\sinfo^\prime,\vec{g}^\prime);(\sinfo_{ij},\vec{g}_{ij}))$}\label{app:sstar}
We now prove that $s^*((\sinfo^\prime,\vec{g}^\prime);(\sinfo_{ij},\vec{g}_{ij}))=\frac{1-\mu_{ij}}{1-\mu^\prime}$ by showing that it may be bounded both from above and below by the same value. We first address the upper bound:
\begin{align}
s^*((&\sinfo^\prime,\vec{g}^\prime);(\sinfo_{ij},\vec{g}_{ij})) \nonumber\\
    \stackrel{(a)}{=}{}& \max_{\ell\in\{1,\ldots,m_j\}} s^*(({\phi}^\prime[\ell],{g}^\prime[\ell]);({\phi}_{ij}[\ell],{g}_{ij}[\ell]))\nonumber\\
    \stackrel{(b)}{\leq}{}& s^*(\overbrace{({\phi}^\prime[1],{g}^\prime[1])}^{\alpha};\overbrace{({\phi}_{ij}[1],{g}_{ij}[1])}^{\beta}).
\end{align}
In step~(a) we apply the tensorization property of $s^*(\cdot)$~\cite{AGKN2014:isit}, and in~(b) we observed that all variables are i.i.d. across $\ell$. To simplify exposition, we now use the following notation: Let $P_\alpha(\cdot)$ and $P_\beta(\cdot)$ denote probability mass functions for $({\phi}^\prime[1],{g}^\prime[1])$ and $({\phi}_{ij}[1],{g}_{ij}[1])$ respectively, and let $Q_\alpha(\cdot)$ and $Q_\beta(\cdot)$ be arbitrary probability mass functions for  $({\phi}^\prime[1],{g}^\prime[1])$ and $({\phi}_{ij}[1],{g}_{ij}[1])$ respectively. Note that the support of both $({\phi}^\prime[1],{g}^\prime[1])$ and $({\phi}_{ij}[1],{g}_{ij}[1])$ is $\{(0,0),(0,1),(1,1)\}$. Using this notation, we now observe
\begin{align}
&s^*(({\phi}^\prime[\ell],{g}^\prime[\ell]);({\phi}_{ij}[\ell],{g}_{ij}[\ell]))\nonumber\\
&={} \sup_{Q_\alpha\neq P_\alpha}\frac{D(Q_\beta || P_\beta)}{D(Q_\alpha || P_\alpha)}\nonumber\\
&={} \sup_{Q_\alpha\neq P_\alpha}\Biggr[
    P_\beta(0,0)\log\left(\frac{P_\beta(0,0)}{Q_\beta(0,0)}\right) 
    + P_\beta(0,1)\log\left(\frac{P_\beta(0,1)}{Q_\beta(0,1)}\right)
    \nonumber\\ 
&\quad\quad + P_\beta(1,1)\log\left(\frac{P_\beta(1,1)}{Q_\beta(1,1)}\right)\Biggr]/
    D(Q_\alpha || P_\alpha)\nonumber\\
&={} \sup_{Q_\alpha\neq P_\alpha}\Biggr[(\delta + (1-\delta)P_\alpha(0,0))\log\left(\frac{\delta + (1-\delta)P_\alpha(0,0)}{\delta + (1-\delta)Q_\alpha(0,0)}\right)
    + (1-\delta)P_\alpha(0,1)\log\left(\frac{(1-\delta)P_\alpha(0,1)}{(1-\delta)Q_\alpha(0,1)}\right)
    \nonumber\\ 
    &\quad\quad + (1-\delta)P_\alpha(1,1)\log\left(\frac{(1-\delta)P_\alpha(1,1)}{(1-\delta)Q_\alpha(1,1)}\right)\Biggr]/
    D(Q_\alpha || P_\alpha)\label{eq:sstareval1}\\
&\stackrel{(c)}{\leq}{} \sup_{Q_\alpha\neq P_\alpha}\Biggr[(1-\delta)P_\alpha(0,0)\log\left(\frac{(1-\delta)P_\alpha(0,0)}{(1-\delta)Q_\alpha(0,0)}\right)
    + (1-\delta)P_\alpha(0,1)\log\left(\frac{(1-\delta)P_\alpha(0,1)}{(1-\delta)Q_\alpha(0,1)}\right)
\nonumber\\ 
&\quad\quad + (1-\delta)P_\alpha(1,1)\log\left(\frac{(1-\delta)P_\alpha(1,1)}{(1-\delta)Q_\alpha(1,1)}\right)\Biggr]/
    D(Q_\alpha || P_\alpha)\nonumber\\
&={} \sup_{Q_\alpha\neq P_\alpha}(1-\delta)\frac{D(Q_\alpha || P_\alpha)}{D(Q_\alpha || P_\alpha)}
={} 1-\delta ={} \frac{1-\mu_{ij}}{1-\mu^\prime}.\label{eq:sstarUB}
\end{align}
Step (c) is verified by observing the following. Trivially we have 
\begin{align}
(\delta + (1-\delta)P_\alpha(0,0))\log\left(\frac{\delta + (1-\delta)P_\alpha(0,0)}{\delta + (1-\delta)Q_\alpha(0,0)}\right) 
&\leq
\max_{y\in[0,1]} (y + (1-\delta)P_\alpha(0,0))\log\left(\frac{y + (1-\delta)P_\alpha(0,0)}{y + (1-\delta)Q_\alpha(0,0)}\right).\label{eq:stepcsstar}
\end{align}
Since for any positive $y$, $U$, and $V$:
\begin{align*}
\frac{\partial}{\partial y}(y+U)\log\left(\frac{y+U}{y+V}\right) 
={}& 1+\log\left(\frac{y+U}{y+U}\right) - \frac{y+U}{y+V}\\
\leq{}& 0,
\end{align*}
we observe by letting $U=(1-\delta)P_\alpha(0,0)$ and $V=(1-\delta)Q_\alpha(0,0)$ that the right hand side of (\ref{eq:stepcsstar}) attains its maximum at $y=0$.

To show that $s^*((\sinfo^\prime,\vec{g}^\prime);(\sinfo_{ij},\vec{g}_{ij}))\geq\frac{1-\mu_{ij}}{1-\mu^\prime}$, we restrict the domain of $Q_\alpha$ to mass functions where $Q_{\alpha}(0,0)=P_\alpha(0,0)$ and observe from (\ref{eq:sstareval1})
\begin{align}
&s^*(({\phi}^\prime[\ell],{g}^\prime[\ell]);({\phi}_{ij}[\ell],{g}_{ij}[\ell]))\nonumber\\
&={} \sup_{Q_\alpha\neq P_\alpha}\Biggr[(\delta + (1-\delta)P_\alpha(0,0))\log\left(\frac{\delta + (1-\delta)P_\alpha(0,0)}{\delta + (1-\delta)Q_\alpha(0,0)}\right)
     + (1-\delta)P_\alpha(0,1)\log\left(\frac{(1-\delta)P_\alpha(0,1)}{(1-\delta)Q_\alpha(0,1)}\right)
\nonumber\\ 
&\quad\quad + (1-\delta)P_\alpha(1,1)\log\left(\frac{(1-\delta)P_\alpha(1,1)}{(1-\delta)Q_\alpha(1,1)}\right)\Biggr]/
    D(Q_\alpha || P_\alpha)\nonumber\\
&\geq{} \sup_{\substack{Q_\alpha\neq P_\alpha\\Q_\alpha(0,0)= P_\alpha(0,0)}}\Biggr[ 
(1-\delta)P_\alpha(0,1)\log\left(\frac{(1-\delta)P_\alpha(0,1)}{(1-\delta)Q_\alpha(0,1)}\right)
    + (1-\delta)P_\alpha(1,1)\log\left(\frac{(1-\delta)P_\alpha(1,1)}{(1-\delta)Q_\alpha(1,1)}\right)\Biggr]/
    D(Q_\alpha || P_\alpha)\nonumber\\
&={} \sup_{\substack{Q_\alpha\neq P_\alpha\\Q_\alpha(0,0)= P_\alpha(0,0)}}(1-\delta)\frac{D(Q_\alpha || P_\alpha)}{D(Q_\alpha || P_\alpha)}
={} 1-\delta ={} \frac{1-\mu_{ij}}{1-\mu^\prime}.\label{eq:sstarLB}
\end{align}
\begin{remark}
Note that the validity of Lemma~\ref{lem:split} only requires the upper bound (\ref{eq:sstarUB}). However, by evaluating the lower bound (\ref{eq:sstarLB}) as well, we may confirm the exact value of $s^*((\sinfo^\prime,\vec{g}^\prime);(\sinfo_{ij},\vec{g}_{ij}))$. This value has an intuitive interpretation as the success probability of the channel that takes each bit of the virtual signal $\sinfo^\prime$ as input and gives $\sinfo_{ij}$ as output.
\end{remark}
\subsection{Proof of (\ref{eq:3u1})}\label{app:3u2}
If $\mu_{kj} \geq \mu_{ij}$, we observe that $\msg_{i}$ and $(\sinfo_{ij},\vec{g}_{ij})$ are statistically enhanced versions of  $(\sinfo_{ki},\vec{g}_{ki})$ and $(\sinfo_{kj},\vec{g}_{kj})$ respectively. We may further enhance the side information of User~$k$ by also providing $(\sinfo_{ik},\vec{g}_{ik})$. Applying Fano's inequality at User~$k$ with side information enhancement, we find
\begin{align}
    nr_k 
    \leq{}& I(\vec{x},\sinfo_{ki},\vec{g}_{ki},\sinfo_{kj},\vec{g}_{kj};\msg_k) +o(n)\nonumber\\
    \leq{}& I(\vec{x},\msg_{i},\sinfo_{ij},\vec{g}_{ij},\sinfo_{ik},\vec{g}_{ik};\msg_k) +o(n)\nonumber\\
    ={}& I(\sinfo_{ik},\vec{g}_{ik};\msg_k) +H(\vec{x}|\msg_{i},\sinfo_{ij},\vec{g}_{ij},\sinfo_{ik},\vec{g}_{ik}) 
         - H(\vec{x}|\msg_{i},\sinfo_{ij},\vec{g}_{ij},\msg_k) + o(n)\nonumber\\ 
    ={}& n(1-\mu_{ik})r_k +H(\vec{x}|\msg_{i},\sinfo_{ij},\vec{g}_{ij},\sinfo_{ik},\vec{g}_{ik})
        - H(\vec{x}|\msg_{i},\sinfo_{ij},\vec{g}_{ij},\msg_k) + o(n)\nonumber\\ 
    \stackrel{(a)}{\leq}{}& n(1-\mu_{ik})r_k +H(\vec{x}|\msg_{i},\sinfo_{ij},\vec{g}_{ij},\sinfo_{ik},\vec{g}_{ik}) 
        - \mu_{ij}H(\vec{x}|\msg_{i},\msg_k) + o(n),\label{eq:3uproof1a}
\end{align}
where in step (a) we applied (\ref{eq:lem:split:1}) from Lemma~\ref{lem:split}, by letting $V=(\msg_i,\msg_k)$.

By combining (\ref{eq:3uprooffano}) and scaled versions of (\ref{eq:3uproof1a}) and (\ref{eq:3uproof2B2}), and taking the limit as $n$ grows large, we arrive at (\ref{eq:3u1}).  Similarly, if $\mu_{jk} \geq \mu_{ik}$ we may switch the roles of Users~$j$ and $k$ in the above analyses to arrive at a similar conclusion.

\subsection{Proof of Theorem~\ref{thm:KuOB}}\label{app:KuOB}

We now formally prove Theorem~\ref{thm:KuOB}:
The following notation and claim will simplify exposition of the proof. Let $\vec{g}_j^{(\eta)}$ be a length-$m_j$ vector of i.i.d. Bernoulli random variables that take a value of 0 with probability $\eta$, and let
\begin{align*}
    \phi_j^{(\eta)}[\ell] = g_j^{(\eta)}[\ell]w_j[\ell].
\end{align*}
Notice, for instance, that $\sinfo_j^{(\mu_{i,j})}$ is statistically equivalent to $(\sinfo_{i,j},\vec{g}_{i,j})$, and that $\vec{\psi}_j^{0}$ and $\vec{\psi}_j^{1}$ are equal to $(\msg_j,\vec{1})$ and $(\vec{0},\vec{0})$ respectively. Additionally, we define $\vec{\psi}_j^{(\eta)} \triangleq (\sinfo_j^{(\eta)},\vec{g}^{(\eta)})$.
We now formalize the notion of statistically enhanced side information, with the following claim, which is consequence of the sender being blind to the precise side information:
\begin{claim}\label{cl:staten}
Let $\eta_1>\eta_2$ be given. For any $k\in\{1,\ldots,K\}$ and $V$ independent of $\msg_j$, $\vec{g}_{j}^{(\eta_1)}$, and $\vec{g}_{j}^{(\eta_1)}$, we have
\begin{align}
I(\vec{x},\vec{\psi}_j^{(\eta_1)},V;\msg_k) \leq I(\vec{x},\vec{\psi}_j^{(\eta_2)},V;\msg_k),\label{eq:cl:statenh1}
\end{align}
and
\begin{align}
H(\vec{x}|\vec{\psi}_j^{(\eta_1)},V) \geq H(\vec{x}|\vec{\psi}_j^{(\eta_2)},V).\label{eq:cl:statenh2}
\end{align}
\end{claim}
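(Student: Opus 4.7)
My plan is to establish a physical degradation relationship between the two side information signals: because $\eta_1 > \eta_2$, we can realize $\vec{\psi}_j^{(\eta_1)}$ as a further-erased version of $\vec{\psi}_j^{(\eta_2)}$. Once we have this, both inequalities drop out of the data processing inequality and ``conditioning reduces entropy.''

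Concretely, the first step is to introduce an auxiliary length-$m_j$ vector $\vec{g}^{\ddagger}$ of i.i.d. Bernoulli entries taking value $1$ with probability $1-\delta$, where $\delta = (\eta_1 - \eta_2)/(1-\eta_2)$, and drawn independently of everything else. Since the quantities in (\ref{eq:cl:statenh1}) and (\ref{eq:cl:statenh2}) depend only on the marginal joint distributions of $(\vec{x}, \vec{\psi}_j^{(\eta_\cdot)}, V, \msg_k)$, we may couple the setups by defining $g_j^{(\eta_1)}[\ell] = g_j^{(\eta_2)}[\ell]\,g^{\ddagger}[\ell]$ for each $\ell$; since this product equals $1$ with probability $(1-\eta_2)(1-\delta) = 1-\eta_1$, the required Bernoulli marginal of $\vec{g}_j^{(\eta_1)}$ is preserved. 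Under this coupling, $\vec{\psi}_j^{(\eta_1)}$ is a deterministic function of $(\vec{\psi}_j^{(\eta_2)}, \vec{g}^{\ddagger})$, because whenever $g_j^{(\eta_1)}[\ell]=1$ we must also have $g_j^{(\eta_2)}[\ell]=1$, so $w_j[\ell]$ is already revealed by $\vec{\psi}_j^{(\eta_2)}$.

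Since $\vec{g}^{\ddagger}$ is independent of $(V, \msg_k, \vec{x}, \msg_j, \vec{g}_j^{(\eta_2)})$, the construction yields the two Markov chains $\msg_k \to (\vec{x}, \vec{\psi}_j^{(\eta_2)}, V) \to \vec{\psi}_j^{(\eta_1)}$ and $\vec{x} \to (\vec{\psi}_j^{(\eta_2)}, V) \to \vec{\psi}_j^{(\eta_1)}$. Applying the data processing inequality to the first chain gives (\ref{eq:cl:statenh1}) directly. For (\ref{eq:cl:statenh2}), I would write
\begin{align*}
H(\vec{x} \mid \vec{\psi}_j^{(\eta_1)}, V) \geq H(\vec{x} \mid \vec{\psi}_j^{(\eta_1)}, \vec{\psi}_j^{(\eta_2)}, V) = H(\vec{x} \mid \vec{\psi}_j^{(\eta_2)}, V),
\end{align*}
where the inequality is conditioning reduces entropy and the equality follows from the second Markov chain.

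There is no genuine obstacle here; the argument is essentially a standard physical-degradation reduction. The only point requiring minor care is emphasizing why the coupling step is legitimate, namely that the information-theoretic quantities on both sides of the claim depend only on marginals that the coupling preserves. This step is available precisely because the sender's encoding is blind to the realizations of $\vec{g}_j^{(\cdot)}$, so $\vec{x}$ has no dependence on these vectors beyond what is mediated through the messages $\msg_{1},\ldots,\msg_{K}$.
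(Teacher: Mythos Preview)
Your proposal is correct and follows essentially the same approach as the paper: both argue that, because the quantities in question depend only on the marginal distribution of $\vec{g}_j^{(\eta)}$ (the sender being blind), one may couple $\vec{\psi}_j^{(\eta_1)}$ as a physically degraded version of $\vec{\psi}_j^{(\eta_2)}$ and then invoke data processing and ``conditioning reduces entropy.'' Your write-up is simply a more explicit version of the paper's two-sentence sketch, including the concrete $\vec{g}^{\ddagger}$ construction (which the paper uses verbatim in the proof of Lemma~\ref{lem:split}).
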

\begin{proof}
The proof is an immediate consequence of the sender being blind to the side information channels: since each quantity is a function only of the marginal distribution of $\vec{g}_j^{(\eta)}$ we may in fact define $\vec{\psi}_j^{(\eta_1)}$ as a physically degraded version of $\vec{\psi}_j^{(\eta_1)}$. Hence, the right side of (\ref{eq:cl:statenh1}) can be seen as the mutual information between a message and an enhanced channel output, and the right side of (\ref{eq:cl:statenh2}) can be seen as an enhance signal adding conditioning.
\end{proof}

The proof now proceeds as follows. At each node $(\ell,i)$ in the OBT with $\ell<K$, we will apply Fano's inequality to a virtual user that desires message $\msg_{v[\ell,i]}$. This virtual user is given side information signals that are statistically enhanced versions of $\{\sinfo_{v[\ell,i],j}\}$ (i.e. the actual user in the BIC problem). The statistical properties of the side information channels governed by $\{\eta_j[\ell,i]\}_j$ as defined by the OBT structure and (\ref{eq:KuOBrec4a}). We will see that by applying Lemma~\ref{lem:split} to the expansion of Fano's inequality at each node, and scaling the resulting scaling expressions according, that terms on the right hand side of each expression will cancel and we will arrive at the stated bound. 

We denote the complete collection of side information and channel state information given to a virtual user represented by the $i$-th node in level $\ell$ of the OBT as $\vec{\Psi}[\ell,i] = (\vec{\psi}_1^{\eta_1[\ell,i]},\ldots,\vec{\psi}_K^{\eta_K[\ell,i]})$. Similarly, we denote the collection of side information/channel state given to the actual User~$v[\ell,i]$ as $\vec{\Psi}_{v[\ell,i]}$.

If $i$ is odd, recall from (\ref{eq:KuOBrec4a}) that $\eta_{v[\ell,i]}[\ell,i] = 1$ which implies that none of the virtual user's desired message is provided as (enhanced) side information. Thus, for off $i$ and $\ell<K-1$ we have
\begin{align}
   n r_{v[\ell,i]} \leq{}& I(\vec{x},\vec{\Psi}_{v[\ell,i]};\msg_{v[\ell,i]}) +o(n)\nonumber\\
    \leq{}& I(\vec{x},\vec{\Psi}[\ell,i];\msg_{v[\ell,i]}) +o(n)\nonumber\\
    ={}& H(\vec{x}|\vec{\Psi}[\ell,i]) - H(\vec{x}|\vec{\Psi}[\ell,i],\msg_{v[\ell,i]}) +o(n)\nonumber\\ 
    \stackrel{(a)}{\leq}{}& H(\vec{x}|\vec{\Psi}[\ell,i]) - \zeta[\ell,i]H(\vec{x}|\vec{\Psi}[\ell+1,2i-1])
        - (1-\zeta[\ell,i])H(\vec{x}|\vec{\Psi}[\ell+1,2i])+o(n),\label{eq:KuOBproof1}
\end{align}
where in step (a) we applied a combination of Lemma~\ref{lem:split} and observing from (\ref{eq:KuOBrec4a}) that $\vec{\Psi}[\ell+1,2i-1]$ or $\vec{\Psi}[\ell+1,2i]$ can only increase conditioning relative to $\vec{\Psi}[\ell,i],\msg_{v[\ell,i]}$.

If $i$ is even, some of the virtual user's desired message may have been provided as side information. Thus, for even $i$ and $\ell<K-1$ we have
\begin{align}
   n r_{v[\ell,i]} \leq{}& I(\vec{x},\vec{\Psi}_{v[\ell,i]};\msg_{v[\ell,i]}) +o(n)\nonumber\\
    \leq{}& I(\vec{x},\vec{\Psi}[\ell,i];\msg_{v[\ell,i]}) +o(n)\nonumber\\
    ={}& I(\vec{\psi}_{v[\ell,i]}^{\eta_{v[\ell,i]}[\ell,i]};\msg_{v[\ell,i]})
        + H(\vec{x}|\vec{\Psi}[\ell,i]) - H(\vec{x}|\vec{\Psi}[\ell,i],\msg_{v[\ell,i]})+o(n)\nonumber\\ 
    ={}& n (1-\eta_{v[\ell,i]}[\ell,i])r_{v[\ell,i]}
        + H(\vec{x}|\vec{\Psi}[\ell,i]) - H(\vec{x}|\vec{\Psi}[\ell,i],\msg_{v[\ell,i]})+o(n)\nonumber\\ 
    \stackrel{(b)}{=}{}& n \left(1-\eta_{v[\ell,i]}\left[\ell-1,\left\lceil\frac{i}{2}\right\rceil\right]\right)r_{v[\ell,i]}
        + H(\vec{x}|\vec{\Psi}[\ell,i]) - H(\vec{x}|\vec{\Psi}[\ell,i],\msg_{v[\ell,i]})+o(n)\nonumber\\ 
    \stackrel{(c)}{\leq}{}& n \left(1-\eta_{v[\ell,i]}\left[\ell-1,\left\lceil\frac{i}{2}\right\rceil\right]\right)r_{v[\ell,i]}
        +H(\vec{x}|\vec{\Psi}[\ell,i]) \nonumber\\
        & - \zeta[\ell,i]H(\vec{x}|\vec{\Psi}[\ell+1,2i-1])
        - (1-\zeta[\ell,i])H(\vec{x}|\vec{\Psi}[\ell+1,2i]),+o(n).\label{eq:KuOBproof2}
\end{align}
In (b) we applied (\ref{eq:KuOBrec4a}), and in (c) (a) we applied a combination of Lemma~\ref{lem:split} and observed from (\ref{eq:KuOBrec4a}) that $\vec{\Psi}[\ell+1,2i-1]$ or $\vec{\Psi}[\ell+1,2i]$ can only increase conditioning relative to $\vec{\Psi}[\ell,i],\msg_{v[\ell,i]}$.

We now address the base cases (i.e., when $\ell=K-1$). We first observe that at the $K-1$ level, because all paths from root to leaf are permutation of all user indices, if $j\notin \{v[K-1,i],v[K,i]\}$ then $\eta_j[K-1,i]=0$. Equivalently, if $j\notin \{v[K-1,i],v[K,i]\}$ then $\vec{\psi}_{j}^{\eta_j[K-1,i]} = (\msg_j,\vec{1})$. 

Now recall that if $i$ is odd, then $\eta_{v[K-1,i]}[K-1,i] = 1$, and from Fano we have
\begin{align}
   n r_{v[K-1,i]} \leq{}& I(\vec{x},\vec{\Psi}_{v[K-1,i]};\msg_{v[K-1,i]}) +o(n)\nonumber\\
    \leq{}& I(\vec{x},\vec{\Psi}[K-1,i];\msg_{v[K-1,i]}) +o(n)\nonumber\\
    ={}& H(\vec{x}|\vec{\Psi}[K-1,i])
        - H(\vec{x}|\vec{\psi}_{v[K,i]}^{(\eta_{v[K,i]}[K-1,i])},\{\msg_j\}_{j\neq v[K,i]}) +o(n)\nonumber\\
    \stackrel{(d)}{\leq}{}& H(\vec{x}|\vec{\Psi}[K-1,i])
        - \eta_{v[K,i]}[K-1,i]H(\vec{x}|\{\msg_j\}_{j\neq v[K,i]}) +o(n),\label{eq:KuOBproof3}
\end{align}
where in step (d) we applied (\ref{eq:lem:split:1}). If $i$ is even, then $\eta_{v[K-1,i]}[K-1,i]$ can be less than 1, and from Fano we have
\begin{align}
   n r_{v[K-1,i]} \leq{}& I(\vec{x},\vec{\Psi}_{v[K-1,i]};\msg_{v[K-1,i]}) +o(n)\nonumber\\
    \leq{}& I(\vec{x},\vec{\Psi}[K-1,i];\msg_{v[K-1,i]}) +o(n)\nonumber\\
    ={}& I(\vec{\psi}_{v[K-1,i]}^{\eta_{v[K-1,i]}[K-1,i]};\msg_{v[K-1,i]})
        + H(\vec{x}|\vec{\Psi}[K-1,i])
        - H(\vec{x}|\vec{\psi}_{v[K,i]}^{(\eta_{v[K,i]}[K-1,i])},\{\msg_j\}_{j\neq v[K,i]}) +o(n)\nonumber\\
    ={}& n(1-\eta_{v[K-1,i]}[K-1,i])r_{v[K-1,i]}
        + H(\vec{x}|\vec{\Psi}[K-1,i]) 
        - H(\vec{x}|\vec{\psi}_{v[K,i]}^{(\eta_{v[K,i]}[K-1,i])},\{\msg_j\}_{j\neq v[K,i]}) +o(n)\nonumber\\
    ={}& n\left(1-\eta_{v[K-1,i]}\left[K-2,\left\lceil\frac{i}{2}\right\rceil\right]\right)r_{v[\ell,i]}
        + H(\vec{x}|\vec{\Psi}[K-1,i])
        - H(\vec{x}|\vec{\psi}_{v[K,i]}^{(\eta_{v[K,i]}[K-1,i])},\{\msg_j\}_{j\neq v[K,i]}) +o(n)\nonumber\\
    \stackrel{(e)}{\leq}{}& n\left(1-\eta_{v[K-1,i]}\left[K-2,\left\lceil\frac{i}{2}\right\rceil\right]\right)r_{v[K-1,i]}
        + H(\vec{x}|\vec{\Psi}[K-1,i])
        - \eta_{v[K,i]}[K-1,i]H(\vec{x}|\{\msg_j\}_{j\neq v[K,i]}) +o(n),\label{eq:KuOBproof4}
\end{align}
where in (e) we applied (\ref{eq:lem:split:1}).

Finally, at the $K$-th level of the OBT we have trivially
\begin{align}
   n r_{v[K,i]} \leq{}& I(\vec{x},\vec{\Psi}_{v[K,i]};\msg_{v[K,i]}) +o(n)\nonumber\\
       \leq{}& I(\vec{x},\{\msg_j\}_{j\neq v[K,i]};\msg_{v[K,i]}) +o(n)\nonumber\\
       \leq{}& H(\vec{x}|\{\msg_j\}_{j\neq v[K,i]}) +o(n).\label{eq:KuOBproof5}
\end{align}

Using (\ref{eq:KuOBproof1})--(\ref{eq:KuOBproof5}), and scaling expressions according the the coefficients $\zeta[\ell,i]$ at each node, we recursively arrive at
\begin{align*}
n\Gamma_A[1,1] 
    \leq{}& H(\vec{x}|\vec{\Psi}_{v[\ell,i]})+o(n)\\
    \leq{}& n+o(n),
\end{align*}
as desired.
\hfill\qed


\subsection{Proof of Lemma~\ref{lem:achieveBICW}}
\label{app:lem:achieveBICW}
Consider $(\rho_1,L_1,\alpha_1)$ fixed and rate pair $(r_1,r_2)$ such that (\ref{eq:achJLA0})--(\ref{eq:achJLA3}) are satisfied. To prove the lemma, we demonstrate that there exists a scheme and that using this scheme $(r_1,r_2)$ is achievable according to Definition~\ref{sec:problem}. To do so, we first show that if (\ref{eq:achJLA0}) is satisfied, then a sequence of $\mathbf{U}_1$ matrices (and by proxy encoding functions) exists. We then argue that using the described encoding and decoding strategies with $(\rho_2,L_2,\alpha_2)=(1,0,0)$, the usual equation counting argument applied to conventional random codes in erasure channels suffices to prove that error probability vanishes as $n\rightarrow\infty$.

If (\ref{eq:achJLA0}) is satisfied, then for every $n$, we may choose $m_1^{(n)}=\lfloor nr_1\rfloor$, thereby satisfying (\ref{eq:feasible}) and guaranteeing the existence of a RRP matrix for $\msg_1$. Furthermore, we see that $\lim_{n\rightarrow\infty}\frac{m_1^{(n)}}{n} = r_1$. Bearing this in mind, we let $m_2^{(n)}=\lfloor n r_2\rfloor$ for each $n$ and consider the scheme that uses RRP matrice with parameters $(\rho_1,L_1,\alpha_1)$ and $(\rho_2,L_2,\alpha_2)$  to encode $\msg_1$ and $\msg_2$ respectively.

Recall that in our decoding strategy, $\msg_2$ is decoded at each user by first extracting clean equations of $\msg_2$ (where by clean, we mean the contribution of $\msg_1$ can be canceled out as explained previously). Then, the next step is decoding $\msg_2$ from the clean equations. We now calculate the probability that at time $t$ User~$i$, where $i=1,2$, can extract a clean equation of $\msg_2$ for three cases:
\begin{itemize}
\item If $t = (\ell-1)m_1^{(n)} + k$, where $\ell$ is an positive integer with $\ell\leq L_1$ and $1\leq k \leq m_1^{(n)}$, then $\mathbf{U}_1(t,:)\msg_1$ is the $\ell$-th repetition of $\msg_1[k]$. The probability of receiving a clean equation at User~1 is
$\eta_1(t) = (1-\epsilon_1)(1-\epsilon_1^{\ell-1})$, and at User~2 is $\eta_2(t) = (1-\epsilon_2)(1-\mu\epsilon_2^{\ell-1})$,
which are the products of probabilities that User~$i$ at time $t$ receives an unerased transmission and has previously received a transmission (or side information in at User~2) containing the same $\msg_1[k]$.
\item If $L_1m_1^{(n)}<t\leq L_1m_1^{(n)}+\rho_1n$, then $\mathbf{U}_1(t,:)\msg_1$ is random combination of $\msg_1$ bits, and $\eta_1(t) = \eta_2(t) = 0$ (i.e., to decode $\msg_2$, each user ignores these transmissions).
\item If $L_1m_1^{(n)}+\rho_1n<t$, then $\mathbf{U}_1(t,:)\msg_1=0$, and the probability of a clean equation is the probability of an unerased transmission: $\eta_1(t) = 1-\epsilon_1$ and $\eta_2(t) = 1-\epsilon_2$.
\end{itemize}

From capacity analysis of point-to-point erasure channels, we note that a random linear coded message, $m_2$, with rate $r_2$ is decodable at User~1 with arbitrarily low probability of error as $n\rightarrow\infty$ if the number of received random linear equations of $\msg_2$ is sufficiently large. Specifically, by counting the number of clean random linear equations of $\msg_2$ received by User~1, we see that $r_2$ must satisfy:
\begin{align}
    r_2 &{}\leq \lim_{n\rightarrow\infty}\frac{1}{n}\sum_{t=1}^{n} \eta_1[t]\nonumber\\
        ={}& \lim_{n\rightarrow\infty} \frac{1}{n}\left(\sum_{k=1}^{\alpha_1 m_1^{(n)}}\sum_{\ell=1}^{L_1+1} (1-\epsilon_1)(1 - \epsilon_1^{\ell-1})
            +\sum_{k=\alpha_1 m_1^{(n)}+1}^{m_1^{(n)}}\sum_{\ell=1}^{L_1} (1-\epsilon_1)(1 - \epsilon_1^{\ell-1})
            + (n-L_1m_1^{(n)}-\alpha_1 m_1^{(n)}-\rho_1n)(1-\epsilon_1)\right)\nonumber\\
        ={}& (1-\epsilon_1)(1 - \rho_1) - r_1\left[1-\epsilon_1^{L_1}+\alpha_1 (\epsilon_1^{L_1}-\epsilon_1^{L_1+1}) \right].
        \label{eq:r2int1}
\end{align}
Through analogous analysis, we find that communicating of $\msg_2$ to User~2 is possible with arbitrarily low error probability if
\begin{align}
    r_2 \leq{}& (1-\epsilon_2)(1-\rho_1) - r_1\mu\left[1-\epsilon_2^{L_1}+\alpha_1 (\epsilon_2^{L_1}-\epsilon_2^{L_1+1}) \right].
        \label{eq:r2int2}
\end{align}
Notice that (\ref{eq:r2int1}), (\ref{eq:r2int2}) are equivalent to (\ref{eq:achJLA2}), (\ref{eq:achJLA3}) and thus if both expressions are satisfied, then $\msg_2$ is decodable at each user with high probability.

We now address achievability of $r_1$ assuming that User~1 has already successfully decoded and canceled $\msg_2$ from its received signal. It is sufficient to show that $r_1$ satisfying any one of (\ref{eq:achJLA0})--(\ref{eq:achJLA3}) is achievable, and we do so by proving achievability of $r_1$ satisfying (\ref{eq:achJLA1}).  
Observe that the repetition portion of the RRP matrix supplies through the erasure channel a subset of message bits to User~1. Even if the repetition-code-supplied bits are removed, note that the random linear code portion of the RRP matrix still represents a random linear code applied to bits unknown after repetition. To decode, the total number of bits received though repetition and linearly independent equations received must be equal to $m_1^{(n)}$. 

Using Hoeffding's inequality~\cite{Hoeffding1963}, one can show that with high probability as $n\rightarrow\infty$ the number of bits received through repetition coding is concentrated around its mean, 
\begin{align*}
\sum_{k=1}^{\alpha_1 m_1^{(n)}} \epsilon_1^{L_1+1} + \sum_{k=\alpha_1 m_1^{(n)}+1}^{m_1^{(n)}} \epsilon_1^{L_1} 
=& m_1^{(n)}\left(\epsilon_1^{L_1} - \alpha_1(\epsilon_1^{L_1}-\epsilon_1^{L_1+1})\right). 
\end{align*}
Therefore, after the repetition phase, approximately $m_1^{(n)}\left(1-\epsilon_1^{L_1} + \alpha_1(\epsilon_1^{L_1}-\epsilon_1^{L_1+1})\right)$ bits remain to be communicated using the random linear coding phase.

Through the usual argument that random linear combinations are independent w.h.p. as $n\rightarrow\infty$, the random coding portion of the scheme supplies approximately $\rho_1n(1-\epsilon_1)$ equations.
%
%
Therefore, as $n\rightarrow\infty$, we may expect the random linear coding portion to resolve all message bits of $\msg_1$ that were not received during the repetition phase if
$m_1^{(n)}\left(1-\epsilon_1^{L_1} + \alpha_1(\epsilon_1^{L_1}-\epsilon_1^{L_1+1})\right)\leq \rho_1n(1-\epsilon_1)$, or simply a rate $r_1$ is achievable if it satisfies
$r_1={} \lim_{n\rightarrow\infty}\frac{m_1^{(n)}}{n} \leq \rho_1\frac{1-\epsilon_1}{\epsilon_1^{L_1}-\alpha_1(\epsilon_1^{L_1}-\epsilon_1^{L_1+1})}$.\hfill\qed

\end{document}